\numberwithin{equation}{section}
\newcolumntype{Y}{>{\centering\arraybackslash}X}
\newcommand{\prob}{\mathbb{P}}				
\newcommand{\E}{\mathbb{E}}					
\newcommand{\F}{\mathcal{F}}					
\newcommand{\G}{\mathcal{G}}					
\newcommand{\R}{\mathbb{R}}					
\newcommand{\st}{\mathrm{s.t.}\ }			
\renewcommand{\geq}{\geqslant}
\renewcommand{\leq}{\leqslant}
\newcommand{\Q}{\mathcal{Q}}					
\newcommand{\dif}{\mathop{}\!\mathrm{d}} 
\newcommand{\set}[2]{\mleft\{ #1 \ \middle|\  #2 \mright\}}
\newcommand{\possessivecite}[1]{\citeauthor{#1}'s (\citeyear{#1})}
\newtheorem{thm}{Theorem}
\newtheorem{lemma}{Lemma}
\newtheorem{prop}{Proposition}
\newtheorem{defn}{Definition}
\newtheorem{problem}{Problem}
\newtheorem{Eg}{Example}
\newtheorem{assumption}{Assumption}
\newtheorem{rmk}{Remark}
\newcommand{\major}{\mathrm{Mj}}
\newcommand{\minor}{\mathrm{Mn}}
\newcommand{\period}{\mathrm{per}}
\newcommand{\trend}{\mathrm{tre}}
\newcommand{\nogame}{\mathrm{nogame}}
\newcommand{\agg}{\mathrm{agg}}
\begin{document}

\title{~~~ \bf \LARGE \\[-.75in] 
\makebox[\textwidth]{
\parbox[c]{1.2\textwidth}{\centering Periodic Trading Activities in Financial Markets: \\ Mean-field Liquidation Game with Major-Minor Players%
\thanks{We thank Guanxing Fu for helpful and constructive comments and discussion.}}}
}
\author{Yufan Chen\thanks{School of Mathematical Sciences, Peking University. \textbf{Email:} \url{yufan_chen@pku.edu.cn}.}, \ 
Lan Wu\thanks{School of Mathematical Sciences, Center for Statistical Science, and LMEQF, Peking University. \textbf{Email:} \url{lwu@pku.edu.cn}.}, \ 
Renyuan Xu\thanks{Department of Finance and Risk Engineering, New York University. \textbf{Email:} \url{rx2364@nyu.edu}.}, \ 
and Ruixun Zhang\thanks{Corresponding Author. School of Mathematical Sciences, Center for Statistical Science, the National Engineering Laboratory for Big Data Analysis and Applications, and LMEQF, Peking University. \textbf{Email:} \url{zhangruixun@pku.edu.cn}. Ruixun Zhang's research is partially supported by the National Key R\&D Program of China Project (2022YFA1007900) and the National Natural Science Foundation of China Project (12271013).}}
\date{
\today
}
\maketitle
\thispagestyle{empty}
\allowdisplaybreaks
\centerline{\bf Abstract} \baselineskip 14pt \vskip 10pt
{\noindent
Motivated by recent empirical findings on the periodic phenomenon of aggregated market volumes in equity markets, we aim to understand the causes and consequences of periodic trading activities through a game-theoretic perspective, examining market interactions among different types of participants. Specifically, we introduce a new mean-field liquidation game involving major and minor traders, where the major trader evaluates her strategy against a periodic targeting strategy while a continuum of minor players trade against her. We establish the existence and uniqueness of an open-loop Nash equilibrium. In addition, we prove an $\mathcal{O}(1/\sqrt{N})$ approximation rate of the mean-field solution to the Nash equilibrium in a major-minor game with $N$ minor players. In equilibrium, minor traders exhibit front-running behaviors in both the periodic and trend components of their strategies, reducing the major trader's profit. Such strategic interactions diminish the strength of periodicity in both overall trading volume and asset prices. Our model rationalizes observed periodic trading activities in the market and offers new insights into market dynamics.}


\vskip 20pt\noindent {\bf Keywords}: Periodicity in financial market; Algorithmic trading; Optimal liquidation; Mean-field game.
%



\setcounter{page}{1}
\pagenumbering{arabic}
\setlength{\baselineskip}{1.5\baselineskip}
\onehalfspacing
\setcounter{equation}{0}
\setcounter{table}{0}
\setcounter{figure}{0}

\section{Introduction}
\label{sec:introduction}
\paragraph{Motivation.}
It has long been documented that trading activities and trading volumes contain persistent high-frequency periodicities beyond the well-known intraday U shapes.
For instance, \cite{heston2010intraday} observed that stock returns and trading volumes in half-hour intervals show periodicities at daily frequencies; \cite{hasbrouck2013low} found that trading messages arrive at the market with large peaks in several milliseconds shortly after the one-second boundary; \cite{wu2022spectral} documented strong and universal second- and minute-level periodicities in a large panel of US and Chinese stocks; \cite{hansen_periodicity_2024} found intraday periodic patterns in volatility and volume in the cryptocurrency market.

These periodic patterns yield important implications for the underlying asset and market participants.
They contain information on the timing and magnitude of price changes \citep{karpoff_relation_1987,wang_model_1994}, lead to abnormal excess returns for stocks dominated by high-frequency fluctuations in volume \citep{chinco2017investment}, are associated with price efficiency and price informativeness in the market \citep{wu2022spectral}, and provide better predictions for intraday volume and execution costs \citep{brownlees2011intra,wu2022spectral}.

Despite the persistence and importance of these periodicities, it remains unclear what drives them, and how the interactions between market participants affect these patterns. One potential driving factor is the {\it periodic targeting strategy} of institutional traders. When liquidating positions, institutional traders have incentives not only to reduce price impacts but also to minimize deviations from pre-specified targeting strategies \citep{cartea_optimal_2015,cheng_optimal_2024} that track, for example, implementation shortfall orders \citep{almgren_optimal_2001}, volume-weighted average price (VWAP), or time-weighted average price (TWAP) \citep{gueant_accelerated_2015}. 
Such targeting strategies are naturally implemented with discrete-time periodicities in practice.

\paragraph{Our work and contributions.} To understand the causes and consequences of periodic trading activities through a game-theoretic perspective, we introduce a new mean-field liquidation game involving a major and a continuum of minor traders. The major trader evaluates and improves her strategy against a periodic and deterministic targeting strategy while a continuum of minor players trades against her and tries to make profits. To the best of our knowledge, this is the first time a mean-field game (MFG) has been used to explain the periodic phenomena in financial markets that have been extensively documented in the empirical studies discussed above.

To explain and understand periodicity across the trader population, this liquidation game introduces three key modeling innovations that distinguish it from the existing literature, presenting unique mathematical challenges.
Firstly, the game stands out as an extended MFG, where traders' interactions are shaped by the permanent impacts of their strategies on the fundamental price. Secondly, all traders, both major and minor, are influenced by a common noise that drives the benchmark price process, reflecting the impact of market uncertainties faced by all participants. Lastly, traders face terminal state constraints due to liquidation requirements, which are consistent with financial practices.

Theoretically,  we adopt a probabilistic approach to establish the existence and uniqueness of an open-loop Nash equilibrium. Similar approaches have been used to study the Nash equilibrium of extended MFGs {\it without} a major player \citep{fu_mean_2021} and MFGs with a major player, where players interact with each other through the states {\it but not the controls} \citep[Chapter 7.1]{carmona_probabilistic_2018}.  In addition, we prove an $\mathcal{O}(1/\sqrt{N})$ approximation rate of the mean-field solution to the Nash equilibrium in a major-minor game with $N$ minor players.

We show both theoretically and numerically that, in equilibrium, minor traders exhibit front-running behaviors in both the periodic and trend components of their strategies, which reduces the major trader's profit, an implication consistent with the findings of \possessivecite{brunnermeier_predatory_2005} model of predator trading. 
The major trader reduces the strength of her periodic behavior in response to the strategic actions of the minor trader.
Overall, such strategic interactions diminish the strength of periodicity in both the overall trading volume and asset prices, thereby serving as a stabilizing mechanism for the aggregate market. To conclude, our model rationalizes observed periodic trading activities in the market and offers new insights into market dynamics.

\paragraph{Related literature.} Our work is closely related to two lines of literature.\vspace{.5em}

\noindent \underline{Mean-field games (MFGs).} 
MFG was first introduced in the seminal works of \citep{caines_large_2006,lasry_mean_2007}, which studied a stochastic game with a continuum of homogeneous agents interacting through the distribution of the population state. Since these foundational studies, extensive research has explored various extensions of MFG. While a comprehensive survey is well beyond the scope of this paper, we summarize several key directions that closely align with our formulation. 

Extended MFGs focus on game interactions through action distributions instead of state distributions. In this framework, \cite{gomes_extended_2016} proved an existence result and established uniqueness conditions under deterministic dynamics; \cite{graber_linear_2016} studied the linear-quadratic formulation. See also \cite{gomes_existence_2014} for the results on infinite-horizon stationary setting and \cite{bonnans_schauder_2021} for potential games.

MFGs with common noise consider scenarios where the dynamics of individual agents is influenced by a common process such as Brownian motion. \cite{ahuja_wellposedness_2016} and \cite{carmona_mean_2016} provided the existence and uniqueness of solutions for MFGs with common noise. \cite{bensoussan_well-posedness_2015} and \cite{cardaliaguet_master_2019} studied the well-posedness of the corresponding master equations.  In addition, \cite{djete_mean_2023} provided the existence and approximated Nash equilibrium for extended MFGs with common noise.

To address heterogeneity among agents, the framework of Major-Minor MFG, which involves a major player and a continuum of minor players, has gained popularity in both theoretical exploration and practical applications. Under the Nash equilibrium criterion, \cite{huang_large-population_2010} first introduced the concept of MFG with a major player in a linear-quadratic setting (with linear dynamics). \cite{nourian_epsilon-nash_2013} and \cite{carmona_probabilistic_2016} further extended this framework to nonlinear stochastic state dynamics using a probabilistic approach. Recently, \cite{lasry_mean-field_2018} and \cite{cardaliaguet_remarks_2020} studied a general setting using the master equation approach. In addition to the Nash equilibrium criterion, the Stackelberg game criterion has also been investigated; see \cite{bensoussan_mean_2016,huang2021linear,guo2022optimization,aurell2022optimal,dayanikli2023machine}. 

We remark that none of the existing theoretical developments mentioned above can be directly applied to establish the existence and uniqueness of our proposed Major-Minor MFG, due to the unique mathematical formulation that arises from our context of periodic trading.\vspace{.5em}

\noindent \underline{Liquidation games.} The research on optimal liquidation for a single trader traces back to at least \cite{bertsimas_optimal_1998} and \cite{almgren_optimal_2001}, which sought to identify dynamic trading strategies that minimize the expected trading cost over a fixed period. For a review of this extensive literature, we direct interested readers to \cite{cartea_algorithmic_2015} and references therein. 

A natural extension is to study optimal execution games with a finite number of players to understand the strategic interactions and competitions among traders. For example, \cite{brunnermeier_predatory_2005} and \cite{carlin_episodic_2007} analyzed a two-player game in which one player has a liquidation target, while the other exploits this information to engage in predatory trading. \cite{cont_fast_2023} studied a stochastic Stackelberg liquidation game between a low-frequency and a high-frequency trader, in which equilibrium strategies exhibit daily periodic patterns when the high-frequency trader adopts a periodic inventory penalty. Other studies on liquidation games, such as \cite{schoeneborn_liquidation_2009}, \cite{moallemi_strategic_2012}, \cite{schied_market_2018}, and \cite{vos_two-player_2022}, however, do not focus on the periodic phenomena. 

Stochastic games with more than two players are usually difficult to solve, which limits the insights they generate. In contrast, the mean-field approximation provides an ingenious and tractable aggregation approach to approximate the otherwise challenging $N$-player stochastic games. In the mean-field regime, \cite{carmona_probabilistic_2015} developed a probabilistic formulation to study price impact models, where players interact through the impact of their strategies on the payoff functions of the other players. This model has been further discussed in \cite{cardaliaguet_mean_2018} and \cite{casgrain_meanfield_2020}. \cite{fu_mean_2021} considered an MFG of optimal portfolio liquidation under asymmetric information with hard terminal conditions. Extensions include the incorporation of the market drop-out time \citep{fu2024mean} and the market entering time \citep{fu2024mean2}.

When both major and minor market participants are considered, \cite{huang_mean-field_2019} studied the Nash equilibrium of an optimal execution game using a dynamic programming approach. 
Although our formulation is largely inspired by \cite{huang_mean-field_2019}, their dynamic programming approach cannot handle the terminal state constraint in our setting. In addition, \cite{huang_mean-field_2019} does not consider targeting strategies for the major player. Finally, the Stackelberg MFG was analyzed in \cite{feron_price_2020}, \cite{fu_mean-field_2020} and \cite{evangelista_finite_2021} for liquidation games.

\paragraph{Organization and notations.}
Our paper is organized as follows. Section \ref{sec:setting} introduces the mathematical framework of the mean-field liquidation game and the definition of periodic trading strategies. Section \ref{sec:open-loop_equilibrium} establishes the existence and uniqueness of the open-loop Nash equilibrium. With these theoretical developments, we further investigate the periodic phenomena of the liquidation game in a few special scenarios in Section \ref{sec:periodic_targeting_strategy}. Finally, Section  \ref{sec:numerical_experiments} performs extensive numerical experiments to further investigate the periodicity and equilibrium behavior of traders.

Throughout, we adopt the convention that $C$ denotes a constant that may vary from line to line. Moreover, for a filtration $\mathbb{G}$ in a probability space $(\Omega,\mathcal{F},\prob)$, $\mathrm{Prog}(\mathbb{G})$ denotes the $\sigma$-field of progressive subsets of $[0, T]\times \Omega$, and for $\mathbb{I}$, which could be a subset of $\R^n$ with $n \geq 1$ or $\R\cup\{+\infty\}$, we consider the set of progressively measurable processes with respect to $\mathbb{G}$:
\begin{equation*}
\mathcal{P}_{\mathbb{G}}([0, T] \times \Omega ; \mathbb{I})=\set{u:[0, T] \times \Omega \rightarrow \mathbb{I}}{u \text { is } \mathrm{Prog}(\mathbb{G})\text { - measurable}}.
\end{equation*}
We define the following subspaces of $\mathcal{P}_{\mathbb{G}}([0, T] \times \Omega ; \mathbb{I})$:
\begin{equation*}
\begin{aligned}
L_{\mathbb{G}}^p([0, T] \times \Omega ; \mathbb{I}) &:= \set{u \in \mathcal{P}_{\mathbb{G}}([0, T] \times \Omega ; \mathbb{I})}{ \E\left(\int_0^T|u(t, \omega)|^2 d t\right)^{p/2}<\infty}, \\
S_{\mathbb{G}}^p([0, T] \times \Omega ; \mathbb{I}) &:= \set{u \in \mathcal{P}_{\mathbb{G}}([0, T] \times \Omega ; \mathbb{I})}{ \E\left(\sup _{0 \leq t \leq T}|u(t, \omega)|^p\right)<\infty}. 
\end{aligned} 
\end{equation*}
Whenever the notation $T-$ appears in the definition of a function space, we mean the set of all functions whose restrictions satisfy the respective property on $[0,\tau]$ for any $\tau<T$. For example, by $f \in L^{2}([0, T-] \times \Omega; \mathbb{I})$, we mean $f \in L^{2}([0,\tau] \times \Omega; \mathbb{I})$.

In addition, we denote the space of continuous functions on $[0,T]$ as $\mathcal{C}([0,T])$, and the space of piecewise differentiable functions on $[0,T]$ as $\mathrm{PD}([0,T])$.

Throughout, we measure the periodicity of the trading rates or prices by their amplitudes. Formally, for a periodic function $f(t)$ on $[0,T]$ with period $T/n$, we define its amplitudes as
\begin{equation}
\mathcal{A}[f]:= \frac{1}{2} \left[ \max_{t\in[0,T]} f(t) - \min_{t\in[0,T]} f(t) \right].
\label{eq:defn_amplitude}
\end{equation} 

\section{Mean-field Liquidation Game and Periodic Strategies}\label{sec:setting}

In this section, we introduce the mathematical framework of a mean-field liquidation game, 
 where the major trader evaluates her strategy against a periodic targeting strategy while a continuum of minor traders trade against her.

\subsection{Mean-field Liquidation Game with Major-Minor Players}
\label{sec:infinite_minor_traders}
Consider a usual probability space $(\Omega,\G,\prob)$ that supports two independent standard Brownian motions $W^{0}$ and $W$ of dimensions $1$ and $m - 1$, respectively. $(\Omega,\G,\prob)$ also carries an one-dimensional random variable $\Q^{\minor}_0$ that is independent of $(W^{0}, W)$. Let $\mathbb{F}^{\major} := (\F_{t}^{\major}, 0 \leq t \leq T)$ with $\F^{\major}_{t} = \sigma(W^{0}_{s}, 0 \leq s \leq t)$ the filtration generated by $W^0$ and let $\mathbb{F}^{\minor}:=(\F^{\minor}_{t}, 0 \leq t \leq T)$ with $\F^{\minor}_{t} = \sigma(\mathcal{Q}_0, W_{s}^{0},W_{s}, 0 \leq s \leq t)$. 

Now, we consider a market where a major trader and a continuum of minor traders aim to liquidate their inventories of a common asset during a fixed time interval $[0,T]$. Given that the minor traders are homogeneous, it is sufficient to study one representative among them. 
The inventory processes $Q^{\major} := (Q^{\major}_{t})_{t\in[0,T]} $ and $Q^{\minor} := (Q^{\minor}_{t})_{t\in[0,T]} $ for the major trader and the representative minor trader follow
\begin{equation*}
\begin{aligned}
\dif Q^{\major}_{t} = v^{\major}_{t} \dif t &, \text{ with boundary conditions } Q^{\major}_{0} = q^{\major}_{0},\, Q^{\major}_{T} = 0, \\
\dif Q^{\minor}_{t} = v^{\minor}_{t} \dif t &, \text{ with boundary conditions } Q^{\minor}_{0} = \mathcal{Q}^{\minor}_0,\, Q^{\minor}_{T} = 0,
\end{aligned}    
\end{equation*}
where $v^{\major} = (v^{\major}_{t})_{t\in[0,T]} $ and $v^{\minor} = (v^{\minor}_{t})_{t\in[0,T]} $ represent their trading rates.

Following the literature on price impact models \citep{almgren_optimal_2001,obizhaeva_optimal_2013,huang_mean-field_2019}, we assume that the trading activities result in both permanent and temporary price impacts subject to linear forms. More specifically, the price of the asset follows:
\footnote{Without loss of generality, we assume that the initial price $S_0 \equiv 0$. Otherwise, $S_t$ can be replaced by $S_t-S_0$.}
\begin{equation}
d S_t = \left(\lambda_0 v^{\major}_{t} + \lambda \bar{v}^{\minor}_t\right) \dif t + \sigma \dif W^{0}_{t}, \quad S_0 = 0.
\label{eq:fundamental_price_mfg}
\end{equation}
where $\sigma>0$, $\bar{v}^{\minor}_t := \E\left[v^{\minor}_{t}\middle|\F^{\major}_t\right] $ is the average trading rate of the minor traders, and $\lambda_0, \lambda>0$ are the strengths of the permanent impacts. In addition, the prices at which the major trader and the representative minor trader sell the asset, called the execution prices, follow
\begin{equation*}
\begin{aligned}
\bar{S}^{\major}_{t} = S_t + a_0 v^{\major}_{t}, \quad  \bar{S}^{\minor}_{t} = S_t + a v^{\minor}_{t},
\end{aligned}    
\end{equation*}
where $a_0>0$ and $a>0$ model the linear temporary price impact.

Consequently, the cash processes of the major trader $X^{\major} = (X^{\major}_t)_{t\in[0,T]}$ and the representative minor trader $X^{\minor}= (X^{\minor}_t)_{t\in[0,T]}$ follow, respectively
\begin{align}
X^{\major}_t &= - \int_0^t \bar{S}^{\major}_{u} v^{\major}_{u} \dif u, \label{eq:terminal_cash_major} \\ 
X^{\minor}_t &= - \int_0^t \bar{S}^{\minor}_{u} v^{\minor}_{u} \dif u. \label{eq:terminal_cash_minor}
\end{align}
Below, a more explicit expression of the terminal wealth is provided, which will be useful for the subsequent analysis.
\begin{prop}
\label{prop:terminal_cash_detail}
The terminal wealth of the major trader \eqref{eq:terminal_cash_major} and the representative minor trader \eqref{eq:terminal_cash_minor} are, respectively, 
\begin{equation}
\begin{aligned}
X^{\major}_{T} 
&= - \frac{\lambda_0}{2}(q^{\major}_0)^2 + \lambda \int_0^T Q^{\major}_{t}  \bar{v}^{\minor}_t \dif t  - a_0 \int_0^T \left(v^{\major}_{t}\right)^2 \dif t + \sigma \int_0^T Q^{\major}_{t} \dif W^{0}_{t},
\end{aligned}
\label{eq:terminal_cash_major_expand}
\end{equation}
and
\begin{equation}
\begin{aligned}
X^{\minor}_{T}
&=  \int_0^T Q^{\minor}_{t} \left(\lambda_0 v^{\major}_{t}+\lambda \bar{v}^{\minor}_t\right) \dif t  - a \int_0^T \left(v^{\minor}_{t}\right)^2 \dif t + \sigma \int_0^T Q^{\minor}_{t}  \dif W^{0}_{t}.
\end{aligned}
\end{equation}
\end{prop}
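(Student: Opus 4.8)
The plan is to derive each identity by substituting the price dynamics into the cash process and repeatedly applying integration by parts. I will work out the major trader's terminal wealth in detail; the minor trader's case is entirely analogous. First I would insert $\bar S^{\major}_u = S_u + a_0 v^{\major}_u$ into \eqref{eq:terminal_cash_major}, splitting $X^{\major}_T = -\int_0^T S_u v^{\major}_u \dif u - a_0 \int_0^T (v^{\major}_u)^2 \dif u$, which already isolates the temporary-impact term. It then remains to rewrite $-\int_0^T S_u v^{\major}_u \dif u$ using $v^{\major}_u \dif u = \dif Q^{\major}_u$. Applying integration by parts (Itô's product rule, noting $Q^{\major}$ has finite variation so there is no cross-variation term) gives $\int_0^T S_u \dif Q^{\major}_u = S_T Q^{\major}_T - S_0 Q^{\major}_0 - \int_0^T Q^{\major}_u \dif S_u = -\int_0^T Q^{\major}_u \dif S_u$, where the boundary terms vanish because $S_0 = 0$ and $Q^{\major}_T = 0$. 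Hence $-\int_0^T S_u v^{\major}_u \dif u = \int_0^T Q^{\major}_u \dif S_u$.

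Next I would substitute the SDE \eqref{eq:fundamental_price_mfg} for $\dif S_u$, yielding
\begin{equation*}
\int_0^T Q^{\major}_u \dif S_u = \lambda_0 \int_0^T Q^{\major}_u v^{\major}_u \dif u + \lambda \int_0^T Q^{\major}_u \bar v^{\minor}_u \dif u + \sigma \int_0^T Q^{\major}_u \dif W^0_u.
\end{equation*}
The term $\lambda_0 \int_0^T Q^{\major}_u v^{\major}_u \dif u = \lambda_0 \int_0^T Q^{\major}_u \dif Q^{\major}_u = \tfrac{\lambda_0}{2}\bigl((Q^{\major}_T)^2 - (Q^{\major}_0)^2\bigr) = -\tfrac{\lambda_0}{2}(q^{\major}_0)^2$, again using $Q^{\major}_T = 0$. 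Collecting the three remaining pieces together with the temporary-impact term reproduces \eqref{eq:terminal_cash_major_expand}.

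For the minor trader I would follow the same two moves on \eqref{eq:terminal_cash_minor}: expand $\bar S^{\minor}_u = S_u + a v^{\minor}_u$, integrate by parts to get $-\int_0^T S_u v^{\minor}_u \dif u = \int_0^T Q^{\minor}_u \dif S_u$ (boundary terms vanish since $S_0=0$, $Q^{\minor}_T = 0$), and substitute \eqref{eq:fundamental_price_mfg}. The only difference is that here the permanent-impact drift $\lambda_0 v^{\major}_u + \lambda \bar v^{\minor}_u$ is exogenous to the minor trader, so no "self-impact" simplification of the $\tfrac{1}{2}(\cdot)^2$ type occurs and the term $\int_0^T Q^{\minor}_u(\lambda_0 v^{\major}_u + \lambda \bar v^{\minor}_u)\dif u$ is left as is, giving the stated expression. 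The only point requiring a word of care — and the mild obstacle here — is justifying the integration by parts and the vanishing of $\E[\sigma\int_0^T Q^{\cdot}_u \dif W^0_u]$-type terms at the level of integrability: one needs $Q^{\major}, Q^{\minor}$ and the rates to lie in the admissible classes (the $L^2$/$S^2$ spaces defined in the notation section) so that the stochastic integrals are well-defined martingales and Fubini/the product rule apply. Since $Q^{\major}$ is continuous of finite variation and the controls are assumed square-integrable, this is routine, and no genuine difficulty arises beyond bookkeeping.
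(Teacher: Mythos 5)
Your proposal is correct and follows essentially the same route as the paper's proof: expand the execution price into $S$ plus the temporary-impact term, integrate by parts using $S_0=0$ and $Q_T=0$ to turn $-\int_0^T S_u\,\dif Q_u$ into $\int_0^T Q_u\,\dif S_u$, substitute the price dynamics, and for the major trader collapse the self-impact term to $-\tfrac{\lambda_0}{2}(q^{\major}_0)^2$. No gaps; the integrability remarks you add are routine and consistent with the admissible classes.
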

The proof of Proposition \ref{prop:terminal_cash_detail} is deferred to Section~\ref{sec:proof_terminal_cash_detail}.

Given the trading rate of the major trader, $v^{\major}$, and the average trading rate of minor traders, $\bar{v}^{\minor}$, the representative minor trader aims to minimize the following cost functional:
\begin{equation}
\begin{aligned}
&J^{\minor}\left(v^{\minor}; v^{\major}, \bar{v}^{\minor}, \mathcal{Q}^{\minor}_0\right) := 
- \E\left[X^{\minor}_{T}\right] + \phi \mathbb{E}\left[\int_0^T\left(Q^{\minor}_{s}\right)^2 \dif s \right] \\
& \qquad = - \underbrace{\E\left[ \int_0^T Q^{\minor}_{t} \left(\lambda_0 v^{\major}_{t} + \lambda \mu_t\right) \dif t - a \int_0^T \left(v^{\minor}_{t}\right)^2 \dif t \right]}_{\text{Expected profit of $Q^{\minor}$}} + \underbrace{\phi \E\left[ \int_0^T \left(Q^{\minor}_{t}\right)^2 \dif t \right]}_{\text{Risk penalty}},
\end{aligned}
\label{eq:cost_functional_minor}
\end{equation}
where $\phi>0$ is the risk penalty parameter of the minor traders. Note that the cost of the minor trader includes the expected profit from $Q^{\minor}$ and the risk penalty term $\E\left[ \int_0^T \left(Q^{\minor}_{t}\right)^2 \dif t \right]$, which is associated with variance of $X^{\minor}_{T}$.

The major player makes decisions based on  a {\it deterministic} targeting strategy $R = (R_t)_{t\in[0,T]}$ with $R_0 = q^{\major}_0$ and $R_T = 0$. More specifically,  the major trader aims to maximize the excess wealth compared to the targeting strategy subject to a  penalty, which follows a similar form as in \eqref{eq:cost_functional_minor}:
\begin{equation}
\begin{aligned}
&  J^{\major}\left(v^{\major};\bar{v}^{\minor}, R, q_{0}^{0} \right)\\
& := - \Bigg(\underbrace{\E\left[\lambda \int_0^T Q^{\major}_{t} v^{\minor}_{t} \dif t + a_0 \int_0^T (v^{\major}_{t})^2 \dif t\right]}_{\text{Expected profit of $Q^{\major}$}} -  \underbrace{\E\left[ \lambda\int_{0}^{T} R_{t} v^{\minor}_{t} \dif t\right]}_{\text{Expected profit of $R$}}\Bigg)  + \underbrace{\phi_0 \E\left[ \int_0^T (Q^{\major}_{t} - R_{t})^2 \dif t\right]}_{\text{Risk penalty}},   
\end{aligned}
\label{eq:cost_functional_major}
\end{equation}
where $\phi_0>0$ is the risk penalty parameter of the major trader. 
Note that for a special case without temporary impact ($a_0=0$), we have
\begin{eqnarray*}
     J^{\major}\left(v^{\major};\bar{v}^{\minor}, R, q_{0}^{0} \right)= - \left(\E\left[X_T^{\major}\right] - \E\left[X_T^{\major,\mathrm{Target}}\right] \right) + \phi_0 \E\left[\int_0^T (Q^{\major}_{t}-R_t)^2 \dif t\right]. 
\end{eqnarray*}
This is because by applying \eqref{eq:terminal_cash_major_expand}, we have
\begin{eqnarray*}
X^{\major,\mathrm{Target}}_{T} 
&= & - \frac{\lambda_0}{2}(q^{\major}_0)^2 + \lambda \int_0^T R_{t}  \bar{v}^{\minor}_t \dif t + \sigma \int_0^T R_{t}  \dif W^{0}_{t},\\
X^{\major}_{T} - X^{\major,\mathrm{Target}}_{T}  &=& \lambda \int_0^T \left(Q^{\major}_{t} - R_t\right) \bar{v}^{\minor}_t \dif t + \sigma \int_0^T \left( Q^{\major}_{t} - R_{t} \right)  \dif W^{0}_{t}.
\end{eqnarray*}

\begin{rmk} In practice, large institutional traders frequently face the challenge of liquidating specific shares based on a targeted strategy, often at the request of their clients \citep{cartea_algorithmic_2015, cheng_optimal_2024}. While these traders can refine the targeted strategy, they cannot deviate significantly from it.
This targeting strategy can be a TWAP strategy, an Almgren-Chriss optimal execution strategy, or a periodic targeting strategy (see details in Section~\ref{sec:periodic_strategy}). 
\end{rmk}

Throughout, we focus on open-loop Nash equilibrium for the analysis. To start, we define the admissible sets for the major trader and the representative minor trader as, respectively,
\begin{align}
\mathcal{A}^{\major} &:= \set{v^{\major} \in L_{\mathbb{F}^{\major}}^2([0, T] \times \Omega ; \R)}{\int_{0}^{T} v^{\major}_t \dif t = -q^{\major}_{0}}, \label{eq:admissible_set_major}\\
\mathcal{A}^{\minor} &:= \set{v^{\minor} \in L_{\mathbb{F}^{\minor}}^2([0, T] \times \Omega ; \R)}{\int_{0}^{T} v^{\minor}_t \dif t = - \mathcal{Q}^{\minor}_0,\, \prob-\mathrm{a.s.}}, \label{eq:admissible_set_minor}
\end{align}
{where the filtrations $\mathbb{F}^{\major},\mathbb{F}^{\minor}$ are defined at the beginning of Section \ref{sec:infinite_minor_traders} and  the notations $L_{\mathbb{F}^{\major}}^2,L_{\mathbb{F}^{\minor}}^2$ are defined at the end of Section \ref{sec:introduction}.} 
Note that traders have {\it asymmetric information}, as minor traders have access to {\it more} information compared to the major trader. Intuitively, $W^0$ represents the public information that drives the price, observable to all traders, while $W$ can be viewed as additional private information only available to the minor traders. In this framework, the major trader decides his strategy solely on public information, whereas minor traders trade with both public and private information.

With the above setup, our Major-Minor MFG encompasses the following two problems. First, the major trader faces a standard optimal control problem.
\begin{problem}[The Major Trader's Problem]
\label{Problem:major_trader_major}
Given the average trading rate of the minor traders  $\bar{v}^{\minor} = \E\left[v^{\minor}_{t}\middle|\F^{\major}_t\right]$  with an admissible strategy from the representative minor trader $v^{\minor} \in \mathcal{A}^{\minor} $, the stochastic optimal control problem of the major trader is
\begin{equation}
\begin{aligned}
\min_{v^{\major} \in \mathcal{A}^{\major}} \quad  & J^{\major}(v^{\major};\bar{v}^{\minor},R,q^{\major}_0)\\
\st \quad & d Q^{\major}_{t} = v^{\major}_{t} \dif t, \quad Q^{\major}_{0} = q^{\major}_{0},\, Q^{\major}_{T} = 0.
\end{aligned}
\label{problem:major_problem_major_trader}
\end{equation}
Here the cost functional $J^{\major}$ of the major trader is defined in \eqref{eq:cost_functional_major}, and the admissible set of the major trader, $\mathcal{A}^{\major}$, is defined in \eqref{eq:admissible_set_major}.
\end{problem}

{If \textbf{Problem}~\ref{Problem:major_trader_major} has a unique optimal strategy $v^{\major,*}$ (which can be proved; see Remark \ref{rmk:well-definedness})}, we can define the following response function:
\begin{equation}
\begin{aligned}
\Gamma^{\major}: \mathcal{A}^{\minor} &\to \mathcal{A}^{\major} 
\\
v^{\minor} & \mapsto v^{\major,*}.
\end{aligned} 
\label{eqn:major_reaction}
\end{equation}

Next, for the MFG of the minor traders, the representative minor trader solves the following fixed-point problem.
\begin{problem}[Representative Minor Trader's Problem]
\label{Problem:minor_trader}
Given the strategy of the major trader $v^{\major} = (v^{\major}_t)_{t\in[0,T]} \in \mathcal{A}^{\major}$, define a map $\mathcal{M}(\cdot;v^{\major}): L_{\mathbb{F}^{\major}}^2([0, T] \times \Omega ; \R) \to L_{\mathbb{F}^{\major}}^2([0, T] \times \Omega ; \R) $ as follows:
\begin{enumerate}
\item Fix a process $\mu = (\mu_t)_{t\in[0,T]} \in L_{\mathbb{F}^{\major}}^2([0, T] \times \Omega ; \R) $ that represents the average trading rate of the minor traders.
\item Solve the control problem faced by a representative minor trader:
\begin{equation}
\label{problem:minor_problem}
\begin{aligned}
\min_{v^{\minor} \in \mathcal{A}^{\minor}} \quad & J^{\minor}(v^{\minor}; v^{\major}, \mu, \Q^{\minor}_{0}) \\
\st \quad & \dif Q^{\minor}_{t} = v^{\minor}_{t} \dif t, \quad Q^{\minor}_{0} = \Q^{\minor}_{0},\, Q^{\minor}_{T} = 0,\\
\end{aligned}
\end{equation}
where the cost functional $J^{\minor}$ of the representative minor trader is defined as \eqref{eq:cost_functional_minor}, and the admissible set of the representative minor trader is defined in \eqref{eq:admissible_set_minor}.
\item Let $v^{\minor,*}$ be an optimal strategy of \eqref{problem:minor_problem}. {If the optimal strategy is unique (which can be proved; see Remark \ref{rmk:well-definedness})}, we can define the mapping $$\mathcal{M}(\mu;v^{\major}) = \left(\E\left[v^{\minor,*}_t \middle| \F^{\major}_t \right]\right )_{t\in[0,T]}.$$
\end{enumerate}
Conditioned on a fixed strategy from the major trader $v^{\major} = (v^{\major}_t)_{t\in[0,T]} \in \mathcal{A}^{\major}$, we say that $(\mu^{*}, v^{\minor,*}) \in L_{\mathbb{F}^{\major}}^2([0, T] \times \Omega ; \R) \times  L_{\mathbb{F}^{\minor}}^2([0, T] \times \Omega ; \R)$ is the corresponding Conditional Mean Field Equilibrium (CMFE),  if the following two conditions are satisfied:
\begin{itemize}
    \item[\bf (a)] {\bf Consistency:} $\mu^{*}$ satisfies $\mathcal{M}(\mu^{*};v^{\major}) = \mu^{*}$ (i.e., $\mu^{*}$ is a fixed-point of $\mathcal{M}(\cdot;v^{\major})$), and 
   \item[\bf (b)] {\bf Optimality:} $v^{\minor,*}$ is the optimal strategy of the control problem \eqref{problem:minor_problem} under $\mu = \mu^{*}$.
\end{itemize}
 
\end{problem}

We note that the CMFE of \textbf{Problem}~\ref{Problem:minor_trader}, $(\mu^{*}, v^{\minor,*})$, satisfies the following consistency condition:
\begin{equation}
\mu^{*}_t = \E\left[v^{\minor,*}_t \middle| \F^{\major}_t \right],\, \prob-\mathrm{a.s.},\quad t\in[0,T].
\label{eq:consistency_condition}
\end{equation}
Moreover, \textbf{Problem}~\ref{Problem:minor_trader} is an extended MFG \citep{gomes_extended_2016,graber_linear_2016}, where players interact with each other through their controls.

{If \textbf{Problem}~\ref{Problem:minor_trader} has a unique {CMFE} solution $(\mu^{*}, v^{\minor,*})$ for any given $v^{\major} \in \mathcal{A}^{\major}$ (which can be proved; see Remark \ref{rmk:well-definedness})}, the following well-defined response function of the representative minor trader
\begin{equation}
\begin{aligned}
\Gamma^{\minor}: \mathcal{A}^{\major} &\to \mathcal{A}^{\minor}\\
v^{\major} & \mapsto v^{\minor,*}.    
\end{aligned}
\label{eqn:minor_reaction}
\end{equation}

With the response functions of the major trader~\eqref{eqn:major_reaction} and the representative minor trader~\eqref{eqn:minor_reaction}, we define the open-loop Nash equilibrium for the above Major-Minor MFG as follows. 

\begin{defn}[Open-loop Nash Equilibrium for the Major-Minor MFG]
\label{defn:nash_equibrium}
A couple  of admissible strategies $(v^{\major,*},v^{\minor,*})\in \mathcal{A}^{\major} \times \mathcal{A}^{\minor}$ is an open-loop Nash equilibrium for the Major-Minor MFG if
\begin{equation*}
v^{\major,*} =  \Gamma^{\major}(v^{\minor,*}), \quad v^{\minor,*} =  \Gamma^{\minor}(v^{\major,*}).
\end{equation*}
\end{defn}

\begin{rmk}[Well-definedness of the open-loop Nash equilibrium]\label{rmk:well-definedness}
The well-definedness of the open-loop Nash equilibrium in \Cref{defn:nash_equibrium} relies on two key assumptions: the uniqueness of the solutions to the stochastic control problems \eqref{problem:major_problem_major_trader} and \eqref{problem:minor_problem}, and the uniqueness of the CMFE solution $(\mu^{*}, v^{\minor,*})$ in \textbf{Problem}~\ref{Problem:minor_trader}. The first uniqueness condition is easy to establish because both \eqref{problem:major_problem_major_trader} and \eqref{problem:minor_problem} are of linear-quadratic forms. In addition, the second uniqueness condition is proved in \Cref{thm:nash_fbsde_verification}.
\end{rmk}

\begin{rmk}[Nash vs. Stackelberg Equilibria] Note that both Major-Minor Nash equilibrium and Stackelberg equilibrium are used to study the game between a major player and a continuum of minor players. See \cite{feron_price_2020,fu_mean-field_2020,evangelista_finite_2021} for references on Stackerberg games. Mathematically, the difference is that in the Stackelberg game,   $\bar{v}^{\minor}$ in \textbf{Problem}~\ref{Problem:major_trader_major} is replaced by $\E\left[\Gamma^{\minor}(v^{\major})\middle|\F^{\major}_t\right]$. Intuitively, in the Stackelberg equilibrium, the major trader commits to a strategy first and then the minor traders make their decisions based on this strategy. However, in practice, public commitment to a strategy is infeasible because minor traders can arbitrage the committed strategy, leading major traders to naturally avoid such information-revealing disadvantages. Consequently, for the financial application we focus on, it is more appropriate to consider the Nash equilibrium defined in Definition~\ref{defn:nash_equibrium}, where all traders determine their strategies simultaneously. 
\end{rmk}

\subsection{Periodic Strategy}
\label{sec:periodic_strategy}
Throughout the paper, we focus on scenarios where the major trader evaluates her strategy against a periodic targeting strategy. In this section, we introduce the mathematical definitions of periodic strategies, starting with differentiable periodic strategies characterized by their trading rates and extending this concept to generic periodic strategies with piecewise differentiable inventory processes.

\begin{defn}[Differentiable Periodic Strategy]
\label{defn:periodic_strategy_cont}
A differentiable trading strategy with inventory process $Q = (Q_t)_{t\in[0,T]}$ and trading rate $v = (v_t)_{t\in[0,T]}$ is called a differentiable periodic strategy with $n$ periods and initial inventory $q_0$, if $Q_0=q_0$ and $v_t$ is a periodic function with $n$ periods in $[0,T]$. That is,
\begin{equation*}
v_t = v_{t+T/n},\quad  t\in\left[0,(n-1)T/n \right].
\end{equation*}
\end{defn}
The differentiable TWAP strategy is an example of a differentiable periodic strategy with a constant trading rate.
\begin{Eg}[Differentiable TWAP Strategy]
\label{eg:continuous_TWAP}
The inventory process of a differentiable TWAP strategy with initial inventory $q_0$ is defined as 
\begin{equation*}
Q_t^{\text{\emph{D-TWAP}},q_0} := q_0 \cdot \left(1-\frac{t}{T}\right),    
\end{equation*}
and its trading rate is 
\begin{equation*}
v_t^{\text{\emph{D-TWAP}},q_0} := - \frac{q_0}{T}.
\end{equation*}
\end{Eg}
Among the differentiable periodic strategies, the cosine trading strategy will be frequently used in later analysis.
\begin{Eg}[Cosine Trading Strategy]
\label{eg:cos_target}
The inventory process of a cosine trading strategy is defined as 
\begin{equation}
Q^{\cos,n,q_0,b}_t := q_0 \left(1 - \frac{t}{T}\right) + b \sin\left(\frac{2\pi n}{T}t\right),\quad t \in [0,T].
\end{equation}
The corresponding trading rate is
\begin{equation}
v^{\cos,n,q_0,b}_t = -\frac{q_0}{T} + b \frac{2\pi n}{T}\cos\left(\frac{2\pi n}{T} t\right).
\end{equation}
\end{Eg}

For a differentiable periodic strategy $Q$ with $n$ periods and an initial inventory $q_0$, one can show that $Q_{kT/n} = q_0(1-k/n)$ for $k=0,1,\ldots,n$. Then, for $\tilde{Q}_t:=Q_t - Q_t^{\text{D-TWAP},q_0}$, we have
\begin{equation*}
\tilde{Q}_{t+T/n} - \tilde{Q}_{t} = \int_{t}^{t+T/n} \left(v_t + \frac{q_0}{T}\right) \dif t = \int_{0}^{T/n} \left(v_t + \frac{q_0}{T} \right) \dif t = (Q_0 - Q_{T/n}) + \frac{q_0}{n} = 0
\end{equation*}
for any $t\in\left[0,(n-1)T/n \right]$. Therefore, $\tilde{Q}_t$ is a periodic function with $n$ periods in $[0,T]$ satisfying $\tilde{Q}_{kT/n}=0$ for $k=0,1,\ldots,n$. In other words, any differentiable periodic strategy following Definition~\ref{defn:periodic_strategy_cont} can be written as a combination of a differentiable TWAP strategy and a periodic round-trip strategy. 

Based on this observation, we can define a more general periodic strategy, relying solely on the inventory process, which may not have a trading rate as assumed in \Cref{defn:periodic_strategy_cont}.
\begin{defn}[Periodic Strategy]
\label{defn:periodic_strategy}
A trading strategy is called a periodic strategy with $n$ periods and initial inventory $q_0$, if the inventory process $Q = (Q_t)_{t\in[0,T]}$ has a decomposition:
\begin{equation}
Q_t = Q_t^{\text{\emph{D-TWAP}},q_0} + \tilde{Q}_t,
\label{eq:periodic_reservation}
\end{equation}
where $Q_t^{\text{\emph{D-TWAP}},q_0}$ is the inventory process of a differentiable TWAP strategy with initial inventory $q_0$, and $\tilde{Q}_t$ is a periodic piecewise differentiable function with $n$ periods in $[0,T]$ satisfying
\begin{equation}
\tilde{Q}_{k\cdot T/n } = 0, \quad k=0,1,\ldots,n.
\label{eq:periodic_round-trip_condition}
\end{equation}
\end{defn}

With \Cref{defn:periodic_strategy}, we can further consider a TWAP strategy that trades only at given discrete timestamps $t=kT/n$ for $k=0,1,\ldots,n$.  
\begin{Eg}[TWAP Strategy]
\label{eg:piecewise_const_target}
A trading strategy is called a TWAP strategy with $n$ periods and an initial inventory $q_0$ if its inventory process is
\begin{equation}
Q^{\mathrm{TWAP},q_0,n}_t = \begin{cases}
\frac{n-k}{n} q_{0},  & t = \frac{k}{n}T,\, k=0,1,\ldots,n,\\
\frac{2n-2k+1}{2n} q_{0},  & \frac{k-1}{n}T < t < \frac{k}{n}T,\, k=1,2,\ldots,n.\\
\end{cases}
\label{eq:piecewise_const_reservation}
\end{equation}
\end{Eg}
To see that \eqref{eq:piecewise_const_reservation} satisfies Definition \ref{defn:periodic_strategy},
\begin{equation*}
Q^{\mathrm{TWAP},q_0,n}_t - Q_t^{\text{\emph{D-TWAP}},q_0} = \begin{cases}
0,  & t = \frac{k}{n}T,\, k=0,1,\ldots,n,\\
-\frac{q_0}{2n} + \frac{q_0}{T} \left(t-\frac{k-1}{n}T\right),  & \frac{k-1}{n}T < t < \frac{k}{n}T,\, k=1,\ldots,n\\
\end{cases}
\end{equation*}
is a periodic piecewise constant function with $n$ periods in $[0,T]$ and satisfies \eqref{eq:periodic_round-trip_condition}.

\section{The Open-Loop Nash Equilibrium}
\label{sec:open-loop_equilibrium}
 In this section, we adopt a probabilistic approach to characterize the Nash equilibrium of the Major-Minor MFG introduced in Section~\ref{sec:infinite_minor_traders}. Our method is inspired by \cite{fu_mean_2021} and \cite{carmona_probabilistic_2018}. In particular,  \cite{fu_mean_2021} study the Nash equilibrium of extended MFGs {\it without} a major player, and Chapter 7.1 of \cite{carmona_probabilistic_2018} studies MFGs with a major player, where players interact with each other through the states {\it but not the controls}.  We integrate these existing methods to solve our Major-Minor MFG, with a major player and interactions through controls.

\subsection{Stochastic Maximum Principle}
For the major trader, the Hamiltonian of \textbf{Problem}~\ref{Problem:major_trader_major}  is defined
\begin{equation*}
    H^{\major}(t,q^{\major},p^{\major},v^{\major};\mu,R) = p^{\major} v^{\major} + \lambda \left(q^{\major} - R_t\right) \mu_t - a_0 \left(v^{\major}\right)^2 -\phi \left(q^{\major} - R_t\right)^{2},
\end{equation*}
where $H^{\major}:[0,T] \times \R \times \R \times \R \times L_{\mathbb{F}^{\major}}^2([0, T] \times \Omega ; \R) \times \mathrm{PD}([0,T]) \rightarrow \R$.
According to the stochastic maximum principle \citep{yong_stochastic_1999}, the solution to the optimization problem can be characterized as a forward-backward stochastic differential equation (FBSDE):
\begin{equation}
\left\{\begin{aligned}
\dif Q^{\major}_{t} & = v^{\major}_{t} \dif t, \\
- \dif P^{\major}_{t} & = \left(\lambda \E\left[v^{\minor}_{t} \middle|\F^{\major}_{t}\right] - 2\phi_0 \left(Q^{\major}_{t} - R_t\right) \right) \dif t - Z^{\major}_t \dif W^0_t, \\
Q^{\major}_{0} &= q^{\major}_{0}, \, Q^{\major}_{T} = 0.
\end{aligned}\right.
\label{eq:fbsde_major_nash_origin}
\end{equation}
Here the process $P^{\major}$ is the adjoint process to the controlled state process $Q^{\major}$, and $Z^{\major}_t\in L_{\F^{\major}}^2\left([0, T-] \times \Omega ; \R\right)$ is the martingale representation term of $P^{\major}$ with respect to $W^0$. The liquidation constraint $Q^{\major}_{T} = 0$ results in a {\it singularity} of the value function at the liquidation time. As a result, the terminal condition for $P^{\major}$ {\it cannot be determined a priori}. In particular, the first equation of \eqref{eq:fbsde_major_nash_origin} holds on $[0, T]$, and the second equation holds on $[0, T)$. A standard approach \citep{yong_stochastic_1999} yields the following candidate optimal control:
\begin{equation}
\begin{aligned}
v^{\major}_{t} & = \frac{P^{\major}_{t}}{2 a_{0}}.  
\end{aligned}
\label{eq:condidate_control_major}
\end{equation}
Combining \eqref{eq:fbsde_major_nash_origin} and \eqref{eq:condidate_control_major}, \textbf{Problem}~\ref{Problem:major_trader_major} reduces to the following FBSDE:
\begin{equation}
\left\{\begin{aligned}
\dif Q^{\major}_{t} & = \frac{P^{\major}_{t}}{2 a_{0}} \dif t, \\
- \dif P^{\major}_{t} & = \left(\lambda \E\left[v^{\minor}_{t} \middle| \F^{\major}_{t}\right] - 2\phi_0 \left(Q^{\major}_{t}-R_t\right)\right) \dif t - Z^{\major}_t \dif W^{0}_t, \\
Q^{\major}_{0} &= q^{\major}_{0}, \, Q^{\major}_{T} = 0.
\end{aligned}\right.
\label{eq:fbsde_major_nash}
\end{equation}

Similarly, the Hamiltonian of {\bf Problem}~\ref{Problem:minor_trader} for the representative minor trader is defined as 
\begin{equation*}
H^{\minor}(t, q^{\minor}, v^{\minor}, p^{\minor};v^{\major}, \mu) = p^{\minor} v^{\minor} + q^{\minor} \left(\lambda_0 v^{\major}_t + \lambda \mu_t \right) -a \left(v^{\minor}\right)^2 -\phi \left(q^{\minor}\right)^2,
\end{equation*}
where $H^{\minor}:[0,T] \times \R \times \R \times \R \times L_{\mathbb{F}^{\major}}^2([0, T] \times \Omega ; \R) \times L_{\mathbb{F}^{\major}}^2([0, T] \times \Omega ; \R) \rightarrow\R$. The stochastic maximum principle suggests that the solution to the optimization problem can be characterized in terms of the FBSDE
\begin{equation}
\left\{\begin{aligned}
\dif Q^{\minor}_{t} & = v^{\minor}_{t} \dif t, \\
- \dif P^{\minor}_{t} & = (\lambda_0 v^{\major}_{t} + \lambda \mu_t - 2\phi Q^{\minor}_{t}) \dif t -
Z^{\minor}_t \dif \tilde{W}_t, \\
Q^{\minor}_{0} &= \mathcal{Q}^{\minor}_{0}, \, Q^{\minor}_{T} = 0,
\end{aligned}\right.
\label{eq:fbsde_minor_nash_origin}
\end{equation}
where $\tilde{W} = (W^0, W)$, and $Z^{\minor}_t\in L_{\F^{\minor}}^2\left([0, T-] \times \Omega ; \R\right)$ is the martingale representation of $P^{\minor}$ with respect to $\tilde{W}$. Similar to \eqref{eq:fbsde_major_nash_origin}, the first equation of \eqref{eq:fbsde_minor_nash_origin} holds on $[0, T]$, and the second equation holds on $[0, T)$. A standard approach \citep{yong_stochastic_1999} yields the candidate optimal control
\begin{equation}
\begin{aligned}
v^{\minor}_{t} & = \frac{P^{\minor}_{t}}{2a}.
\end{aligned}
\label{eq:condidate_control_minor}
\end{equation}
Combining \eqref{eq:fbsde_minor_nash_origin} and \eqref{eq:condidate_control_minor} with the consistency condition \eqref{eq:consistency_condition}, the analysis of {\bf Problem} \ref{Problem:minor_trader} reduces to the following McKean-Vlasov FBSDE (MV-FBSDE):
\begin{equation}
\left\{
\begin{aligned}
\dif Q^{\minor}_{t} & = \frac{P^{\minor}_{t}}{2a} \dif t, \\
- \dif P^{\minor}_{t} & = \left(\lambda_0 v^{\major}_{t} + \lambda \E\left[\frac{P^{\minor}_{t}}{2a}\middle|\F^{\major}_t \right] - 2\phi Q^{\minor}_{t}\right) \dif t - Z^{\minor}_t \dif \tilde{W}_t, \\
Q^{\minor}_{0} &= \mathcal{Q}^{\minor}_{0}, \, Q^{\minor}_{T} = 0.
\end{aligned}
\right. 
\label{eq:fbsde_minor_nash}
\end{equation}

Finally, combining \eqref{eq:fbsde_major_nash} and \eqref{eq:fbsde_minor_nash}, the open-loop Nash equilibrium of the Major-Minor MFG is characterized by the following MV-FBSDE:
\begin{equation}
\left\{\begin{aligned}
\dif Q^{\major}_{t} & = \frac{P^{\major}_{t}}{2 a_{0}} \dif t, \\
\dif Q^{\minor}_{t} & = \frac{P^{\minor}_{t}}{2a} \dif t, \\
- d P^{\major}_{t} & = \left(\lambda \E\left[\frac{P^{\minor}_{t}}{2a}\middle|\F^{\major}_t \right] - 2\phi_0 (Q^{\major}_{t} - R_t)\right) \dif t - Z^{\major}_t \dif W^0_t, \\
- \dif P^{\minor}_{t} & = \left(\lambda_0 \frac{P^{\major}_{t}}{2 a_{0}} + \lambda \E\left[\frac{P^{\minor}_{t}}{2a}\middle|\F^{\major}_t \right] - 2\phi Q^{\minor}_{t}\right) \dif t - Z^{\minor}_t \dif \tilde{W}_t, \\
Q^{\major}_{0} &= q^{\major}_{0}, \, Q^{\major}_{T} = 0,\, Q^{\minor}_{0} = \mathcal{Q}^{\minor}_{0}, \, Q^{\minor}_{T} = 0. 
\end{aligned}\right.
\label{eq:fbsde_nash}
\end{equation}

\subsection{General Existence and Uniqueness of Solutions}
\label{sec:general_game_solution}
Now we are ready to proceed with the existence and uniqueness of the solutions to the MV-FBSDE system \eqref{eq:fbsde_nash}. To establish these results, we assume that the following conditions hold.
\begin{assumption}~
\label{assumption:fbsde}
\begin{enumerate}
\item[(i)] The initial inventory of the representative minor trader, $\mathcal{Q}^{\minor}_0$ has bounded second moment, namely, $\E\left[|\mathcal{Q}^{\minor}_0|^2\right] < +\infty$.
\item[(ii)] The deterministic target strategy of the major trader, $R =  \left(R_t\right)_{t\in[0,T]}$, satisfies $\int_0^T \left(R_t\right)^2 \dif t < +\infty$.
\item[(iii)] There exist constants $\theta_1,\theta_2,\theta_3>0$ such that
\begin{equation}
\theta_2 > \frac{\lambda_0}{2},\quad 1/(\theta_{1}^{-1}+\theta_{3}^{-1}) > \frac{\lambda}{2},\quad 
\theta_1 < \frac{8\phi_0 a}{\lambda}, \quad 
\frac{\lambda_0}{a_{0}} \theta_2 + \frac{\lambda}{a}\theta_3 < 8\phi.
\label{eq:assumption_coefficients}
\end{equation}
\end{enumerate}
\end{assumption}

Assumption~\ref{assumption:fbsde} (i) is standard for the MFG to be well-defined \citep{bensoussan_mean_2013,carmona_probabilistic_2018}. Assumption~\ref{assumption:fbsde} (ii) guarantees the objective function of the major player is well-defined {\citep{cartea_algorithmic_2015, cheng_optimal_2024}}.  Assumption~\ref{assumption:fbsde}-(iii) ensures that the permanent impacts of the major and minor traders ($\lambda_0$ and $\lambda$) will not dominate the temporary impacts ($a_0$ and $a$) nor the inventory penalties ($\phi_0$ and $\phi$). This is consistent with empirical findings observed in the financial market \citep{almgren_optimal_2001,cartea_incorporating_2016}. On the other hand, given that the major and minor traders interact with each other only through their permanent impacts on the price dynamics \eqref{eq:fundamental_price_mfg}, Assumption~\ref{assumption:fbsde}-(iii) is consistent with the weak interaction condition often assumed in the game theory literature (e.g., \citep{horst_stationary_2005}).

We are now ready to state the first main result.
\begin{thm}\label{thm:nash_fbsde_unique}
Under Assumption~\ref{assumption:fbsde}, there exists a unique solution $(Q^{\major}, Q^{\minor} ,P^{\major}, P^{\minor}, Z^{\major}, Z^{\minor})$ to the FBSDE \eqref{eq:fbsde_nash} in the space
\begin{equation}
\begin{aligned}
\mathcal{H}^{1}_{\mathbb{F}^{\major}} \times \mathcal{H}^{1}_{\mathbb{F}^{\minor}} &\times D_{\mathbb{F}^{\major}}^2([0, T] \times \Omega; \R) \times D_{\mathbb{F}^{\minor}}^2([0, T] \times \Omega; \R)  \\
& \qquad\qquad \times L_{\mathbb{F}^{\major}}^2\left([0, T-] \times \Omega ; \R\right) \times L_{\mathbb{F}^{\minor}}^2\left([0, T-] \times \Omega ; \R^m\right).    
\end{aligned} 
\end{equation}
\end{thm}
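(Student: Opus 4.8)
The plan is to solve the McKean–Vlasov FBSDE \eqref{eq:fbsde_nash} in two nested stages, mirroring the two-stage structure already baked into the problem formulation, and in each stage to use a decoupling/contraction argument adapted to the singular terminal constraint $Q_T=0$. First I would take conditional expectations of the whole system with respect to $\mathbb{F}^{\major}$. Writing $\bar Q^{\minor}_t=\E[Q^{\minor}_t\mid\F^{\major}_t]$, $\bar P^{\minor}_t=\E[P^{\minor}_t\mid\F^{\major}_t]$, the $\mathbb{F}^{\major}$-adapted ``aggregate'' system becomes a linear FBSDE driven only by $W^0$ in the four processes $(Q^{\major},\bar Q^{\minor},P^{\major},\bar P^{\minor})$, with two singular terminal constraints $Q^{\major}_T=\bar Q^{\minor}_T=0$. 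Because $R$ is deterministic and the only stochastic input is the common noise $\sigma$ in the (already integrated-out) price, this aggregate system is in fact \emph{linear} with a random but explicitly structured forcing; I expect it to be solvable by the same ansatz used in single-trader Almgren–Chriss problems with terminal liquidation, namely $P_t = A(t)X_t + (\text{offset terms})$ where $X_t=(Q^{\major}_t,\bar Q^{\minor}_t)$ and $A(t)$ solves a matrix Riccati ODE that blows up like $1/(T-t)$ at the terminal time — the blow-up being precisely what enforces $X_T=0$. The key analytic facts to check here are: (a) the Riccati ODE has a solution on $[0,T)$ with the correct $1/(T-t)$ singularity, which is where Assumption~\ref{assumption:fbsde}(iii) enters to keep the relevant matrices sign-definite so the solution does not explode \emph{before} $T$; and (b) the resulting $(Q^{\major},\bar Q^{\minor})$ lie in the weighted spaces $\mathcal H^1$ while $(P^{\major},\bar P^{\minor})$ lie in $D^2$, with the martingale terms in $L^2([0,T-]\times\Omega)$.

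Once the aggregate quantities $v^{\major}_t=P^{\major}_t/(2a_0)$ and $\mu^{*}_t=\bar P^{\minor}_t/(2a)$ are pinned down as $\mathbb{F}^{\major}$-progressive processes, the remaining content is the representative minor trader's \emph{individual} FBSDE \eqref{eq:fbsde_minor_nash}: this is now a \emph{standard} (non-McKean–Vlasov) linear FBSDE in $(Q^{\minor},P^{\minor})$ driven by $\tilde W=(W^0,W)$, with $v^{\major}$ and $\mu^{*}$ as exogenous $L^2$ inputs and the singular terminal condition $Q^{\minor}_T=0$. Its solvability is obtained by the same Riccati/decoupling device — indeed this is exactly the setting of \cite{fu_mean_2021}, so I would either quote their well-posedness result directly or reproduce the short argument: decouple via $P^{\minor}_t = \alpha(t)Q^{\minor}_t + \beta_t$ with $\alpha$ a scalar Riccati solution singular at $T$ and $\beta$ a linear BSDE, verify the liquidation constraint is met because $\alpha(t)\to+\infty$ as $t\uparrow T$, and check the integrability/regularity claims. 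The consistency condition $\mu^{*}_t=\E[P^{\minor}_t/(2a)\mid\F^{\major}_t]$ then holds automatically, since conditioning the individual solution on $\mathbb{F}^{\major}$ recovers the aggregate system we solved in stage one; uniqueness propagates the same way (uniqueness of the aggregate linear FBSDE plus uniqueness of the individual one).

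For \textbf{existence} I assemble the pieces: the stage-one solution furnishes $(Q^{\major},P^{\major},Z^{\major},\mu^{*})$, the stage-two solution furnishes $(Q^{\minor},P^{\minor},Z^{\minor})$, and one verifies the tower property gives back \eqref{eq:fbsde_nash} exactly. For \textbf{uniqueness}, given any solution in the stated space, taking $\E[\cdot\mid\F^{\major}_t]$ of its components produces a solution of the aggregate linear FBSDE; that solution is unique (linear FBSDE with the Riccati decoupling, Assumption~\ref{assumption:fbsde}(iii) guaranteeing the decoupling field exists and is bounded on compacts), hence $v^{\major}$ and $\mu^{*}$ are determined; plugging these back, \eqref{eq:fbsde_minor_nash} is a linear FBSDE whose solution is unique, forcing $(Q^{\minor},P^{\minor},Z^{\minor})$, and then $Z^{\major}$ is the unique martingale-representation term. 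The main obstacle — and the place where the paper's novelty over \cite{fu_mean_2021} and \cite[Ch.~7.1]{carmona_probabilistic_2018} genuinely bites — is stage one: one must simultaneously handle \emph{two coupled} singular terminal constraints and a control interaction (the $\lambda\,\E[P^{\minor}/(2a)\mid\F^{\major}]$ term couples the major's adjoint equation to the minor aggregate), so the relevant object is a $2\times2$ matrix Riccati equation that must be shown to exist on all of $[0,T)$ with the right singular behavior in \emph{both} coordinates; the inequalities in \eqref{eq:assumption_coefficients} are exactly the quantitative ``weak interaction'' conditions that make this matrix Riccati flow well-behaved, and verifying that is the technical heart of the proof.
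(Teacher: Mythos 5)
Your architecture is genuinely different from the paper's. You propose to (i) project the whole system onto $\mathbb{F}^{\major}$ to obtain a closed, four-dimensional linear FBSDE in $(Q^{\major},\bar Q^{\minor},P^{\major},\bar P^{\minor})$ with two singular terminal constraints, decoupled by a $2\times 2$ matrix Riccati equation blowing up at $T$, and then (ii) solve the individual minor FBSDE with $v^{\major}$ and $\mu^{*}$ as exogenous inputs. The paper never projects and never forms a matrix Riccati equation: it decouples each backward equation with the \emph{scalar} fields $A^{\major}_t$ and $A^{\minor}_t$ of \eqref{eq:linear_ansatz_A} (which are the Riccati solutions of the \emph{uncoupled} problems $\lambda=\lambda_0=0$), pushes all interaction terms into offset processes $B^{\major},B^{\minor}$, and runs a method of continuation in the coupling strength $\mathfrak{p}\in[0,1]$: explicit solution at $\mathfrak{p}=0$, then a contraction estimate (integration by parts on $\Delta\tilde Q\,\Delta\tilde P$ plus $2xy\le x^2/\theta+\theta y^2$ and Jensen for the conditional expectations) showing a uniform step $\bar{\mathfrak{o}}$ is admissible. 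The projection idea is attractive because it cleanly separates the common-noise aggregate from the idiosyncratic layer and makes uniqueness almost structural; the paper's route avoids ever having to control a coupled singular Riccati flow.

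That avoidance is precisely where your proposal has a genuine gap. The aggregate system's decoupling field is a \emph{non-symmetric} $2\times 2$ matrix Riccati ODE (the coupling is $\lambda$ in one adjoint equation and $\lambda_0,\lambda$ in the other, and there is no Hamiltonian symmetry to invoke), with a prescribed singular behavior $A(t)\sim\diag{2a_0,\,2a}/(T-t)$ needed to enforce \emph{both} terminal constraints. Non-symmetric Riccati equations can blow up strictly before $T$, and you assert without proof that Assumption~\ref{assumption:fbsde}(iii) prevents this. But \eqref{eq:assumption_coefficients} is calibrated to the paper's Young-inequality contraction bookkeeping (the four quantities $\tfrac{1}{2a_0}-\tfrac{\lambda_0}{4a_0\theta_2}$, $\tfrac{1}{2a}-\tfrac{\lambda}{4a\theta_1}-\tfrac{\lambda}{4a\theta_3}$, $2\phi_0-\tfrac{\lambda\theta_1}{4a}$, $2\phi-\tfrac{\lambda_0\theta_2}{4a_0}-\tfrac{\lambda\theta_3}{4a}$ must be positive), not to any sign-definiteness or comparison principle for a matrix Riccati flow; translating one into the other is not automatic and is, by your own admission, ``the technical heart'' that you leave unexecuted. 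A workable repair would be to replace the matrix Riccati step by the paper's device applied to the aggregate system itself -- scalar decoupling of each coordinate plus a continuation/contraction in the off-diagonal coupling -- at which point your two-stage outline becomes a complete proof; as written, stage one is not established.
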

The proof of Theorem \ref{thm:nash_fbsde_unique} is deferred to Section \ref{proof:nash_fbsde_unique}.

Next, we show that the solution of \eqref{eq:fbsde_nash} indeed characterizes the Nash equilibrium of the Major-Minor MFG.
\begin{thm}\label{thm:nash_fbsde_verification}
Given the unique solution $\left(Q^{\major,*}, Q^{\minor,*} ,P^{\major,*}, P^{\minor,*}, Z^{\major,*}, Z^{\minor,*}\right)$ of the FBSDE \eqref{eq:fbsde_nash}, the unique Nash equilibrium of the Major-Minor MFG (defined in {\bf Problems}~\ref{Problem:major_trader_major} and \ref{Problem:minor_trader}) is
\begin{equation}
v^{\major,*}_{t}  = \frac{P^{\major,*}_{t}}{2 a_{0}},\quad v^{\minor,*}_{t} = \frac{P^{\minor,*}_{t}}{2a},\quad 0\leq t \leq T.
\label{eq:nash_equilibrium_strategies}
\end{equation}
Under this unique Nash equilibrium, the optimal cost functionals of the major trader and the representative minor trader are
\begin{equation*}
\begin{aligned}
&J^{\major}\left(v^{\major,*};v^{\minor,*},R,q^{\major}_0\right) \\
&= \E\left[a_0 \int_0^T v^{\major,*}_{t} \dif R_t - \frac{\lambda}{2}  \int_0^T \left(Q^{\major,*}_{t} - R_t \right) \E\left[v^{\minor,*}_{t}\middle|\F^{\major}_{t}\right] \dif t \right],
\end{aligned}
\end{equation*}
and
\begin{equation*}
\begin{aligned}
& J^{\major}\left(v^{\major,*};v^{\minor,*},R,q^{\major}_0\right)\\
&= \E\left[a \mathcal{Q}^{\minor}_0 v^{\minor,*}_0 - \frac{1}{2}\int_0^T Q^{\minor,*}_{t} \left(\lambda_0 v^{\major,*}_{t} + \lambda \E\left[v^{\minor,*}_{t}\middle|\F^{\major}_{t}\right] \right) \dif t  \middle| \mathcal{Q}^{\minor}_0 \right],
\end{aligned}
\end{equation*}
respectively.
\end{thm}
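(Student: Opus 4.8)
The plan is to run the verification (sufficiency) half of the stochastic maximum principle for each of the two control problems, combine it with the uniqueness of the McKean-Vlasov FBSDE in Theorem~\ref{thm:nash_fbsde_unique} to obtain uniqueness of the equilibrium, and then read off the optimal costs by integrating by parts each inventory process against its adjoint. First I would check that $v^{\major,*}_t:=P^{\major,*}_t/(2a_0)$ and $v^{\minor,*}_t:=P^{\minor,*}_t/(2a)$ are admissible: the $(Q^{\major,*},P^{\major,*})$-block of \eqref{eq:fbsde_nash} is driven only by $W^0$ and $\F^{\major}_t$-conditional quantities while the $(Q^{\minor,*},P^{\minor,*})$-block is driven by $\tilde W$, so $v^{\major,*}$ is $\mathbb{F}^{\major}$-progressive and $v^{\minor,*}$ is $\mathbb{F}^{\minor}$-progressive; they are square-integrable by Theorem~\ref{thm:nash_fbsde_unique}; and $\int_0^T v^{\major,*}_t\dif t=Q^{\major,*}_T-Q^{\major,*}_0=-q^{\major}_0$, likewise $\int_0^T v^{\minor,*}_t\dif t=-\Q^{\minor}_0$, so the liquidation constraints hold and $v^{\major,*}\in\mathcal{A}^{\major}$, $v^{\minor,*}\in\mathcal{A}^{\minor}$. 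To show $v^{\major,*}=\Gamma^{\major}(v^{\minor,*})$, I fix $\bar v^{\minor,*}_t:=\E[v^{\minor,*}_t\mid\F^{\major}_t]$, take any competitor $v^{\major}\in\mathcal{A}^{\major}$ with inventory $Q^{\major}$, apply It\^o's formula to $(Q^{\major}_t-Q^{\major,*}_t)P^{\major,*}_t$ on $[0,\tau]$ using the adjoint dynamics and $P^{\major,*}_t=2a_0v^{\major,*}_t$, and let $\tau\uparrow T$; the endpoint term $\E[(Q^{\major}_\tau-Q^{\major,*}_\tau)P^{\major,*}_\tau]$ vanishes in the limit since $Q^{\major}_0-Q^{\major,*}_0=0$ and $Q^{\major}_T-Q^{\major,*}_T=0$, and concavity of $H^{\major}$ in $(q,v)$ together with $v^{\major,*}_t$ being the maximizer of $v\mapsto H^{\major}(t,Q^{\major,*}_t,P^{\major,*}_t,v;\bar v^{\minor,*},R)$ yields $J^{\major}(v^{\major};\dots)\geq J^{\major}(v^{\major,*};\dots)$, with equality only for $v^{\major}=v^{\major,*}$ because $J^{\major}$ is strictly convex in the control ($a_0>0$).

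The same computation with $\mu^*:=\bar v^{\minor,*}$ held fixed and the adjoint $P^{\minor,*}$ shows $v^{\minor,*}$ is optimal for \eqref{problem:minor_problem} at $\mu=\mu^*$; since the $P^{\major}$- and $P^{\minor}$-equations in \eqref{eq:fbsde_nash} already encode the consistency relation $\mu^*_t=\E[v^{\minor,*}_t\mid\F^{\major}_t]$, this makes $(\mu^*,v^{\minor,*})$ a CMFE given $v^{\major,*}$, i.e. $v^{\minor,*}=\Gamma^{\minor}(v^{\major,*})$, so $(v^{\major,*},v^{\minor,*})$ is an open-loop Nash equilibrium in the sense of Definition~\ref{defn:nash_equibrium}. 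For uniqueness I argue conversely: any open-loop Nash equilibrium $(\tilde v^{\major},\tilde v^{\minor})$ with its CMFE $(\tilde\mu,\tilde v^{\minor})$ must, by the necessity half of the stochastic maximum principle (valid because both problems are strictly convex linear-quadratic), have inventory and adjoint processes solving \eqref{eq:fbsde_nash}, whose solution is unique by Theorem~\ref{thm:nash_fbsde_unique}; hence $\tilde v^{\major}=v^{\major,*}$ and $\tilde v^{\minor}=v^{\minor,*}$ up to $\dif t\otimes\dif\prob$-null sets. Running the same necessity argument for an arbitrary fixed $v^{\major}\in\mathcal{A}^{\major}$ reduces any CMFE to the uniquely solvable McKean-Vlasov FBSDE \eqref{eq:fbsde_minor_nash}, which makes $\Gamma^{\minor}$ well-defined and completes Remark~\ref{rmk:well-definedness}.

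For the cost formulas, at equilibrium the definition of $J^{\major}$ and the tower property reduce the major trader's cost to $\E\int_0^T\big(a_0(v^{\major,*}_t)^2+\phi_0(Q^{\major,*}_t-R_t)^2-\lambda(Q^{\major,*}_t-R_t)\bar v^{\minor,*}_t\big)\dif t$. Applying It\^o to $(Q^{\major,*}_t-R_t)P^{\major,*}_t$ on $[0,\tau]$, taking expectations (the $\dif W^0$-integral is a true martingale because $Q^{\major,*}-R$ is bounded and $Z^{\major,*}\in L^2([0,\tau]\times\Omega)$), and letting $\tau\uparrow T$ --- both endpoint contributions vanish since $Q^{\major,*}_0-R_0=0=Q^{\major,*}_T-R_T$ --- gives, after substituting $P^{\major,*}_t=2a_0v^{\major,*}_t$,
\begin{equation*}
2a_0\,\E\!\int_0^T\!(v^{\major,*}_t)^2\dif t+2\phi_0\,\E\!\int_0^T\!(Q^{\major,*}_t-R_t)^2\dif t=\lambda\,\E\!\int_0^T\!(Q^{\major,*}_t-R_t)\bar v^{\minor,*}_t\dif t+2a_0\,\E\!\int_0^T\!v^{\major,*}_t\dif R_t ,
\end{equation*}
and substituting this back into the reduced expression yields the claimed formula. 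The minor trader's cost follows identically from It\^o applied to $Q^{\minor,*}_t P^{\minor,*}_t$ and then taking $\E[\,\cdot\mid\Q^{\minor}_0]$: the endpoint at $T$ vanishes ($Q^{\minor,*}_T=0$), whereas the endpoint at $t=0$ contributes $Q^{\minor,*}_0 P^{\minor,*}_0=2a\,\Q^{\minor}_0 v^{\minor,*}_0$, which is exactly the boundary term in the stated identity.

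The main obstacle is the behaviour at the liquidation time $T$: the adjoint processes carry a singular terminal condition and $Z^{\major,*},Z^{\minor,*}$ only lie in $L^2([0,T-]\times\Omega)$, so every integration by parts must first be carried out on $[0,\tau]$ with $\tau<T$ and then passed to the limit. Making this rigorous requires showing that $\E[(Q^{\major}_\tau-Q^{\major,*}_\tau)P^{\major,*}_\tau]\to0$ and its analogues, that the localized stochastic integrals are genuine martingales, and that the running-cost integrals converge; these rest on boundedness of the inventory processes, the memberships of $P^{\major,*},P^{\minor,*}$ and $v^{\major,*},v^{\minor,*}$ in the spaces provided by Theorem~\ref{thm:nash_fbsde_unique}, and dominated convergence. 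The convexity / maximum-principle portion of the argument is otherwise routine given the linear-quadratic structure.
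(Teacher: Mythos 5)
Your proposal is correct and follows essentially the same route as the paper: verification via integration by parts on $(Q-Q^{*})P^{*}$ over $[0,\tau]$ with $\tau\uparrow T$ (using the decomposition $P^{*}=-A Q^{*}+B$ and the integrability from Theorem~\ref{thm:nash_fbsde_unique} to control the singular endpoint), followed by the analogous integration by parts on $(Q^{\major,*}-R)P^{\major,*}$ and $Q^{\minor,*}P^{\minor,*}$ to obtain the cost formulas. Your explicit uniqueness step via the necessity half of the maximum principle is a reasonable completion of what the paper leaves to the strict-convexity remark, and does not change the method.
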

The proof of Theorem \ref{thm:nash_fbsde_verification} is deferred to Section \ref{proof:nash_fbsde_verification}.

\subsection{\texorpdfstring{$(N+1)$}{(N+1)}-player Game Approximation}
Finally, we state the approximation property of the Nash equilibrium of the Major-Minor MFG to a finite-player game. 

Throughout this section, we consider the same probability space $(\Omega,\G,\prob)$ as in Section~\ref{sec:infinite_minor_traders}. This space further supports $N$ independent $(m-1)$-dimensional standard Brownian motions $W^{1}, \ldots, W^{N}$, which are independent of $W^{0}$. In addition, $(\Omega,\G,\prob)$ carries $N$ independent and identically distributed random variables $\mathcal{Q}^{1}_0,\dots,\mathcal{Q}^{N}_0$ that share the same distribution $\mathcal{Q}^{\minor}_0$, and are independent of $W^{0}, W^{1},\dots,W^{N}$. We define $\tilde{\mathbb{F}}^{i}:=(\tilde{\F}^{i}_{t}, 0 \leq t \leq T)$ with $\tilde{\F}^{i}_{t} = \sigma(\mathcal{Q}^{i}_{0}, W^{0}_{s},W^{i}_{s}, 0 \leq s \leq t)$ for $i=1,\ldots,N$.

Consider a liquidation game between a major trader and $N$ minor traders, which is a finite-player counterpart of the MFG in Section~\ref{sec:infinite_minor_traders}. The major trader has an initial inventory $q^{\major}_0 \in \R$ and the $i$-th minor trader has an initial inventory $q^{i}_0\in \R$ for $i=1,2,\dots,N$. The inventory process of the major trader $Q^{\major} = (Q^{\major}_{t})_{t\in[0,T]} $ and the $i$-th minor trader $Q^{i} = (Q^{i}_{t})_{t\in[0,T]} $ follow, respectively,
\begin{equation*}
\begin{aligned}
\dif Q^{\major}_{t} &= v^{0}_{t} \dif t, \text{ with boundary conditions } Q^{\major}_{0} = q^{\major}_{0},\, Q^{\major}_{T} = 0, \\
\dif Q^{i}_{t} &= v^{i}_{t} \dif t, \text{ with boundary conditions } Q^{i}_{0} = q^{i}_{0},\, Q^{i}_{T} = 0,
\end{aligned}    
\end{equation*}
where $v^{\major} = (v^{\major}_{t})_{t\in[0,T]} $ and $v^{i} = (v^{i}_{t})_{t\in[0,T]} $ are the trading rates. Moreover, the cost functionals of the major and the $i$-th minor trader are, respectively,
\begin{equation}
\begin{aligned}
&  J^{N,\major}\left(v^{\major};\bm{v}^{(N)}, R, q_{0}^{0} \right) \\
& \qquad := \E\left[-\lambda \int_0^T (Q^{\major}_{t}-R_t) \bar{v}^{(N)}_{t} \dif t + a_0 \int_0^T \left(v^{\major}_{t}\right)^2 \dif t + \phi_0 \int_0^T (Q^{\major}_{t}-R_t)^2 \dif t\right],
\end{aligned}
\label{eq:cost_functional_major_N}
\end{equation}
and
\begin{equation}
\begin{aligned}
&J^{N,i}\left(v^{i}; v^{\major}, \bm{v}^{(N),-i}, q^i_0\right) \\
&\qquad := \E\left[- \int_0^T Q^{i}_{t} \left(\lambda_0 v^{\major}_{t} + \lambda \bar{v}^{(N)}_{t}\right) \dif t + a \int_0^T \left(v^{i}_{t}\right)^2 \dif t + \phi \int_0^T\left(Q^{i}_{t}\right)^2 \dif t \middle| \mathcal{Q}^{i}_0=q^i_0\right].
\end{aligned}
\label{eq:cost_functional_minor_N}
\end{equation}
Here $\bm{v}^{(N)} := (v^{1},\ldots,v^{N})$, $\bm{v}^{(N),-i} := (v^{1},\dots,v^{i-1},v^{i+1},\dots,v^{N})$ for $i=1,2,\dots,N$, and $\bar{v}^{(N)}_{t} := \sum_{i=1}^{N}v^{i}_t / N$.

{ The Yamada-Watanabe result for MV-FBSDE \citep[Theorem 1.33]{carmona_probabilistic_2018} yields that}, there exist two measurable functions $\psi^{\major}: \R \times \mathcal{C}([0,T]) \to \mathcal{C}([0,T-]) $ and $\psi^{\minor}: \R \times (\mathcal{C}([0,T]))^{2} \to \mathcal{C}([0,T-]) $ such that the equilibrium trading rates can be expressed as 
\begin{equation}
v^{\major,*} = \psi^{\major}(q^{\major}_0, W^{0}),\quad v^{\minor,*} = \psi^{\minor}(\mathcal{Q}^{\minor}_0, W^{0}, W). 
\label{eq:MFG_Nash_Watanabe}
\end{equation}
From the integrability results in \Cref{thm:nash_fbsde_unique}, we know that, there exist a constant $\kappa \in \R$ satisfying
\begin{equation}    \E\left[\left|\psi^{\major}(q^{\major}_0, W^{0})\right|^2\right] \leq \kappa, \quad \E\left[\left|\psi^{\minor}(\mathcal{Q}^{\minor}_0, W^{0}, W)\right|^2\right] \leq \kappa,
\label{eq:condition_kappa}
\end{equation}
and a deterministic function $K:\R\to\R$ satisfying
\begin{equation}
\E\left[\left|\psi^{\minor}(\mathcal{Q}^{\minor}_0, W^{0}, W)\right|^2 \middle| \mathcal{Q}^{\minor}_0 = q^{\minor}_0\right] \leq K(q^{\minor}_0). 
\label{eq:condition_K}
\end{equation}
With the constant $\kappa$ in \eqref{eq:condition_kappa} and the function $K$ in \eqref{eq:condition_K}, we consider the following admissible sets for the major trader and the $i$-th minor trader:
\begin{equation}
\mathcal{A}^{\major,\kappa} := \set{v \in L_{\mathbb{F}^{\major}}^2([0, T] \times \Omega ; \R)}{\int_{0}^{T} v^{\major}_t \dif t = - q^{\major}_0,\, \E\left[\int_{0}^{T} \left|v^{\major}_t\right|^2 \dif t \right]\leq \kappa}
\label{eq:admissible_set_major_constrained}
\end{equation}
and 
\begin{equation}
\mathcal{A}^{i} := \set{v \in L_{\tilde{\mathbb{F}}^i}^2([0, T] \times \Omega ; \R)}{\int_{0}^{T} v^i_t \dif t = - \mathcal{Q}^i_0,\, \E\left[\int_{0}^{T} \left|v^i_t\right|^2 \dif t \middle| \mathcal{Q}^i_0 = q^{i}_0 \right]\leq K\left(q^{i}_0\right)}.
\label{eq:admissible_set_i_constrained}
\end{equation}

We are now ready to state the approximate Nash equilibrium result for the Major-Minor MFG in \Cref{sec:infinite_minor_traders}. More specifically, we prove that 
\begin{equation}
v^{\major,*} := \psi^{\major}(q^{\major}_0, W^{0}), \quad v^{i,*} := \psi^{\minor}(\mathcal{Q}^{i}_0, W^{0}, W^i),\, i=1,\dots,N.
\label{eq:approximate_Nash_trading_rate}
\end{equation}
form an $\mathcal{O}(N^{-1/2})$--Nash equilibrium for the finite-player game \eqref{eq:cost_functional_major_N}-\eqref{eq:cost_functional_minor_N}, with admissible sets defined in \eqref{eq:admissible_set_major_constrained} and \eqref{eq:admissible_set_i_constrained}. 
\begin{thm}[Approximate Nash Equilibrium]\label{thm:approximate_Nash_equilibrium} 
Given $v^{\major,*}$ and $v^{i,*}$ in \eqref{eq:approximate_Nash_trading_rate}, define $\bm{v}^{(N),*} := (v^{1,*},\ldots,v^{N,*})$ and $\bm{v}^{(N),-i,*} := (v^{1,*},\dots,v^{i-1,*},v^{i+1,*},\dots,v^{N,*})$ for $i=1,2,\dots,N$. Then, for any $v^{\major}\in\mathcal{A}^{\major,\kappa}$, we have
\begin{equation}
J^{N,\major}\left(v^{\major};\bm{v}^{(N),*}, R, q^{\major}_{0} \right)   \geq J^{N,\major}\left(v^{\major,*};\bm{v}^{(N),*}, R, q^{\major}_{0} \right) - \lambda \frac{2T\kappa}{\sqrt{N}},
\label{eq:approximate_Nash_major}
\end{equation}
where $J^{N,\major}$ is defined in \eqref{eq:cost_functional_major_N}.
In addition, for $i=1,2,\dots,N$ and for any $v^{i} \in \mathcal{A}^{i}$, 
\begin{equation}
\begin{aligned}
J^{N,i}\left(v^{i}; v^{\major,*}, \bm{v}^{(N),-i,*}, q^{i}_0\right) &\geq J^{N,i}\left(v^{i,*}; v^{*,0}, \bm{v}^{(N),-i,*}, q^i_0\right) \\
&\qquad \qquad - \frac{2T}{N} \sqrt{ K\left(q^{i}_{0}\right)  \left[ (N+1) \kappa + 2 K\left(q^{i}_{0}\right)\right] } ,     
\end{aligned}
\label{eq:approximate_Nash_minor}
\end{equation}
where $J^{N,i}$ is defined in \eqref{eq:cost_functional_minor_N}.
\end{thm}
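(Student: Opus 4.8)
The plan is to follow the standard passage from a mean-field Nash equilibrium to an approximate Nash equilibrium for the finite-player game (as in \cite{carmona_probabilistic_2018}), conditioning throughout on the common noise $W^{0}$: once $W^{0}$ is frozen, the $N$ mean-field minor strategies $v^{j,*}=\psi^{\minor}(\mathcal{Q}^{j}_0,W^{0},W^{j})$ become i.i.d.\ because the pairs $(\mathcal{Q}^{j}_0,W^{j})$ are i.i.d.\ and independent of $W^{0}$, so a law of large numbers applies. The first step is the estimate
\[
\E\int_0^T\bigl|\bar v^{(N),*}_t-\mu^{*}_t\bigr|^2\dif t\le \frac{\kappa}{N},\qquad \mu^{*}_t:=\E\bigl[v^{\minor,*}_t\,\big|\,\F^{\major}_t\bigr],
\]
which uses the CMFE consistency \eqref{eq:consistency_condition} together with the identical-in-law construction \eqref{eq:MFG_Nash_Watanabe} (so that $\E[v^{j,*}_t\mid\F^{\major}_t]=\mu^{*}_t$ for each $j$), as well as its conditional counterpart $\E[\int_0^T|\bar v^{(N)}_t-\mu^{*}_t|^2\dif t\mid\mathcal{Q}^i_0=q^i_0]\le\{(N+1)\kappa+2K(q^i_0)\}/N^{2}$ in which $\bar v^{(N)}$ now contains the deviating strategy $v^i$, the extra $2K(q^i_0)$ being the cost of conditioning on $\mathcal{Q}^i_0$ (which destroys the conditional-mean-zero property of the $i$-th summand). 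In the same step I record the a priori inventory bounds $\E[\int_0^T(Q^i_t)^2\dif t\mid\mathcal{Q}^i_0=q^i_0]\le T^2K(q^i_0)$ for $v^i\in\mathcal{A}^{i}$ (write $Q^i_t=-\int_t^T v^i_s\dif s$ and use Cauchy--Schwarz and the $K$-budget in \eqref{eq:admissible_set_i_constrained}) and the analogous uniform bound for $Q^{\major}-R$ over $\mathcal{A}^{\major,\kappa}$ using the $\kappa$-budget and Assumption~\ref{assumption:fbsde}(ii).

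The second step is a stability estimate relating each finite-player cost to its mean-field counterpart. For the major trader, $J^{N,\major}(v^{\major};\bm{v}^{(N),*},R,q^{\major}_0)$ differs from $J^{\major}(v^{\major};\mu^{*},R,q^{\major}_0)$ only through the interaction term $\lambda\E\int_0^T(Q^{\major}_t-R_t)(\bar v^{(N),*}_t-\mu^{*}_t)\dif t$, and Cauchy--Schwarz with the law-of-large-numbers estimate and the inventory bound controls this by $C\lambda/\sqrt{N}$ uniformly on $\mathcal{A}^{\major,\kappa}$ (a cleaner integration by parts, using $Q^{\major}_0=R_0$, $Q^{\major}_T=R_T=0$, and that $R$ is deterministic so that the $\dif R$-term has zero mean, refines the constant to the $2T\kappa$ appearing in \eqref{eq:approximate_Nash_major}). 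For minor trader $i$, the difference between $J^{N,i}(v^i;v^{\major,*},\bm{v}^{(N),-i,*},q^i_0)$ and $J^{\minor}(v^i;v^{\major,*},\mu^{*},q^i_0)$ is $\lambda\E[\int_0^T Q^i_t(\mu^{*}_t-\bar v^{(N)}_t)\dif t\mid\mathcal{Q}^i_0=q^i_0]$; using the decomposition $\bar v^{(N)}_t-\mu^{*}_t=\tfrac1N[(v^i_t-\mu^{*}_t)+\sum_{j\ne i}(v^{j,*}_t-\mu^{*}_t)]$, Cauchy--Schwarz, and the conditional estimates of the first step, this is bounded by $\tfrac{CT}{N}\sqrt{K(q^i_0)[(N+1)\kappa+2K(q^i_0)]}$.

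The third step combines these with mean-field optimality. By Definition~\ref{defn:nash_equibrium}, $v^{\major,*}=\Gamma^{\major}(v^{\minor,*})$ says precisely that $v^{\major,*}$ minimizes $J^{\major}(\cdot;\mu^{*},R,q^{\major}_0)$ over $\mathcal{A}^{\major}\supseteq\mathcal{A}^{\major,\kappa}$, and $v^{\major,*}\in\mathcal{A}^{\major,\kappa}$ by \eqref{eq:condition_kappa}; hence for any $v^{\major}\in\mathcal{A}^{\major,\kappa}$,
\[
J^{N,\major}(v^{\major};\bm{v}^{(N),*})\ \geq\ J^{\major}(v^{\major};\mu^{*})-\tfrac{C\lambda}{\sqrt{N}}\ \geq\ J^{\major}(v^{\major,*};\mu^{*})-\tfrac{C\lambda}{\sqrt{N}}\ \geq\ J^{N,\major}(v^{\major,*};\bm{v}^{(N),*})-\tfrac{2C\lambda}{\sqrt{N}},
\]
which is \eqref{eq:approximate_Nash_major}. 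Symmetrically, $v^{\minor,*}=\Gamma^{\minor}(v^{\major,*})$ says that the $i$-th copy $v^{i,*}$ of the $\psi^{\minor}$-representation minimizes the minor's mean-field cost against major strategy $v^{\major,*}$ and average $\mu^{*}$; since $v^{i,*}$ is admissible --- $\tilde{\mathbb{F}}^{i}$-progressively measurable, liquidating by construction of $\psi^{\minor}$, and inside $\mathcal{A}^{i}$ by \eqref{eq:condition_K} --- chaining the two stability estimates around this optimality yields \eqref{eq:approximate_Nash_minor}.

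I expect the main obstacle to be the minor-trader estimate rather than the major's, because three issues have to be handled simultaneously: (i) the empirical average felt by minor $i$ contains the deviation $\tfrac1N v^i_t$ itself, so the law of large numbers must be applied only to the ``others'' sum while the self-term is absorbed into the $K(q^i_0)$-budget; (ii) everything is conditioned on $\mathcal{Q}^i_0$, which breaks the exact conditional-mean-zero structure of the $i$-th summand and is responsible for the constant $(N+1)\kappa+2K(q^i_0)$; and (iii) the a priori $L^2$ bounds must be uniform over the a priori unbounded sets of admissible deviations, which is exactly why the constrained admissible sets $\mathcal{A}^{\major,\kappa}$ and $\mathcal{A}^{i}$ carry explicit $L^2$ budgets $\kappa$ and $K(\cdot)$, and why one must first check --- via Theorem~\ref{thm:nash_fbsde_unique} and \eqref{eq:condition_kappa}--\eqref{eq:condition_K} --- that the mean-field optimizers $v^{\major,*}$ and $v^{i,*}$ respect those budgets. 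The major-trader part is comparatively soft: since $Q^{\major}-R$ is adapted to $\F^{\major}$ and $\E[\bar v^{(N),*}_t\mid\F^{\major}_t]=\mu^{*}_t$, the interaction difference is in fact identically zero, so the $\mathcal{O}(N^{-1/2})$ bound recorded in \eqref{eq:approximate_Nash_major} is simply whatever a uniform Cauchy--Schwarz estimate on $\mathcal{A}^{\major,\kappa}$ delivers.
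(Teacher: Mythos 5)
Your proposal follows essentially the same route as the paper's proof: the conditional law-of-large-numbers estimates for the empirical average $\bar v^{(N),*}$ (unconditional and conditional on $\mathcal{Q}^i_0$, with the same constants $\kappa/N$ and $((N+1)\kappa+2K(q^i_0))/N^2$, and the same treatment of the self-term $\tfrac1N v^i$), the Cauchy--Schwarz stability bounds on the cost functionals, and the sandwich around mean-field optimality. Your side observation that the major trader's interaction error vanishes identically by the tower property (since $Q^{\major}-R$ is $\mathbb{F}^{\major}$-adapted and $\E[\bar v^{(N),*}_t\mid\F^{\major}_t]=\mu^{*}_t$) is correct and in fact sharper than the paper's bound $2T\kappa\lambda/\sqrt{N}$, which the paper obtains by applying Cauchy--Schwarz to that same (zero) quantity.
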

Our $\mathcal{O}(N^{-1/2})$--approximation Nash equilibrium result is consistent with the approximation results in the literature for various MFGs settings; see \cite{nourian_epsilon-nash_2013}, \cite{huang_mean-field_2019}, and \cite{feron_price_2020} for MFGs with a major player. For MFGs without major players, see \cite{caines_large_2006} and \cite{lasry_mean_2007} for classical MFGs, \citet[Theorem 6.9]{carmona_probabilistic_2018} for MFGs with common noise, \cite{cao2022approximation}, \cite{guo2019stochastic}, and \cite{cao2023stationary} for MFG with singular controls, \cite{basei2019nonzero} for MFG with impulse controls, and  \cite{fu_mean_2021} and \cite{feron_price_2022} for extended MFGs.

The proof of Theorem~\ref{thm:approximate_Nash_equilibrium} is deferred to Section~\ref{proof:approximate_Nash_equilibrium}.

\subsection{Special Cases}
In this section, we discuss two special cases of the Major-Minor MFG.
\subsubsection{Special Case I: No Interactions}
\label{sec:no_game_case}

We consider a special case where the major trader does not affect the trading of minor traders, and vice versa. In this case, traders only consider their own price impacts, similar to the concept of ``orderly liquidation value'' introduced in \cite{brunnermeier_predatory_2005}. This scenario serves as a benchmark to highlight the effects of interactions between traders.

Setting $\lambda=0$ in {\bf Problem}~\ref{Problem:major_trader_major} suggests that the trading activities of minor traders do not affect the strategy of the major trader. In this case,  the stochastic control problem for the major trader reduces to
\begin{equation}
\begin{aligned}
\min_{v^{\major} \in \mathcal{A}^{\major}} \quad  & \E\left[a_0 \int_0^T (v^{\major}_{t})^2 \dif t  + \phi_0 \int_0^T (Q^{\major}_{t} - R_t)^2 \dif t\right] \\
\st \quad & d Q^{\major}_{t} = v^{\major}_{t} \dif t, \quad Q^{\major}_{0} = q^{\major}_{0},\, Q^{\major}_{T} = 0.
\end{aligned}
\label{problem:major_single_reservation}
\end{equation} 
This formulation was studied in \cite{cheng_optimal_2024}, and the optimal strategy of the major trader can be characterized explicitly as 
\begin{equation}
\begin{aligned}
Q^{\major,*,\lambda=0}_{t} &=  q^{\major}_0 \cdot \frac{\sinh (\theta_0 (T-t))}{\sinh(\theta_0 T)} \\
&\qquad + \frac{\theta_0}{\sinh (\theta_0 T)} \int_0^T R_s \sinh (\theta_0 \min (s, t)) \sinh (\theta_0 (T-\max (s, t)))  \dif s,
\end{aligned}
\label{eq:major_single_reservation}
\end{equation}
where $\theta_0 = \sqrt{\phi_0/a_0}$.

Similarly, we set $\lambda_0=0$ in {\bf Problem} \ref{Problem:minor_trader}, which implies that the trading activities from the major trader do not affect the strategy of the minor traders. In this case, minor traders face an MFG that was studied in~\cite{fu_mean_2021}. In this case, the equilibrium strategy of the representative minor trader follows
\begin{equation}
    Q^{\minor,*,\lambda_0=0}_t =  \exp \left(-\frac{\lambda}{4 a} t\right) \frac{\sinh (\gamma(T-t))}{\sinh (\gamma T)} \E[\mathcal{Q}^{\minor}_0]  +  \frac{\sinh (\theta(T-t))}{\sinh (\theta T)} (\mathcal{Q}^{\minor}_0-\E[\mathcal{Q}^{\minor}_0]),
    \label{eq:MFG_soln_different_q0}
\end{equation}
where $\theta=\sqrt{\phi/a}$ and $\gamma=\sqrt{\frac{\phi}{a} + \frac{\lambda^{2}}{16 a^{2}}}$.

 Comparing the results in Sections~\ref{sec:general_game_solution} and \ref{sec:no_game_case}, we see that the targeting strategy $R$ directly affects the trading of the major trader, and further influences minor traders through the interactions (if  exist) between the traders. More specifically, in the scenario without interactions, the targeting strategy $R$ influences {\it only} the major trader's strategy \eqref{eq:major_single_reservation} and does not affect the strategy of the representative minor trader \eqref{eq:MFG_soln_different_q0}. As a comparison, the equilibrium strategies of both the major and minor traders in Theorem~\ref{thm:nash_fbsde_verification} are affected by the targeting strategy.

\subsubsection{Special Case II: Identical Inventory of Minor Traders}
\label{sec:common_inventory}
Here we consider a special case where all minor traders have the same constant initial inventory, namely, $\mathcal{Q}^{\minor}_0 \equiv q^{\minor}_{0}$ for some constant $q^{\minor}_{0}$ almost surely. In this case, the strategies of the major trader and the representative minor trader are both deterministic, and are characterized by the first-order ODE system of $(Q^{\major}, Q^{\minor} ,P^{\major}, P^{\minor})$:
\begin{equation}
\left\{\begin{aligned}
\dif Q^{\major}_{t} & = v^{\major}_{t} \dif t, \\
\dif Q^{\minor}_{t} & = v^{\minor}_{t} \dif t, \\
- d v^{\major}_{t} & = \left(\frac{\lambda}{2 a_0} v^{\minor}_{t}  -  \frac{\phi_0}{a_0} (Q^{\major}_{t} - R_t)\right) \dif t, \\
- \dif v^{\minor}_{t} & = \left(\frac{\lambda_0}{2a} v^{\major}_{t}  + \frac{\lambda}{2a} v^{\minor}_{t} - \frac{\phi}{a} Q^{\minor}_{t}\right) \dif t, \\
Q^{\major}_{0} &= q^{\major}_{0}, \, Q^{\major}_{T} = 0,\, Q^{\minor}_{0} = q^{\minor}_{0}, \, Q^{\minor}_{T} = 0. 
\end{aligned}\right.
\label{eq:fbsde_nash_deterministic}
\end{equation}
In addition, \eqref{eq:fbsde_nash_deterministic} is equivalent to the following second-order ODE system
\begin{equation}
\left\{
\begin{aligned}
a_0 \frac{\dif^2 Q^{\major}_{t}}{\dif t^2} - \phi_0 Q^{\major}_{t} &= - \phi_0 R_t - \frac{\lambda}{2}\frac{\dif Q^{\minor}_{t}}{\dif t},\\ 
a \frac{\dif^2 Q^{\minor}_{t}}{\dif t^2} - \phi Q^{\minor}_{t} &= - \frac{\lambda_0}{2}\frac{\dif Q^{\major}_{t}}{\dif t} - \frac{\lambda}{2}\frac{\dif Q^{\minor}_{t}}{\dif t},
\end{aligned}\right.
\label{eq:major_trader_ODE}
\end{equation}
with boundary conditions $Q^{\major}_{0}=q^{\major}_{0}$, $Q^{\major}_{T}=0$, $Q^{\minor}_{0} = q^{\minor}_0$, and $Q^{\minor}_{T} = 0$. 

This special case serves as a primary example in the discussion of the periodicity phenomena of Major-Minor MFG, as detailed in Section~\ref{sec:periodic_targeting_strategy}.

\section{Periodic Phenomena in the Liquidation Game}
\label{sec:periodic_targeting_strategy}
In this section, we analyze the periodicity in the Major-Minor MFG from a theoretical perspective,  focusing on the scenario that the targeting strategy of the major trader is a periodic strategy (see Definition~\ref{defn:periodic_strategy}).
Section~\ref{sec:periodic-trend_decomposition} provides a periodic-trend decomposition of the Nash equilibrium, and Section~\ref{sec:major_trader_decomposition_cosine} further analyzes the periodic phenomena in both the traders' behaviors and the dynamics of the aggregated market when the targeting strategy is a cosine strategy.

\subsection{Periodic-Trend Decomposition of the Nash Equilibrium}
\label{sec:periodic-trend_decomposition}
To analyze the periodicity in the Nash equilibrium of the Major-Minor MFG, we first prove a general result to decompose the strategies of the major and the representative minor traders into periodic components and trend components.
\begin{thm}[Periodic-Trend Decomposition of the Nash Equilibrium] \label{thm:major_trader_decomposition}
Assume that the targeting strategy $R$ in \eqref{problem:major_single_reservation} is a periodic strategy with $n$ periods and initial inventory $q^{\major}_0$. In addition, assume  all minor traders have the same initial inventories, that is, $\mathcal{Q}^{\minor}_0\equiv q^{\minor}_0$ almost surely. Then the Nash equilibrium $(Q^{\major,*}_{t}, Q^{\minor,*}_{t})$ of the Major-Minor MFG given by {\bf Problems}~\ref{Problem:major_trader_major} and \ref{Problem:minor_trader} has the following decomposition:
\begin{equation}
\begin{aligned}
Q^{\major,*}_{t} &=  Q^{\major,\period}_{t} + Q^{\major,\trend}_{t} = q^{\major}_{0} \cdot \left(1-\frac{t}{T}\right) + \tilde{Q}^{\major,\period}_{t} + Q^{\major,\trend}_{t},\\
Q^{\minor,*}_{t} &= Q^{\minor,\period}_{t} + Q^{\minor,\trend}_{t},
\end{aligned}
\label{eq:periodic_decomposition}
\end{equation}
where
\begin{itemize}
\item $Q^{\major,\period}_{t}$ and $Q^{\minor,\period}_{t}$ are both periodic strategies with $n$ periods, and $(\tilde{Q}^{\major,\period}_{t}, Q^{\minor,\period}_{t})$ is the unique periodic solution to the following ODE:
\begin{equation}
\begin{aligned}
a_0 \frac{\dif^2 \tilde{Q}^{\major,\period}_{t}}{d t^2} - \phi_0 \tilde{Q}^{\major,\period}_{t} &= - \phi_0 \tilde{R}_t - \frac{\lambda}{2}\frac{d Q^{\minor,\period}_{t}}{d t},\\ 
a \frac{d^2 Q^{\minor,\period}_{t}}{d t^2} - \phi Q^{\minor,\period}_{t} &= - \frac{\lambda_0}{2}\frac{d \tilde{Q}^{\major,\period}_{t}}{d t} - \frac{\lambda}{2}\frac{d Q^{\minor,\period}_{t}}{d t}.
\end{aligned}
\label{eq:periodic_decomposition_periodic}
\end{equation}
\item $Q^{\major,\trend}_{t}$ and $Q^{\minor,\trend}_{t}$ satisfy the following ODE:
\begin{equation}
\begin{aligned}
a_0 \frac{d^2 Q^{\major,\trend}_{t}}{d t^2} - \phi_0 Q^{\major,\trend}_{t} &= - \frac{\lambda}{2}\frac{d Q^{\minor,\trend}_{t}}{d t},\\ 
a \frac{d^2 Q^{\minor,\trend}_{t}}{d t^2} - \phi Q^{\minor,\trend}_{t} &= - \frac{\lambda_0}{2}\left(\frac{d Q^{\major,\trend}_{t}}{d t} - \frac{q^{\major}_{0}}{T}\right) - \frac{\lambda}{2}\frac{d Q^{\minor,\trend}_{t}}{d t},
\end{aligned}
\label{eq:periodic_decomposition_trend}
\end{equation}
with the boundary conditions $Q^{\major,\trend}_{0} = Q^{\major,\trend}_{T} = -Q^{\major,\period}_{0}$, $Q^{\minor,\trend}_{0} = q^{\minor}_{0} - Q^{\minor,\period}_{0}$, and $Q^{\minor,\trend}_{T} = - Q^{\minor,\period}_{0}$.
\end{itemize}
\end{thm}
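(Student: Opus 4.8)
The plan is to reduce the statement to the deterministic two-point boundary value problem of Section~\ref{sec:common_inventory}, to construct a solution of the asserted form by a direct substitution, and then to invoke uniqueness. Under $\mathcal{Q}^{\minor}_0\equiv q^{\minor}_0$, Theorems~\ref{thm:nash_fbsde_unique} and \ref{thm:nash_fbsde_verification} together with the reduction in Section~\ref{sec:common_inventory} say that $(Q^{\major,*},Q^{\minor,*})$ is the \emph{unique} solution of the second-order system \eqref{eq:major_trader_ODE} with $Q^{\major}_0=q^{\major}_0$, $Q^{\major}_T=0$, $Q^{\minor}_0=q^{\minor}_0$, $Q^{\minor}_T=0$; hence it suffices to exhibit \emph{any} pair of the form \eqref{eq:periodic_decomposition} solving that problem. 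Writing the periodic target as $R_t=q^{\major}_0(1-t/T)+\tilde R_t$ with $\tilde R$ a $T/n$-periodic, piecewise differentiable function vanishing at the knots $kT/n$ (Definition~\ref{defn:periodic_strategy}), I would plug $Q^{\major}_t=q^{\major}_0(1-t/T)+\tilde Q^{\major,\period}_t+Q^{\major,\trend}_t$ and $Q^{\minor}_t=Q^{\minor,\period}_t+Q^{\minor,\trend}_t$ into \eqref{eq:major_trader_ODE}. Because $q^{\major}_0(1-t/T)$ has zero second derivative, the term $-\phi_0 q^{\major}_0(1-t/T)$ cancels the corresponding piece of $-\phi_0 R_t$, and its first derivative $-q^{\major}_0/T$ is exactly the inhomogeneous constant in the second line of \eqref{eq:periodic_decomposition_trend}; therefore \eqref{eq:major_trader_ODE} holds identically once the ``$\period$'' pieces solve \eqref{eq:periodic_decomposition_periodic} and the ``$\trend$'' pieces solve \eqref{eq:periodic_decomposition_trend}.

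The first nontrivial point is the well-posedness of the periodic subsystem \eqref{eq:periodic_decomposition_periodic}. Rewriting it as a first-order, constant-coefficient linear system $\dot z=Az+F(t)$ with $T/n$-periodic forcing $F$, a $T/n$-periodic solution exists and is unique iff $I-e^{AT/n}$ is invertible, i.e.\ iff no eigenvalue of $A$ equals $2\pi i k n/T$ for some $k\in\mathbb{Z}$. I would prove the stronger claim that the characteristic polynomial $(a_0 r^2-\phi_0)\bigl(a r^2-\phi+\tfrac{\lambda}{2}r\bigr)-\tfrac{\lambda\lambda_0}{4}r^2$ of \eqref{eq:periodic_decomposition_periodic} has no purely imaginary root: at $r=i\omega$ its imaginary part equals $-\tfrac{\lambda\omega}{2}(a_0\omega^2+\phi_0)$, which vanishes only at $\omega=0$ since $\lambda>0$, and at $r=0$ the polynomial equals $\phi_0\phi>0$. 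Hence \eqref{eq:periodic_decomposition_periodic} has a unique $T/n$-periodic solution $(\tilde Q^{\major,\period},Q^{\minor,\period})$, which one obtains by a Fourier-mode expansion of $\tilde R$ or, more robustly when $\tilde R$ is only piecewise differentiable, by the variation-of-constants formula (yielding a $C^1$ pair with Lipschitz derivatives satisfying \eqref{eq:periodic_decomposition_periodic} a.e.). Since the trading rates $\dot{\tilde Q}^{\major,\period}$ and $\dot Q^{\minor,\period}$ are then $T/n$-periodic, both $Q^{\major,\period}=q^{\major}_0(1-t/T)+\tilde Q^{\major,\period}$ and $Q^{\minor,\period}$ are periodic strategies with $n$ periods in the sense of Section~\ref{sec:periodic_strategy}.

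For the trend pieces, $T$ being an integer multiple of $T/n$ and $\tilde Q^{\major,\period},Q^{\minor,\period}$ being $T/n$-periodic gives $\tilde Q^{\major,\period}_T=\tilde Q^{\major,\period}_0$ and $Q^{\minor,\period}_T=Q^{\minor,\period}_0$; subtracting the periodic part from the original boundary data of \eqref{eq:major_trader_ODE} then determines the boundary values of $(Q^{\major,\trend},Q^{\minor,\trend})$, which I would check reproduce the boundary conditions imposed on \eqref{eq:periodic_decomposition_trend} in the statement (with the TWAP contribution accounting for the bookkeeping at $t=0$ and $t=T$). The resulting two-point Dirichlet problem \eqref{eq:periodic_decomposition_trend} is uniquely solvable because its homogeneous version — zero forcing, zero boundary values — is exactly \eqref{eq:major_trader_ODE} with $R\equiv0$ and $q^{\major}_0=q^{\minor}_0=0$, whose only solution is trivial by Theorem~\ref{thm:nash_fbsde_unique}. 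Adding the TWAP, periodic, and trend pieces produces a solution of \eqref{eq:major_trader_ODE} with the prescribed boundary data, so by the uniqueness invoked at the start it coincides with $(Q^{\major,*},Q^{\minor,*})$, giving \eqref{eq:periodic_decomposition}.

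The main obstacle I anticipate is the non-resonance analysis of \eqref{eq:periodic_decomposition_periodic}: one must rule out, across \emph{all} harmonics $2\pi k n/T$, any coupled major--minor eigenmode that would make the periodic subsystem ill-posed or its solution fail to be periodic. The computation above shows this holds unconditionally (no characteristic root is purely imaginary, so in fact $I-e^{A\tau}$ is invertible for every $\tau>0$), but executing it carefully — together with the boundary-condition bookkeeping in the trend step and the low-regularity treatment of $\tilde R$ — is where the work lies; the remainder is routine linear-ODE analysis.
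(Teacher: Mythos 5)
Your proposal is correct and follows the same overall architecture as the paper's proof: reduce to the deterministic boundary value problem \eqref{eq:major_trader_ODE}, verify by substitution that the TWAP, periodic, and trend pieces recombine into it, establish a unique $T/n$-periodic solution of \eqref{eq:periodic_decomposition_periodic} via the first-order reformulation $\dot{\bm X}=\bm A\bm X+\bm B_t$ and the invertibility of $\bm I-e^{\bm A T/n}$, and solve the trend part as a standard two-point problem. The one place where you genuinely depart from the paper is the non-resonance step, and your version is the stronger one: the paper justifies the invertibility of $\bm I-e^{\bm A T/n}$ by saying ``$\bm A$ is invertible,'' which only rules out the eigenvalue $0$ and does not by itself exclude eigenvalues of the form $2\pi i k n/T$ with $k\neq 0$. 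Your computation with the characteristic polynomial $(a_0r^2-\phi_0)(ar^2-\phi+\tfrac{\lambda}{2}r)-\tfrac{\lambda\lambda_0}{4}r^2$ — whose imaginary part at $r=i\omega$ equals $-\tfrac{\lambda\omega}{2}(a_0\omega^2+\phi_0)$ and whose value at $r=0$ is $\phi_0\phi>0$ — shows $\bm A$ has no purely imaginary eigenvalue at all, so $\bm I-e^{\bm A\tau}$ is invertible for every $\tau>0$; this closes the gap the paper leaves open (note it uses $\lambda>0$, which holds by assumption). Your appeal to Theorem~\ref{thm:nash_fbsde_unique} for the unique solvability of the trend Dirichlet problem is likewise a justification the paper merely asserts. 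One small bookkeeping remark: when you match boundary data you should land on $Q^{\major,\trend}_0=Q^{\major,\trend}_T=-\tilde Q^{\major,\period}_0$ (the statement's $-Q^{\major,\period}_0$ appears to be a typo, since $Q^{\major,\period}_0=q^{\major}_0+\tilde Q^{\major,\period}_0$), consistent with your ``TWAP contribution accounting'' comment.
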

The proof of \Cref{thm:major_trader_decomposition} is deferred to \Cref{proof:major_trader_decomposition}.

To gain additional insights, the next result characterizes the strategies in \Cref{thm:major_trader_decomposition} in the special case of no interaction through the price impact. We use the same setup as in \Cref{sec:no_game_case}.
\begin{prop}\label{prop:major_periodicity_no_game}
Under the conditions of Theorem~\ref{thm:major_trader_decomposition},
\begin{enumerate}
    \item If $\lambda = 0$, the trend component satisfies $Q_t^{\major,\trend} \equiv 0$ and the  overall strategy $Q_t^{\major,*} $ is a periodic strategy with $n$ periods for the major trader.
    \item If $\lambda_0 = 0$,  the periodic component satisfies $Q_t^{\minor,\period}\equiv0$ for the representative minor trader.
\end{enumerate}
\end{prop}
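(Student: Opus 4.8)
The plan is to read both claims directly off the periodic--trend ODE system of \Cref{thm:major_trader_decomposition}, specialising it to the degenerate coupling and exploiting an elementary fact about the homogeneous linear equation $\alpha\ddot u+\beta\dot u-\gamma u=0$ with $\alpha,\gamma>0$: its characteristic roots $r_\pm=(-\beta\pm\sqrt{\beta^2+4\alpha\gamma})/(2\alpha)$ are real, distinct, and of opposite sign, hence both nonzero; consequently the equation admits no nonzero $T/n$--periodic solution and no nonzero solution vanishing at two distinct points.

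Part 2 ($\lambda_0=0$) is immediate from this. Putting $\lambda_0=0$ in \eqref{eq:periodic_decomposition_periodic} decouples its second line into $a\,\ddot Q^{\minor,\period}_t+\frac{\lambda}{2}\,\dot Q^{\minor,\period}_t-\phi\,Q^{\minor,\period}_t=0$, an equation of the above type with $\beta=\lambda/2>0$ and $\gamma=\phi>0$, so its only $T/n$--periodic solution is the zero function. Since \Cref{thm:major_trader_decomposition} guarantees that $(\tilde Q^{\major,\period},Q^{\minor,\period})$ is the \emph{unique} periodic solution of \eqref{eq:periodic_decomposition_periodic}, we must have $Q^{\minor,\period}_t\equiv0$. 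This is consistent with \eqref{eq:MFG_soln_different_q0}: when $\mathcal{Q}^{\minor}_0\equiv q^{\minor}_0$ and $\lambda_0=0$, the minor's equilibrium inventory $\exp(-\frac{\lambda}{4a}t)\frac{\sinh(\gamma(T-t))}{\sinh(\gamma T)}q^{\minor}_0$ has no periodic component.

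For Part 1 ($\lambda=0$) the goal is to show the major's trend part solves a homogeneous boundary-value problem with vanishing boundary data. Putting $\lambda=0$ in \eqref{eq:periodic_decomposition_trend} decouples its first line into $a_0\,\ddot Q^{\major,\trend}_t-\phi_0\,Q^{\major,\trend}_t=0$, whose boundary data is provided by \Cref{thm:major_trader_decomposition}. With $\lambda=0$ the periodic system \eqref{eq:periodic_decomposition_periodic} also decouples, so $\tilde Q^{\major,\period}$ is the unique $T/n$--periodic solution of $a_0\,\ddot{\tilde Q}^{\major,\period}_t-\phi_0\,\tilde Q^{\major,\period}_t=-\phi_0\,\tilde R_t$; one then checks, using that $\tilde R$ is the round-trip part of a periodic strategy (so $\tilde R_{kT/n}=0$, together with the structure this entails in the construction behind \Cref{thm:major_trader_decomposition}), that $\tilde Q^{\major,\period}_{kT/n}=0$, i.e. $Q^{\major,\period}$ is a genuine periodic strategy with initial inventory $q^{\major}_0$. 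This makes the boundary values of $Q^{\major,\trend}$ vanish at $t=0$ and $t=T$, and since $\sqrt{\phi_0/a_0}>0$ the homogeneous BVP has only the trivial solution, so $Q^{\major,\trend}\equiv0$. Then $Q^{\major,*}=Q^{\major,\period}$, which is a periodic strategy with $n$ periods by \Cref{thm:major_trader_decomposition}. An equivalent route: for $\lambda=0$ the major's cost \eqref{eq:cost_functional_major} is independent of the minor traders, so $Q^{\major,*}$ is the single-trader optimum \eqref{eq:major_single_reservation}, and one verifies directly that its $\sinh$--kernel Green's representation maps a periodic strategy $R$ to a periodic strategy.

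I expect the decoupling and the characteristic-root dichotomy to be routine, settling Part 2 with essentially no work. The genuine obstacle is the boundary bookkeeping in Part 1: establishing that the $T/n$--periodic major periodic component vanishes at the period endpoints $kT/n$, so that $Q^{\major,\trend}$ inherits homogeneous boundary conditions. This requires the explicit construction inside the proof of \Cref{thm:major_trader_decomposition} (not merely its statement), and it is also the step where sign and index errors are easiest to commit; the rest is mechanical.
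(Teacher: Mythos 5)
Your Part 2 is correct and is, in fact, an argument the paper's written proof does not supply at all: the paper only proves the $\lambda=0$ statement. Setting $\lambda_0=0$ decouples the second line of \eqref{eq:periodic_decomposition_periodic} into $a\ddot Q^{\minor,\period}+\tfrac{\lambda}{2}\dot Q^{\minor,\period}-\phi Q^{\minor,\period}=0$, whose characteristic roots have product $-\phi/a<0$ and hence are real, nonzero and of opposite sign, so the only $T/n$-periodic solution is zero; combined with the uniqueness of the periodic solution asserted in \Cref{thm:major_trader_decomposition}, this forces $Q^{\minor,\period}\equiv0$. That is clean and complete.

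For Part 1 the paper takes a genuinely different route that sidesteps exactly the step you flag as the obstacle. It never touches the periodic--trend decomposition: with $\lambda=0$ the major's equation in \eqref{eq:major_trader_ODE} decouples into $a_0\ddot Q^{\major}-\phi_0 Q^{\major}=-\phi_0 R$ with $Q^{\major}_0=q^{\major}_0$, $Q^{\major}_T=0$; subtracting the TWAP part gives a boundary-value problem for $\tilde Q^{\major}$ on $[0,T]$ with zero data, and the paper exhibits its solution as the $T/n$-periodic extension of the solution $Q^\star$ of the same BVP on a single period $[0,T/n]$ with $Q^\star_0=Q^\star_{T/n}=0$ (using $\tilde R_{kT/n}=0$ and the periodicity of $\tilde R$). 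This directly certifies that $Q^{\major,*}$ is a periodic strategy in the sense of \Cref{defn:periodic_strategy} without ever computing boundary values of the decomposition. Your ``equivalent route'' through the $\sinh$-kernel representation \eqref{eq:major_single_reservation} is the closest in spirit to this.

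The gap you acknowledge in your main route for Part 1 is real and, as stated, cannot be closed. You need $\tilde Q^{\major,\period}_0=0$ so that the homogeneous trend BVP has zero boundary data, but the periodic component singled out by \Cref{thm:major_trader_decomposition} is the $C^1$-periodic solution determined by $\bm X_0=\bm X_{T/n}$ (values \emph{and} derivatives matching), and that solution need not vanish at the period endpoints for a general periodic target. Concretely, take $\tilde R_t=\cos(\omega t)-\cos(2\omega t)$ with $\omega=2\pi n/T$, a legitimate round-trip component ($T/n$-periodic, vanishing at $kT/n$); with $\lambda=0$ the unique periodic solution of $a_0\ddot u-\phi_0 u=-\phi_0\tilde R$ is $u(t)=\frac{\phi_0}{a_0\omega^2+\phi_0}\cos(\omega t)-\frac{\phi_0}{4a_0\omega^2+\phi_0}\cos(2\omega t)$, and $u(0)=\phi_0\bigl[(a_0\omega^2+\phi_0)^{-1}-(4a_0\omega^2+\phi_0)^{-1}\bigr]\neq0$. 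The identity $\tilde Q^{\major,\period}_{kT/n}=0$ does hold for the cosine target of \Cref{prop:periodic_component_fourier}, but it is a special feature of that target, not a consequence of \Cref{defn:periodic_strategy}; so your Part 1 argument only works for such targets, whereas the paper's direct construction (and your kernel-based alternative) is the way to argue in general.
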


As shown in \Cref{prop:major_periodicity_no_game} and \Cref{thm:major_trader_decomposition}, the major and minor traders interact with each other in both the periodic and trend components. As a benchmark, \Cref{prop:major_periodicity_no_game} demonstrates that in the absence of interactions, the major trader employs a periodic strategy, whereas the representative minor trader follows a strategy without periodic patterns. In contrast, when $\lambda\neq 0$ and $\lambda_0\neq 0$, \Cref{thm:major_trader_decomposition} indicates that the strategy of the major trader evolves to incorporate a trend component in response to the price impact caused by the representative minor trader. Simultaneously, the strategy of the representative minor trader shifts to integrate periodic patterns, reacting to the periodic actions of the major trader. The proof of \Cref{prop:major_periodicity_no_game} is deferred to \Cref{proof:major_periodicity_no_game}.

\subsection{Special Case: Cosine Targeting Strategy}
\label{sec:major_trader_decomposition_cosine}
Next, we consider a special case where the targeting strategy $R$ is a cosine trading strategy in \Cref{eg:cos_target} with $n$ periods, initial inventory $q^{\major}_{0}$, and amplitude $b>0$. Formally, we consider
\begin{equation}
R_t = Q^{\cos,n,q^{\major}_0,b}_t = q^{\major}_0 \left(1 - \frac{t}{T}\right) + b \sin\left(\omega t\right),\quad t \in [0,T],
\label{eq:cosine_targeting_strategy}
\end{equation}
where $\omega = 2\pi n / T$.

In this setting, explicit characterizations of the periodic components can be obtained through straightforward calculations.
\begin{prop}\label{prop:periodic_component_fourier}
Under the conditions of Theorem~\ref{thm:major_trader_decomposition}, when the targeting strategy $R$ is the cosine trading strategy \eqref{eq:cosine_targeting_strategy},
we have
\begin{equation}
\begin{aligned}
\tilde{Q}^{\major,\period}_{t} &= \frac{b\phi_0}{K}[\left(d_{0} d_{1}^{2} + d_{1}e_{0}e_{1} + d_{0} e_{1}^{2}\right) \sin\left(\omega t\right) - e_{0} e_{1}^{2} \cos\left(\omega t\right)],\\
Q^{\minor,\period}_{t} &= \frac{b\phi_0}{K}[- d_{0} e_{0} e_{1} \sin\left(\omega t\right) + e_{0}(d_{0} d_{1} + e_{0}e_{1}) \cos\left(\omega t\right)],
\end{aligned}
\label{eq:periodic_component_cosine}
\end{equation}
where $d_{0} = a_0 \omega^2 + \phi_0$, $d_{1} = a \omega^2 + \phi$, $e_{0} = \frac{\lambda_0}{2}\omega$, $e_{1} = \frac{\lambda}{2}\omega$, and $K =  d_{0}^{2} d_{1}^{2} + 2 d_{0}d_{1}e_{0}e_{1} + d_{0}^{2} e_{1}^{2} + e_{0}^{2} e_{1}^{2}$. 

In particular, if $\lambda=0$, then the optimal strategy of the major trader is
\begin{equation}
Q^{\major,*,\lambda=0}_t = q^{\major}_0 \left(1 - \frac{t}{T}\right) + b \frac{\phi_0}{d_0} \sin\left(\omega t\right),
\label{eq:major_rate_lambda_0}
\end{equation}
which is a cosine trading strategy with $n$ periods and amplitude $ b \phi_0 / d_0 $.
\end{prop}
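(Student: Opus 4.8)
The plan is to solve the linear constant-coefficient ODE system \eqref{eq:periodic_decomposition_periodic} explicitly by Fourier matching and then specialize. For the cosine targeting strategy \eqref{eq:cosine_targeting_strategy}, the detrended target appearing on the right-hand side of \eqref{eq:periodic_decomposition_periodic} is
\begin{equation*}
\tilde R_t \;=\; R_t - q^{\major}_0\Bigl(1-\tfrac{t}{T}\Bigr) \;=\; b\sin(\omega t),\qquad \omega = \tfrac{2\pi n}{T},
\end{equation*}
a single sinusoid. By \Cref{thm:major_trader_decomposition} the system \eqref{eq:periodic_decomposition_periodic} admits a \emph{unique} $T/n$-periodic solution $(\tilde Q^{\major,\period},Q^{\minor,\period})$, so it suffices to exhibit one; since the forcing oscillates at frequency $\omega$, I look for a solution in the two-dimensional space spanned by $\sin(\omega t)$ and $\cos(\omega t)$.

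First I substitute the ansatz $\tilde Q^{\major,\period}_t = A_1\sin(\omega t)+B_1\cos(\omega t)$ and $Q^{\minor,\period}_t = A_2\sin(\omega t)+B_2\cos(\omega t)$ into \eqref{eq:periodic_decomposition_periodic}, using $\tfrac{\dif^2}{\dif t^2}\mapsto-\omega^2$ and $\tfrac{\dif}{\dif t}\sin(\omega t)=\omega\cos(\omega t)$, $\tfrac{\dif}{\dif t}\cos(\omega t)=-\omega\sin(\omega t)$. Collecting separately the coefficients of $\sin(\omega t)$ and of $\cos(\omega t)$ in each of the two scalar equations produces the $4\times4$ linear system for $(A_1,B_1,A_2,B_2)$:
\begin{gather*}
d_0A_1+e_1B_2=\phi_0b,\qquad d_0B_1-e_1A_2=0,\\
e_0B_1+d_1A_2+e_1B_2=0,\qquad e_0A_1+e_1A_2-d_1B_2=0,
\end{gather*}
with $d_0,d_1,e_0,e_1$ as in the statement.

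Next I solve this system by elimination. The second equation gives $B_1=e_1A_2/d_0$; inserting this into the third yields $A_2=-e_1d_0B_2/(d_0d_1+e_0e_1)$, where $d_0d_1+e_0e_1>0$ since $d_0=a_0\omega^2+\phi_0>0$, $d_1=a\omega^2+\phi>0$, $e_0,e_1\geq0$. The first equation gives $A_1=(\phi_0b-e_1B_2)/d_0$; substituting $A_1$ and $A_2$ into the fourth equation and clearing fractions isolates $B_2$, and the resulting denominator is exactly $(d_0d_1+e_0e_1)^2+d_0^2e_1^2=d_0^2d_1^2+2d_0d_1e_0e_1+d_0^2e_1^2+e_0^2e_1^2=K>0$, so the system is nonsingular (as it must be, given the uniqueness in \Cref{thm:major_trader_decomposition}). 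Back-substitution then reproduces the four coefficients displayed in \eqref{eq:periodic_component_cosine}.

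Finally, for $\lambda=0$ we have $e_1=0$, hence $K=d_0^2d_1^2$ and $B_1=B_2=A_2=0$, $A_1=\phi_0b\,d_0d_1^2/K=b\phi_0/d_0$, so $\tilde Q^{\major,\period}_t=(b\phi_0/d_0)\sin(\omega t)$. By \Cref{prop:major_periodicity_no_game}(1), $Q^{\major,\trend}\equiv0$ when $\lambda=0$, so the decomposition \eqref{eq:periodic_decomposition} gives $Q^{\major,*,\lambda=0}_t=q^{\major}_0(1-t/T)+(b\phi_0/d_0)\sin(\omega t)$, which is precisely the cosine trading strategy $Q^{\cos,n,q^{\major}_0,b\phi_0/d_0}_t$ of \Cref{eg:cos_target} with amplitude $b\phi_0/d_0$, establishing \eqref{eq:major_rate_lambda_0}. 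I do not anticipate a genuine obstacle: the only conceptual point is legitimizing the single-frequency ansatz, which is immediate from the uniqueness of the periodic solution already proved in \Cref{thm:major_trader_decomposition}, and the remaining work is routine, if bookkeeping-heavy, linear algebra — the one subtlety being to recognize the combination $K$ as the common denominator when the fractions are cleared.
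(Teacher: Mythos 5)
Your proof is correct and follows the route the paper itself implies (the paper omits the computation, noting only that the result follows by ``straightforward calculations''): the single-frequency ansatz is legitimized by the uniqueness of the periodic solution from \Cref{thm:major_trader_decomposition}, your $4\times4$ system is the correct Fourier-matching of \eqref{eq:periodic_decomposition_periodic}, and the displayed coefficients do solve it with denominator $K=(d_0d_1+e_0e_1)^2+d_0^2e_1^2$. One harmless slip in the final paragraph: when $\lambda=0$ (i.e.\ $e_1=0$) you get $A_2=B_1=0$ but \emph{not} $B_2=0$; rather $B_2=b\phi_0 e_0/(d_0d_1)$, so the minor trader's periodic component does not vanish (consistent with \Cref{prop:major_periodicity_no_game}, which only kills $Q^{\minor,\period}$ when $\lambda_0=0$) — this does not affect \eqref{eq:major_rate_lambda_0}, which depends only on $A_1$, $B_1$, and the vanishing of the trend component.
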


With these results, we can further explore the periodicity in the trading rates of both the major and representative minor traders, as well as the periodicity in the aggregated trading rate and the price of the asset.

\subsubsection{Periodicity in Traders' Behavior}

Define $\tilde{v}^{\major,\period}_{t}:= \dif \tilde{Q}^{\major,\period}_{t} / \dif t$ and $ v^{\minor,\period}_{t}:= \dif Q^{\minor,\period}_{t} / \dif t$ as the periodic components of the trading rates of the major trader and the representative minor trader, respectively. Then we have the following result on the interactions between them.
\begin{prop}[Periodicity in the Trading Rates of the Major and Minor Traders]\label{prop:periodic_trading_rate}
Under the same assumptions of \Cref{prop:periodic_component_fourier}, we have:
\begin{equation}
\begin{aligned}
\tilde{v}^{\major,\period}_{t} =  \mathcal{A}\left[\tilde{v}^{\major,\period}\right] \cos\left(\omega t - \varphi^{\major}\right),\quad v^{\minor,\period}_{t} = \mathcal{A}\left[v^{\minor,\period}\right] \cos\left(\omega t - \varphi^{\minor}\right),
\end{aligned}
\label{eq:trading_rate_amplitude}
\end{equation}
where $\varphi^{\major} \in (-\pi,\pi)$ and $\varphi^{\minor} \in (-\pi,\pi)$ are the phases of $\tilde{v}^{\major,\period}$ and $v^{\minor,\period}$, and $\mathcal{A}\left[\tilde{v}^{\major,\period}\right]$ and $\mathcal{A}\left[v^{\minor,\period}\right]$ are the amplitudes of $\tilde{v}^{\major,\period}$ and $v^{\minor,\period}$ defined in \eqref{eq:defn_amplitude}. Furthermore,
\begin{enumerate}
\item $\varphi^{\major} \in \left[0,\frac{\pi}{2}\right]$, $\varphi^{\minor} \in \left[- \pi, - \frac{\pi}{2}\right]$, and $\varphi^{\major} - \varphi^{\minor} \leq \pi$.
\item $\mathcal{A}\left[\tilde{v}^{\major,\period}\right] \leq \mathcal{A}\left[v^{\major,*,\lambda=0}\right] $ with $v^{\major,*,\lambda=0}_t:= \dif Q^{\major,*,\lambda=0}_{t} / \dif t$.
\end{enumerate}
\end{prop}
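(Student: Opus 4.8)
The plan is to work directly from the explicit formulas in Proposition \ref{prop:periodic_component_fourier}. Differentiating \eqref{eq:periodic_component_cosine} in $t$ turns each $\sin(\omega t)$ into $\omega\cos(\omega t)$ and each $\cos(\omega t)$ into $-\omega\sin(\omega t)$, so both $\tilde v^{\major,\period}_t$ and $v^{\minor,\period}_t$ are of the form $\alpha\cos(\omega t)+\beta\sin(\omega t)$ for explicit constants $\alpha,\beta$ built from $d_0,d_1,e_0,e_1,\phi_0,K,b,\omega$. Writing such an expression as $\mathcal{A}\cos(\omega t-\varphi)$ with $\mathcal{A}=\sqrt{\alpha^2+\beta^2}\ge 0$ and $\varphi=\operatorname{atan2}(\beta,\alpha)\in(-\pi,\pi]$ is immediate; one should also note that $\mathcal{A}=\max_t(\alpha\cos+\beta\sin)-\min_t(\cdot)$ divided by $2$ matches the amplitude definition \eqref{eq:defn_amplitude}, so the ``$\mathcal{A}[\cdot]$'' notation is consistent. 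This gives \eqref{eq:trading_rate_amplitude} with no work beyond bookkeeping.

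For part (1), I would extract the signs of the cosine- and sine-coefficients of each rate. From \eqref{eq:periodic_component_cosine}, $\tilde Q^{\major,\period}_t$ has a positive $\sin(\omega t)$ coefficient (since $d_0,d_1,e_0,e_1>0$ make $d_0d_1^2+d_1e_0e_1+d_0e_1^2>0$) and a negative $\cos(\omega t)$ coefficient ($-e_0e_1^2<0$); hence its derivative $\tilde v^{\major,\period}_t$ has $\cos$-coefficient $\omega(d_0d_1^2+d_1e_0e_1+d_0e_1^2)>0$ and $\sin$-coefficient $\omega e_0e_1^2>0$, placing $\varphi^{\major}$ in the first quadrant $[0,\pi/2]$. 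Similarly $Q^{\minor,\period}_t$ has a negative $\sin$-coefficient ($-d_0e_0e_1<0$) and a positive $\cos$-coefficient ($e_0(d_0d_1+e_0e_1)>0$), so $v^{\minor,\period}_t$ has $\cos$-coefficient $-\omega d_0e_0e_1<0$ and $\sin$-coefficient $-\omega e_0(d_0d_1+e_0e_1)<0$, i.e.\ both coordinates negative, forcing $\varphi^{\minor}\in[-\pi,-\pi/2]$. The inequality $\varphi^{\major}-\varphi^{\minor}\le\pi$ then follows from $\varphi^{\major}\le\pi/2$ and $\varphi^{\minor}\ge-\pi/2$. (If one wants the sharper claim that the interval endpoints are attained only in degenerate limits, that is a side remark; the stated closed intervals follow from the sign analysis alone.)

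For part (2), from \eqref{eq:major_rate_lambda_0} the $\lambda=0$ major strategy has periodic part $b(\phi_0/d_0)\sin(\omega t)$, so $v^{\major,*,\lambda=0}_t=b\omega(\phi_0/d_0)\cos(\omega t)$ and $\mathcal{A}[v^{\major,*,\lambda=0}]=b\omega\phi_0/d_0$. On the other hand $\mathcal{A}[\tilde v^{\major,\period}]=b\omega(\phi_0/K)\sqrt{(d_0d_1^2+d_1e_0e_1+d_0e_1^2)^2+(e_0e_1^2)^2}$. Cancelling the common factor $b\omega\phi_0$, the claim reduces to the algebraic inequality
\[
\sqrt{(d_0d_1^2+d_1e_0e_1+d_0e_1^2)^2+e_0^2e_1^4}\ \le\ \frac{K}{d_0}=d_0d_1^2+2d_1e_0e_1+d_0e_1^2+\frac{e_0^2e_1^2}{d_0}.
\]
I would prove this by squaring (both sides are positive) and showing the difference of squares is a nonnegative combination of the positive quantities $d_0,d_1,e_0,e_1$; in fact the right side minus the square-rooted quantity inside is $d_1e_0e_1+e_0^2e_1^2/d_0>0$ after noting $e_1^2\le d_1/(a\omega^2)\cdot d_1$... rather, the cleanest route is: the RHS equals $(d_0d_1^2+d_1e_0e_1+d_0e_1^2)+ (d_1e_0e_1+e_0^2e_1^2/d_0)$, a sum of a term $A>0$ and a positive term $B$, while the LHS is $\sqrt{A^2+e_0^2e_1^4}\le A+ e_0^2e_1^4/(2A)$, so it suffices that $e_0^2e_1^4/(2A)\le B$, i.e.\ $e_0^2e_1^4 d_0\le 2A(d_1e_0e_1 d_0+e_0^2e_1^2)=2e_0e_1 A(d_0d_1+e_0e_1)$, which after dividing by $e_0e_1>0$ and using $A\ge d_0e_1^2$ and $d_0d_1+e_0e_1\ge e_0e_1$ is clearly true. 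The \textbf{main obstacle} is exactly this last algebraic inequality: it must be established with full rigor (not just the heuristic chain above), so I would either carry out the squaring-and-collecting carefully, or — cleaner — observe that $\mathcal{A}[\tilde v^{\major,\period}]^2/\mathcal{A}[v^{\major,*,\lambda=0}]^2 = d_0^2\big((d_0d_1^2+d_1e_0e_1+d_0e_1^2)^2+e_0^2e_1^4\big)/K^2$ and show this ratio is $\le 1$ by expanding $K^2-d_0^2(\cdots)$ and checking every monomial coefficient is nonnegative, which is a finite mechanical verification.
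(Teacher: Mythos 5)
Your derivation of the representation \eqref{eq:trading_rate_amplitude} and the quadrant locations $\varphi^{\major}\in[0,\pi/2]$, $\varphi^{\minor}\in[-\pi,-\pi/2]$ is correct and matches the paper. However, your justification of the third claim of part (1) contains a genuine gap: you write that $\varphi^{\major}-\varphi^{\minor}\leq\pi$ ``follows from $\varphi^{\major}\leq\pi/2$ and $\varphi^{\minor}\geq-\pi/2$,'' but you have just established the opposite inequality $\varphi^{\minor}\leq-\pi/2$. With only the interval memberships $\varphi^{\major}\in[0,\pi/2]$ and $\varphi^{\minor}\in[-\pi,-\pi/2]$, the difference $\varphi^{\major}-\varphi^{\minor}$ can a priori be anywhere in $[\pi/2,3\pi/2]$, so the bound by $\pi$ is a genuinely extra constraint. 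The paper closes this by writing $\varphi^{\major}=\arctan\bigl(e_0e_1^2/(d_0d_1^2+d_1e_0e_1+d_0e_1^2)\bigr)$ and $\varphi^{\minor}=\arctan\bigl((d_0d_1+e_0e_1)/(d_0e_1)\bigr)-\pi$, and then verifying the tangent inequality
\begin{equation*}
\frac{e_0e_1^2}{d_0d_1^2+d_1e_0e_1+d_0e_1^2}\;\leq\;\frac{d_0d_1+e_0e_1}{d_0e_1},
\end{equation*}
which by monotonicity of $\arctan$ gives $\varphi^{\major}-\varphi^{\minor}\leq\pi$. You need to supply this (or an equivalent) comparison; it does not follow from the sign analysis alone.

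Part (2) is essentially the paper's argument: reduce to $K^2-d_0^2\bigl[(d_0d_1^2+d_1e_0e_1+d_0e_1^2)^2+e_0^2e_1^4\bigr]\geq 0$ and verify it by expansion. Your alternative chain via $\sqrt{A^2+c}\leq A+c/(2A)$ together with $A\geq d_0e_1^2$ is also valid and slightly slicker than the paper's direct expansion (the paper groups the difference as $2(d_0^2d_1^2+d_0d_1e_0e_1+d_0^2e_1^2)(d_0d_1e_0e_1+e_0^2e_1^2)+(d_0d_1e_0e_1+e_0^2e_1^2)^2-d_0^2e_0^2e_1^4$ and bounds it below by $d_0^2e_0^2e_1^4$); either route is acceptable once written out, so the only item you must actually fix is the phase-difference bound in part (1).
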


\Cref{prop:periodic_trading_rate} implies that the periodic component of the minor trader achieves its peak before the major trader, indicating that the minor trader exhibits {\it front-running behavior} by taking advantage of the periodic patterns from the major trader in equilibrium. The major trader reduces the strength of her periodic behavior in response to the strategic (and adversarial) actions of the minor trader. The proof of Proposition~\ref{prop:periodic_trading_rate} is deferred to Section~\ref{proof:periodic_trading_rate}.

\subsubsection{Periodicity in the Aggregated Market Dynamics}
\label{sec:period_whole_market}
In this section, we study the aggregated market dynamics, which is affected by the behaviors of individual traders. In particular, we consider the periodic components of the price and the aggregated trading rate in equilibrium:
\begin{equation}
S^{\period}:= \lambda_0 Q^{\major,\period}_{t} + \lambda Q^{\minor,\period}_{t},
\end{equation}
and 
\begin{equation}
v^{\agg,\period}:= v^{\major,\period}_{t} + v^{\minor,\period}_{t},
\end{equation}
respectively. The strength of the periodicity in the price and the aggregate trading rate is measured by the amplitudes of their respective periodic components $\mathcal{A}[S^{\period}]$ and $\mathcal{A}[v^{\agg,\period}]$. As a benchmark comparison, we introduce the periodic components of the price and the aggregate trading rate without interaction:
\begin{equation}
S^{\period,\nogame}:= \lambda_0 Q^{\major,\period,\lambda=0}_{t}
\end{equation}
and 
\begin{equation}
v^{\agg,\period,\nogame}:= v^{\major,\period,\lambda=0}_{t}.
\end{equation}
Then we have the following results.
\begin{prop}[Periodicity in the Price]
\label{prop:periodic_aggrate_price}
Assume the same assumptions as in \Cref{prop:periodic_component_fourier}. Then we have
$$\mathcal{A}\left[S^{\period}\right] \leq \mathcal{A}\left[S^{\period,\nogame}\right].$$
\end{prop}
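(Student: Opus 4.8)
The plan is to compute both amplitudes explicitly from Proposition~\ref{prop:periodic_component_fourier} and compare them directly. First I would write $S^{\period}_t = \lambda_0 Q^{\major,\period}_t + \lambda Q^{\minor,\period}_t$. Recalling from \eqref{eq:periodic_decomposition} that $Q^{\major,\period}_t = q^{\major}_0(1-t/T) + \tilde{Q}^{\major,\period}_t$, the trend-like part $q^{\major}_0(1-t/T)$ is not periodic; since $S^{\period}$ is defined as the \emph{periodic component} of the price, the relevant object is really the oscillatory piece, so I would work with the $\sin(\omega t)$ and $\cos(\omega t)$ terms coming from $\tilde{Q}^{\major,\period}_t$ and $Q^{\minor,\period}_t$ as given in \eqref{eq:periodic_component_cosine}. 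Substituting those expressions, $S^{\period}_t$ is of the form $A\sin(\omega t) + B\cos(\omega t)$ with
\begin{equation*}
A = \frac{b\phi_0}{K}\Bigl[\lambda_0\bigl(d_0 d_1^2 + d_1 e_0 e_1 + d_0 e_1^2\bigr) - \lambda d_0 e_0 e_1\Bigr],\quad
B = \frac{b\phi_0}{K}\Bigl[-\lambda_0 e_0 e_1^2 + \lambda e_0(d_0 d_1 + e_0 e_1)\Bigr],
\end{equation*}
so that $\mathcal{A}[S^{\period}] = \sqrt{A^2+B^2}$. On the benchmark side, \eqref{eq:major_rate_lambda_0} gives $Q^{\major,\period,\nogame}_t = b(\phi_0/d_0)\sin(\omega t)$ (its periodic part), hence $S^{\period,\nogame}_t = \lambda_0 b(\phi_0/d_0)\sin(\omega t)$ and $\mathcal{A}[S^{\period,\nogame}] = \lambda_0 b \phi_0/d_0$.

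The core of the argument is then the algebraic inequality $A^2 + B^2 \le (\lambda_0 b\phi_0/d_0)^2$. Using $e_1 = \frac{\lambda}{2}\omega$ and the definitions $d_0 = a_0\omega^2+\phi_0$, $d_1 = a\omega^2+\phi$, I would factor and simplify the bracketed expressions for $A$ and $B$. A cleaner route is to substitute $\lambda = 2e_1/\omega$ so that both $\lambda$ and $\lambda_0$ appear only through $e_0, e_1$; after collecting terms, $A$ and $B$ should organize into expressions with a common factor that telescopes against $K = d_0^2 d_1^2 + 2 d_0 d_1 e_0 e_1 + d_0^2 e_1^2 + e_0^2 e_1^2 = (d_0 d_1 + e_0 e_1)^2 + d_0^2 e_1^2$. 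I expect the identity to reduce, after clearing the common denominator $K^2$ and the factor $(b\phi_0)^2$, to showing that a certain polynomial in the (positive) quantities $d_0, d_1, e_0, e_1$ is nonnegative — plausibly a sum of squares or a product of manifestly positive factors — with equality exactly when $e_1 = 0$, i.e.\ $\lambda = 0$, which is the consistency check with the no-game case.

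The main obstacle is this polynomial inequality: the expressions for $A$ and $B$ are not symmetric and the cross-terms between $\lambda$ and $\lambda_0$ make a brute-force expansion messy. I would try to avoid that by an intermediate step: since Proposition~\ref{prop:periodic_trading_rate} already controls the phases and shows $\mathcal{A}[\tilde v^{\major,\period}] \le \mathcal{A}[v^{\major,*,\lambda=0}]$, and since integrating a $\cos(\omega t - \varphi)$ trading rate in time divides the amplitude by $\omega$ and shifts the phase by $\pi/2$, one gets $\mathcal{A}[\tilde Q^{\major,\period}] \le \mathcal{A}[Q^{\major,\period,\nogame}]$ directly. Then $S^{\period} = \lambda_0 \tilde Q^{\major,\period} + \lambda Q^{\minor,\period}$ and $S^{\period,\nogame} = \lambda_0 Q^{\major,\period,\nogame}$; a triangle-inequality bound $\mathcal{A}[S^{\period}] \le \lambda_0 \mathcal{A}[\tilde Q^{\major,\period}] + \lambda \mathcal{A}[Q^{\minor,\period}]$ is too lossy, so instead I would exploit the phase information from Proposition~\ref{prop:periodic_trading_rate}: the minor trader's inventory periodic component is phase-aligned (after the $\pi/2$ shift from integration) so as to be \emph{opposed} to the major trader's, meaning $\lambda_0 \tilde Q^{\major,\period}$ and $\lambda Q^{\minor,\period}$ partially cancel rather than reinforce. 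Making that cancellation quantitative — showing the resultant amplitude of the sum of two out-of-phase sinusoids, one of which is already dominated by the benchmark, stays below the benchmark — is the delicate point, and I would fall back on the explicit Fourier coefficients above if the phase argument does not close cleanly.
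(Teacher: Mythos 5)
Your primary route is the same as the paper's: write $S^{\period}_t=A\sin(\omega t)+B\cos(\omega t)$ using the explicit Fourier coefficients from \Cref{prop:periodic_component_fourier}, compute $\mathcal{A}[S^{\period,\nogame}]=\lambda_0 b\phi_0/d_0$, and reduce the claim to $A^2+B^2\le(\lambda_0 b\phi_0/d_0)^2$. Your $A$ and $B$ are correct, and so is the identification of the benchmark. The one genuine shortfall is that you stop at the decisive step: you only say you "expect" the resulting polynomial inequality to be a sum of squares or a product of positive factors, and your fallback via the phase information of \Cref{prop:periodic_trading_rate} is, as you concede, not closed (the triangle inequality is indeed too lossy, and making the cancellation quantitative is not easier than the direct computation). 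The computation does close, and far more cleanly than you fear: after substituting $\lambda_0=2e_0/\omega$ and $\lambda=2e_1/\omega$, the coefficient vector $(A,B)$ factors as $\frac{2b\phi_0}{K\omega}\,d_1e_0\cdot\bigl(d_0d_1+e_0e_1,\ d_0e_1\bigr)$, and the squared norm of the second factor is exactly $K=(d_0d_1+e_0e_1)^2+(d_0e_1)^2$. Hence
\begin{equation*}
\mathcal{A}\left[S^{\period}\right]^2=\left(\frac{2b\phi_0}{\omega}\right)^2\frac{d_1^2e_0^2}{K},\qquad \mathcal{A}\left[S^{\period,\nogame}\right]^2=\left(\frac{2b\phi_0}{\omega}\right)^2\frac{e_0^2}{d_0^2},
\end{equation*}
and the inequality is equivalent to $d_0^2d_1^2\le K=d_0^2d_1^2+2d_0d_1e_0e_1+d_0^2e_1^2+e_0^2e_1^2$, which is immediate since $d_0,d_1,e_0,e_1>0$ (with equality iff $e_1=0$, i.e.\ $\lambda=0$, matching your consistency check). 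With this factorization supplied, your argument coincides with the paper's proof.
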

\Cref{prop:periodic_aggrate_price} implies that the interaction between major and minor traders reduces the strength of periodicity in the price. The proof of Proposition~\ref{prop:periodic_aggrate_price} is deferred to Section~\ref{proof:periodic_aggrate_price}.


\section{Numerical Experiments}
\label{sec:numerical_experiments}
In this section, we present numerical experiments to demonstrate the equilibrium behavior of all traders and the marketwide impact of the Major-Minor MFG. 
For simplicity, we focus on the scenario in \Cref{sec:common_inventory} where all minor traders share the same initial inventory and the Nash equilibrium of the Major-Minor MFG is deterministic, as this is sufficient to demonstrate the effect of the periodic targeting strategy.%
\footnote{To our knowledge, no numerical methods for simulating the general mean-field FBSDE \eqref{eq:fbsde_nash} that arise in our game are yet available.} 
\Cref{sec:method_numerical} describes the numerical methods for solving the Nash equilibrium, and \Cref{sec:numerical_results} reports the numerical results  under three typical targeting strategies.

\subsection{Numerical Methods for Solving the Nash Equilibrium}
\label{sec:method_numerical}
We analyze the Nash equilibrium of the Major-Minor MFG by numerically solving the associated ODEs \eqref{eq:major_trader_ODE}, \eqref{eq:periodic_decomposition_periodic} and \eqref{eq:periodic_decomposition_trend} in two steps.  

In the first step, we use the finite difference method (see, e.g., \cite{quarteroni_numerical_2007}) to solve the boundary value problem of the ODE in \eqref{eq:major_trader_ODE}, and the specific implementation is summarized in \Cref{alg:finite_diff_ODE}.
\begin{algorithm}[!ht]
\caption{Finite Difference Method for Solving the ODE in~\eqref{eq:major_trader_ODE}}
\label{alg:finite_diff_ODE}
\small
\begin{algorithmic}
\STATE \textbf{Input:} $T$, $h$, $q^{\major}_0$, $q^{\minor}_0$.
\STATE Define $n^{\mathrm{mesh}}:=T/h$, and set $z_0 = q^{\major}_0$, $z_{n^{\mathrm{mesh}}} = 0$, $w_0 = q^{\minor}_0$, $w_{n^{\mathrm{mesh}}} = 0$.
\STATE Solve the following linear equations for $(z_1,\ldots,z_{n^{\mathrm{mesh}}-1},w_1,\ldots,w_{n^{\mathrm{mesh}}-1})$:
\begin{equation}
\begin{aligned}
a_0 \frac{z_{j+1}-2z_{j}+z_{j-1}}{h^2} - \phi_0 z_{j} + \frac{\lambda}{2}\frac{w_{j+1}-w_{j}}{h} = - \phi_0 R_{jh},\quad 
j=1,2,\ldots,n^{\mathrm{mesh}}-1,\\ 
a \frac{w_{j+1}-2w_{j}+w_{j-1}}{h^2} - \phi w_{j} + \frac{\lambda_0}{2}\frac{z_{j+1}-z_{j}}{h} + \frac{\lambda}{2}\frac{w_{j+1}-w_{j}}{h} = 0,\quad 
j=1,2,\ldots,n^{\mathrm{mesh}}-1.
\end{aligned}
\label{eq:finite_major_trader_ODE}
\end{equation}
\STATE \textbf{Output:} $\hat{Q}^{\major}_{ih} = z_i$ for $i=0,1,2,\dots, n^{\mathrm{mesh}}$ and $\hat{Q}^{\minor}_{ih} = w_i$ for $i=0,1,2,\dots, n^{\mathrm{mesh}}$.
\end{algorithmic}
\end{algorithm}

\begin{algorithm}[!ht]
\caption{Algorithm for the Periodic Solution of the ODE \eqref{eq:periodic_decomposition_periodic}}
\label{alg:periodic_ODE}
\small
\begin{algorithmic}
\STATE \textbf{Input:} $T$, $T^{\period}$, $h$, $N^{\mathrm{iter}}$.
\STATE Initialize $q^{\major,\period}_0$ and $q^{\minor,\period}_0$.
\WHILE{$i < N^{\mathrm{iter}}$}
\STATE Let $z_0 = q^{\major,\period}_0$, $z_n^{\mathrm{mesh}} = q^{\major,\period}_0$, $w_0 = q^{\minor,\period}_0$, $w_n^{\mathrm{mesh}} = q^{\minor,\period}_0$.
\STATE Solve the following linear equations on $(z_1,\ldots,z_{n^{\mathrm{mesh}}-1},w_1,\ldots,w_{n^{\mathrm{mesh}}-1})$:
\begin{equation}
\begin{aligned}
a_0 \frac{z_{j+1}-2z_{j}+z_{j-1}}{h^2} - \phi_0 z_{j} + \frac{\lambda}{2}\frac{w_{j+1}-w_{j}}{h} = - \phi_0 \tilde{R}_{jh},\quad 
j=1,2,\ldots,n^{\mathrm{mesh}}-1,\\ 
a \frac{w_{j+1}-2w_{j}+w_{j-1}}{h^2} - \phi w_{j} + \frac{\lambda_0}{2}\frac{z_{j+1}-z_{j}}{h} + \frac{\lambda}{2}\frac{w_{j+1}-w_{j}}{h} = 0,\quad 
j=1,2,\ldots,n^{\mathrm{mesh}}-1.
\end{aligned}
\end{equation}
\STATE Update $q^{\major,\period}_0 \gets \frac{1}{2} \left(q^{\major,\period}_0 + z_{n^{\mathrm{mesh},\period}}\right)$ and $q^{\minor,\period}_0 \gets \frac{1}{2} \left(q^{\minor,\period}_0 + w_{n^{\mathrm{mesh},\period}}\right)$.
\ENDWHILE
\STATE \textbf{Output:} $\hat{Q}^{\major}_{ih} = z_i$ for $i=0,1,2,\dots, n^{\mathrm{mesh}}$ and $\hat{Q}^{\minor}_{ih} = w_i$ for $i=0,1,2,\dots, n^{\mathrm{mesh}}$.
\end{algorithmic}
\end{algorithm}

In the second step, we numerically decompose the Nash equilibrium into its periodic and trend components according to \Cref{thm:major_trader_decomposition}. In particular, for the periodic component, we use \Cref{alg:periodic_ODE} to find the periodic solution of the ODE \eqref{eq:periodic_decomposition_periodic}, and the key idea is to iteratively update the initial values to ensure the periodicity of the solution. For the trend component, we can  obtain the numerical solution of the ODE in~\eqref{eq:periodic_decomposition_trend} also using \Cref{alg:finite_diff_ODE}, but with the linear equations \eqref{eq:finite_major_trader_ODE} replaced by the following ones:
\begin{equation*}
\begin{aligned}
& a_0 \frac{z_{j+1}-2z_{j}+z_{j-1}}{h^2} - \phi_0 z_{j} + \frac{\lambda}{2}\frac{w_{j+1}-w_{j}}{h} = 0,\quad 
& j=1,2,\ldots,n^{\mathrm{mesh}}-1,\\ 
& a \frac{w_{j+1}-2w_{j}+w_{j-1}}{h^2} - \phi w_{j} + \frac{\lambda_0}{2}\frac{z_{j+1}-z_{j}}{h} + \frac{\lambda}{2}\frac{w_{j+1}-w_{j}}{h} =  -\frac{\lambda_0 q^{0}_{0}}{2T},\quad 
& j=1,2,\ldots,n^{\mathrm{mesh}}-1.
\end{aligned}
\end{equation*}

\subsection{Results}
\label{sec:numerical_results}
We present the numerical results of the Major-Minor MFG under three targeting strategies: (a) the cosine strategy; (b) the TWAP strategy; and (c) the VWAP strategy.

Throughout the numerical experiments, we use the following parameter set-up: $T=10$, $q^{\major}_{0}=10$, $q^{\minor}_0=0$, $\phi_0=0.1$, $\phi=0.01$, $a_0=a=0.001$, $\lambda_0 = 0.01$, and $\lambda=0.005$. Here, we set the initial inventory of the representative minor trader to be $q^{\minor}_0=0$, reflecting a scenario where minor traders have no net buying or selling incentives. In addition, the risk penalty parameter for the major trader $\phi_0$ is larger than that of the minor traders $\phi$. Lastly, the permanent impact of the major trader $\lambda_0$ exceeds that of the minor traders $\lambda$, highlighting the major trader's greater influence on the asset's price.

\subsubsection{Case I: Cosine Targeting Strategy} \label{sec:numerical_cos}
We first consider the case where the targeting strategy $R$ is a cosine strategy with $10$ periods (see \Cref{eg:cos_target}):
\begin{equation}
R_t = Q^{\cos,10,q^{\major}_0,1/(2\pi)}_t := q^{\major}_0 \left(1 - \frac{t}{10}\right) + \frac{1}{2\pi} \sin\left(2\pi t\right),\quad t \in [0,10].    
\label{eq:targeting_trading_rate_cos}
\end{equation}
\Cref{fig:targeting_strategy_cos} shows the trading rates of the cosine targeting strategy \eqref{eq:targeting_trading_rate_cos}, whose speed of selling is the slowest at the edges of each period ($t=0,1,\dots,10$) and the fastest in the middle of each period ($t=0.5,1.5,\dots,9.5$).
\begin{figure}[!ht]
\centering
\includegraphics[width=.8\textwidth]{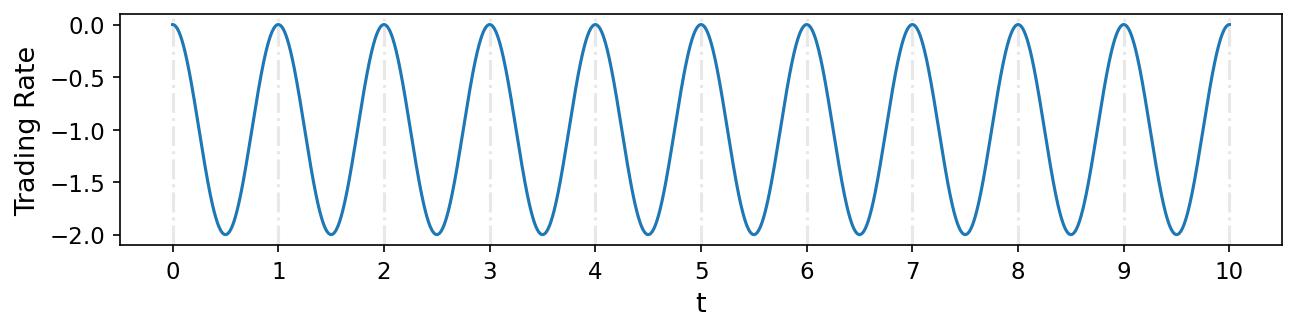}
\caption{The trading rates of the cosine targeting strategy \eqref{eq:targeting_trading_rate_cos}.}
\label{fig:targeting_strategy_cos}
\end{figure}

\paragraph{Strategy of Major and Minor Traders.}
\Cref{fig:major_trader_cos} summarizes the equilibrium trading rates of the major and minor traders in the Major-Minor MFG under the cosine targeting strategy \eqref{eq:targeting_trading_rate_cos}.
\begin{figure}[!ht]
\centering
\subfloat[Trading rates.]{
\includegraphics[width=.8\textwidth]{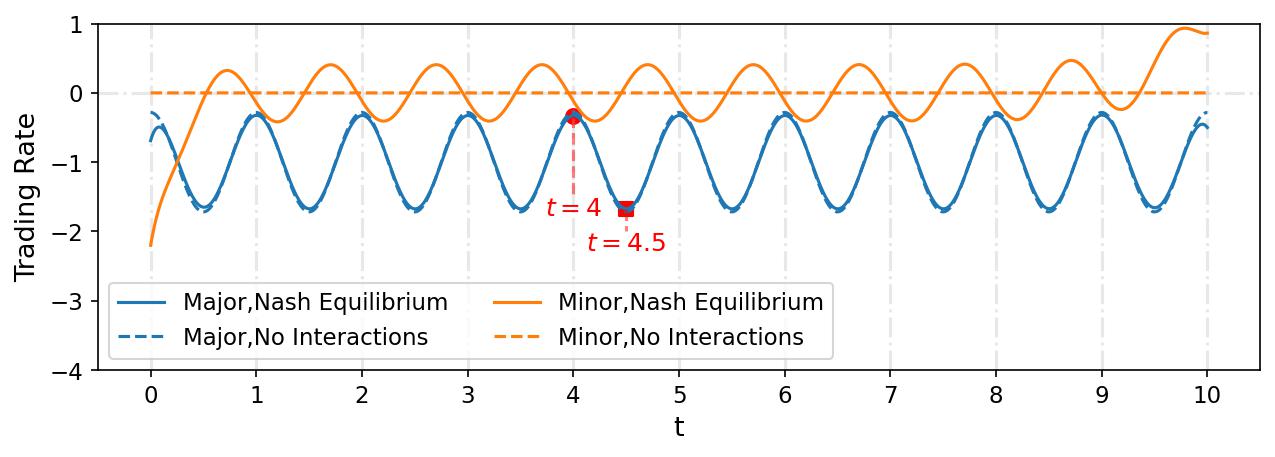}
\label{fig:trading_rate_cos}
}
\\
\subfloat[Difference between the trading rates with and without interactions.]{
\includegraphics[width=.8\textwidth]{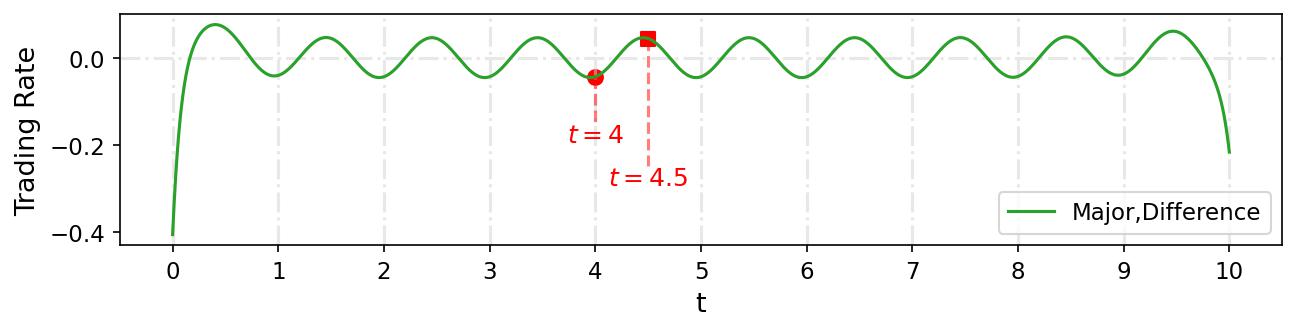}
\label{fig:trading_rate_diff_cos}
}
\\
\subfloat[Periodic components of trading rates (with interactions).]{
\includegraphics[width=.8\textwidth]{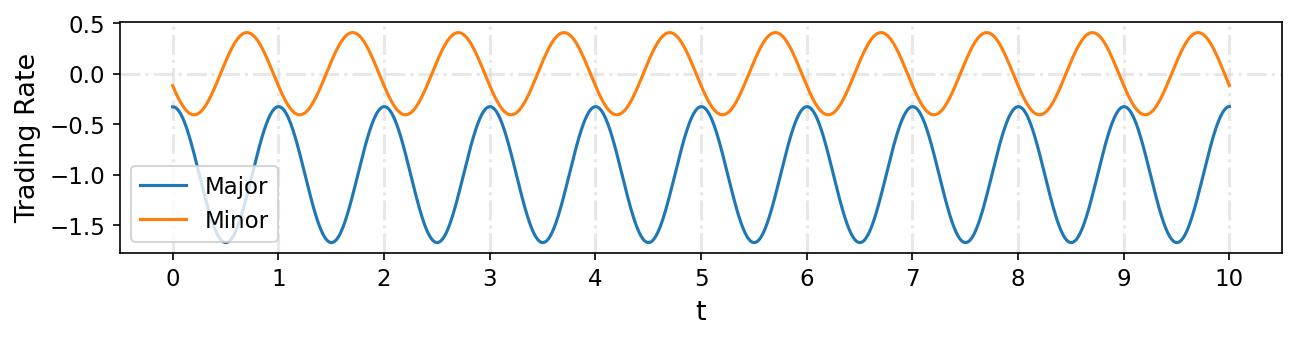}
\label{fig:trading_rate_period_cos}
}
\\
\subfloat[Trend components of trading rates  (with interactions).]{
\includegraphics[width=.8\textwidth]{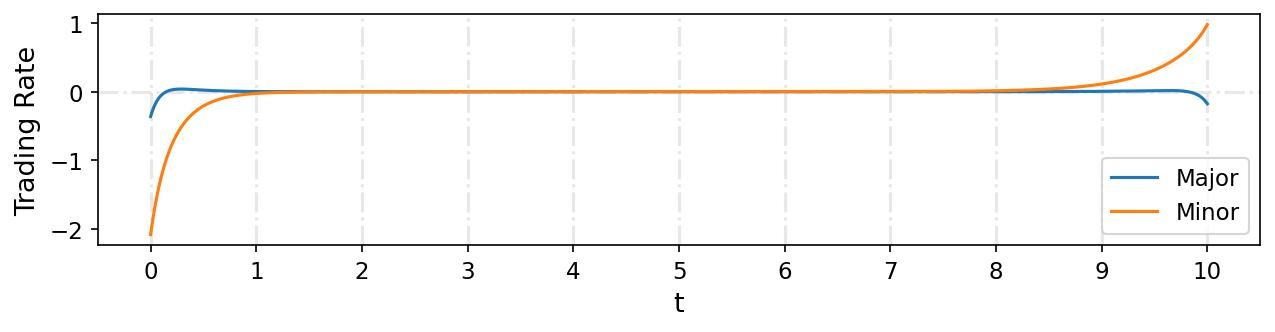}
\label{fig:trading_rate_trend_cos}
}
\caption{The Nash equilibrium strategies under the cosine targeting strategy \eqref{eq:targeting_trading_rate_cos}, compared to optimal strategies without interactions in \Cref{sec:no_game_case}.
}%
\label{fig:major_trader_cos}
\end{figure}

First, \Cref{fig:trading_rate_cos} compares the strategies of major and minor traders with and without interactions. In the absence of interactions, the major trader adopts a periodic trading strategy similar to the cosine targeting strategy in \Cref{fig:targeting_strategy_cos}, and the representative minor trader does not trade, which are both consistent with \Cref{prop:major_periodicity_no_game}. However, in Nash equilibrium, the representative minor trader also trades with periodic patterns as a result of interactions with the major trader, which reflects the equilibrium behavior in Theorem~\ref{thm:major_trader_decomposition}.

In fact, the major trader reduces the strength of her periodic behavior in response to strategic interactions from the minor traders.
\Cref{fig:trading_rate_diff_cos} highlights the difference in trading rates of the major trader with and without interactions. In Nash equilibrium, the major trader sells faster at the edges of each period ($t=4$, for example) and slower in the middle of each period ($t=4.5$, for example), compared to the case without interactions. This confirms the theoretical results in Proposition~\ref{prop:periodic_trading_rate}-2. Although the magnitude is small, this difference offsets the predictable periodic trading pattern of the major trader in \Cref{fig:trading_rate_cos}.

Moreover, the minor traders show front-running behavior in equilibrium, which is most clearly demonstrated by decomposing the Nash equilibrium strategy into the periodic component in \Cref{fig:trading_rate_period_cos} and the trend component in \Cref{fig:trading_rate_trend_cos}, based on \Cref{thm:major_trader_decomposition}. \Cref{fig:trading_rate_period_cos} shows that the periodic component of the minor traders reach its peak {\it before} the major trader, implying that the minor traders exhibit front-running behavior by taking advantage of the periodic patterns of the major trader in equilibrium. \Cref{fig:trading_rate_trend_cos} reveals that minor traders sell the asset more quickly at the beginning of the trading period, a tactic to capitalize on the major trader's liquidation, which is another form of {\it front-running}. In response, the major trader also accelerates her speed of selling early to mitigate losses from interactions with the minor traders.

\paragraph{Cost of Major and Minor Traders.} 
All major and minor traders incur lower costs in the Nash equilibrium compared to the scenario without interactions. 
To understand the different contributing factors to costs, \Cref{tab:cost_functional_cos} breaks down the cost of both parties into different components as in \eqref{eq:cost_functional_minor} and \eqref{eq:cost_functional_major}, and compares the case with and without interactions. 

\begin{table}[!ht]
\centering
\caption{Comparison of costs for the major trader \eqref{eq:cost_functional_major} and the representative minor trader \eqref{eq:cost_functional_minor}  in the Nash equilibrium  under a cosine targeting strategy \eqref{eq:targeting_trading_rate_cos} versus the case without interactions. 
}
\begin{threeparttable}
\begin{tabularx}{0.85\textwidth}{X|YY}
\toprule
      & Nash Equilibrium & No Interactions \\
\midrule
\multicolumn{1}{c|}{\textbf{Major Trader's Cost}$^1$} & \textbf{0.0130} & \textbf{0.0136} \\
$\bullet$ Expected Profit of $Q^{\major}$ & -0.0368 & -0.0126 \\
$\bullet$ Expected Profit of $R$ & -0.0253 & 0.0000 \\
$\bullet$ Risk Penalty & 0.0014 & 0.0010 \\
\midrule
\multicolumn{1}{c|}{\textbf{Minor Trader's Cost}$^2$} & \textbf{-0.0246} & \textbf{0.0000} \\
$\bullet$ Expected Profit of $Q^{\minor}$ & 0.0475 & 0.0000 \\
$\bullet$ Risk Penalty & 0.0229 & 0.0000 \\
\bottomrule
\end{tabularx}%
\begin{tablenotes}
\item[1] {\footnotesize Major Trader's Cost $= -$ Expected Profit of $Q^{\major}$ $+$ Expected Profit of $R$ $+$ Risk Penalty.}
\item[2] {\footnotesize Minor Traders' Cost $= -$ Expected Profit of $Q^{\minor}$ $+$ Risk Penalty.}
\end{tablenotes}
\end{threeparttable}
\label{tab:cost_functional_cos}%
\end{table}%

First, the interactions among the major and minor traders decrease the expected profit of the major trader but increase that of the minor traders, which is consistent with the implications from \possessivecite{brunnermeier_predatory_2005} model of predator tradings. {Importantly, the major trader's cost is reduced despite a decrease in her profit. This occurs because the expected profit of her targeting strategy decreases more significantly than that of her actual strategy.}

\paragraph{Marketwide Impact.}
Finally, we consider the aggregate market impact of the Major-Minor MFG as discussed in Section~\ref{sec:period_whole_market}. 

First, the Major-Minor MFG accelerates the aggregate trading speed and reduces the equilibrium price in the market, summarized in \Cref{fig:total_rate_price_cos}. \Cref{fig:total_trading_rate_cos} shows that the aggregate speed of trading is accelerated at the beginning of the period with interactions among major and minor traders. Moreover, the aggregate trading rate attains its maximum in each period earlier in the Nash equilibrium compared to the case without interactions.
\Cref{fig:fundamental_price_cos} shows that the equilibrium price is always lower than the price without interactions, which reflects the accelerated selling speed in the market and adversely affects the liquidation profit of the major trader.
\begin{figure}[!ht]
\centering
\subfloat[Aggregate trading rates.]{
\includegraphics[width=.8\textwidth]{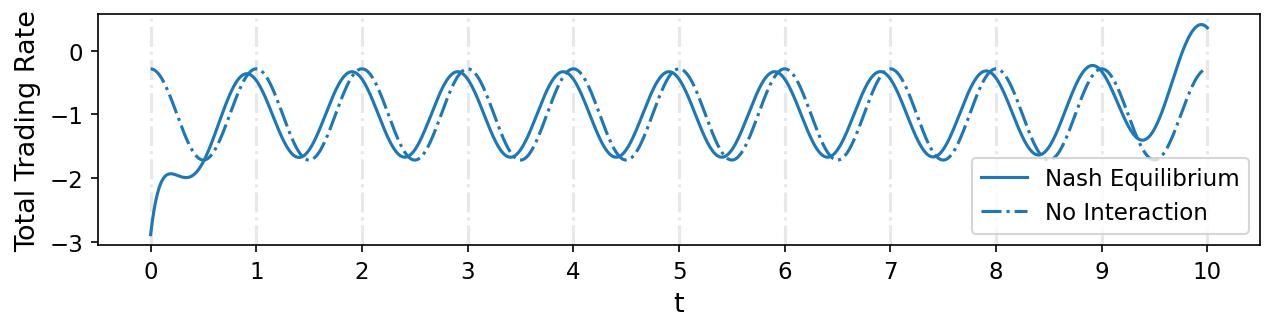}
\label{fig:total_trading_rate_cos}
}
\\
\subfloat[Price.]{
\includegraphics[width=.8\textwidth]{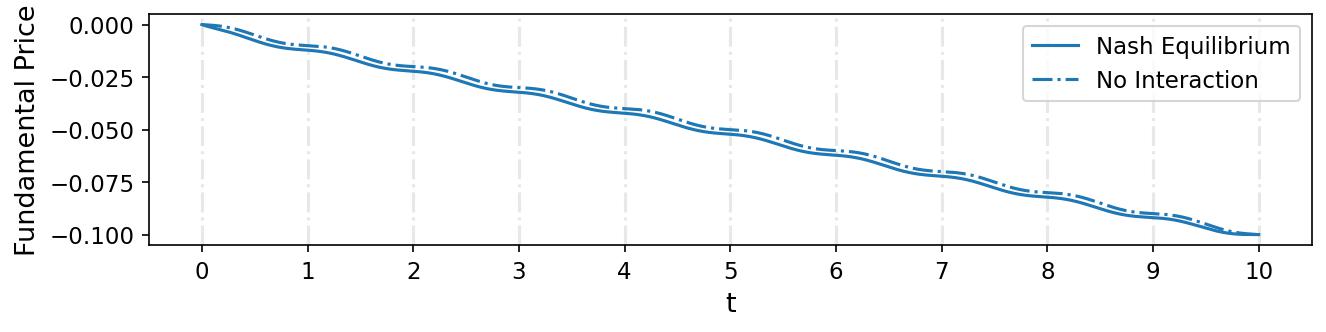}
\label{fig:fundamental_price_cos}
}
\caption{The aggregate trading rates and the price under the Nash equilibrium  with the cosine targeting strategy \eqref{eq:targeting_trading_rate_cos}.
}   
\label{fig:total_rate_price_cos}
\end{figure}

Second, the interactions among the major and minor traders reduce the strength of periodicity in not only the aggregate trading rate, but also the price, and the latter is consistent with Proposition~\ref{prop:periodic_aggrate_price}. In particular, \Cref{tab:amplitude_cos} shows that the amplitudes of the periodic components in both the aggregate trading rates and the price---as defined in \Cref{sec:period_whole_market}---decrease for all traders. In this sense, the interactions between both parties serve as a smoothing mechanism for the aggregate market.
\begin{table}[!ht]
\centering
\caption{Amplitudes of the periodic components in the aggregate trading rate and the price with the cosine targeting strategy \eqref{eq:targeting_trading_rate_cos}.}
\label{tab:amplitude_cos}%
\begin{tabular}{c|cc}
\toprule
      & Nash Equilibrium & No Interactions \\
\midrule
Aggregate Trading Rate & 0.672341 & 0.716953 \\
Price & 0.001020 & 0.001141 \\
\bottomrule
\end{tabular}%
\end{table}%

\subsubsection{Case II: TWAP Strategy} \label{sec:numerical_twap}
Next, we consider the case where the targeting strategy, $R$, is a TWAP strategy as in \Cref{eg:piecewise_const_target} with 10 periods, that is,
\begin{equation}
R_t = Q^{\mathrm{TWAP},q^{\major}_0,10}_t = \begin{cases}
\left(1 - \frac{k}{10}\right) q^{\major}_0,  & t = k,\, k=0,1,\dots,10,\\
\left(1 - \frac{2k - 1}{20}\right) q^{\major}_0,  & t \in (k-1,k), \, k=1,2,\dots,10.\\
\end{cases}
\label{eq:targeting_trading_rate_twap}
\end{equation}
The TWAP targeting strategy only trades at timestamps $t=0,1,2,\dots,10$, and \Cref{fig:targeting_strategy_twap} shows its inventory process. The trading rates of the TWAP strategy are not well-defined due to the jumps in its inventory process.
\begin{figure}[!ht]
\centering
\includegraphics[width=.8\textwidth]{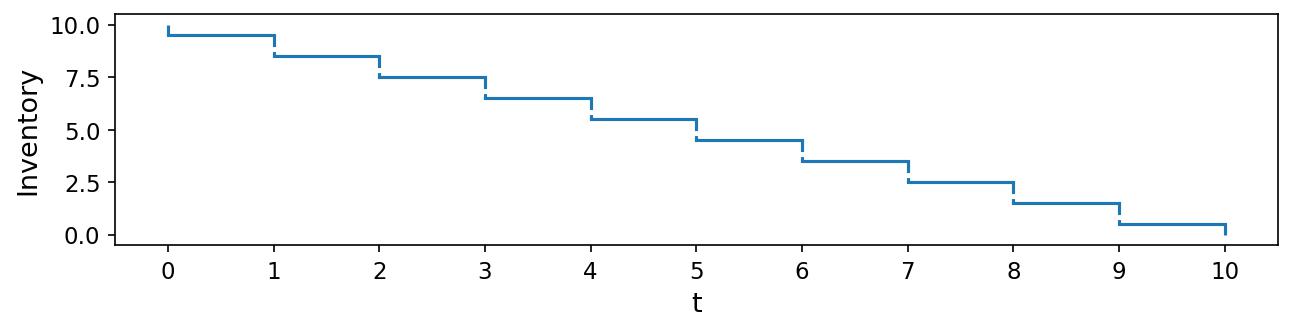}
\caption{The inventory of the TWAP targeting strategy \eqref{eq:targeting_trading_rate_twap}.}
\label{fig:targeting_strategy_twap}
\end{figure}

\paragraph{Strategy of Major and Minor Traders.}
Following the analysis in Section~\ref{sec:numerical_cos}, \Cref{fig:major_trader_twap} summarizes the trading rates of the major trader and the representative minor trader under the TWAP targeting strategy \eqref{eq:targeting_trading_rate_twap} with and without interactions. It also presents the periodic and trend components of the Nash equilibrium. Overall, we observe similar behaviors to those in \Cref{sec:numerical_cos}.

\begin{figure}[!ht]
\centering
\subfloat[Trading rates.]{
\includegraphics[width=.8\textwidth]{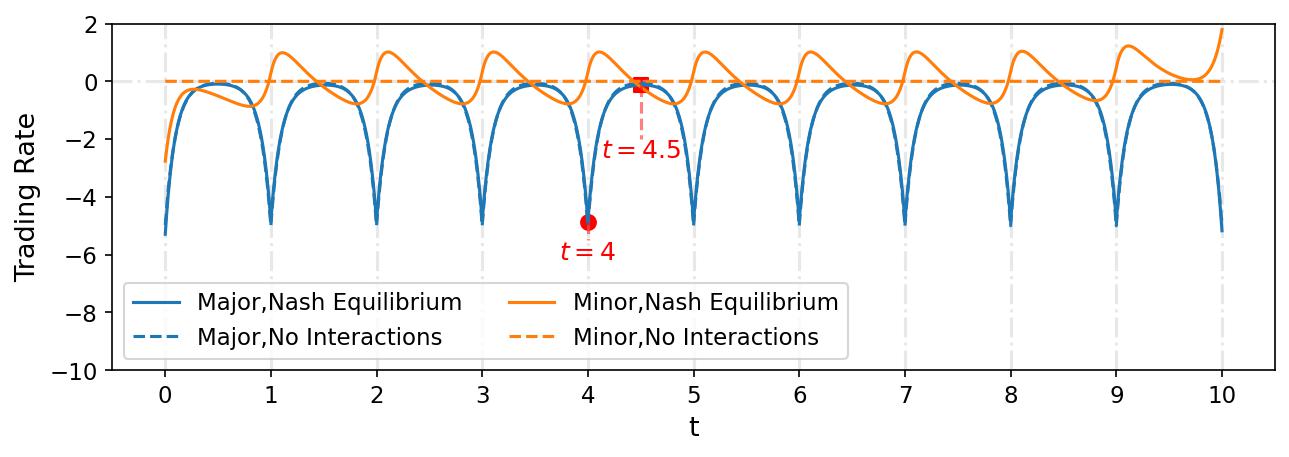}
\label{fig:trading_rate_twap}
}
\\
\subfloat[Difference between the trading rates with and without interactions.]{
\includegraphics[width=.8\textwidth]{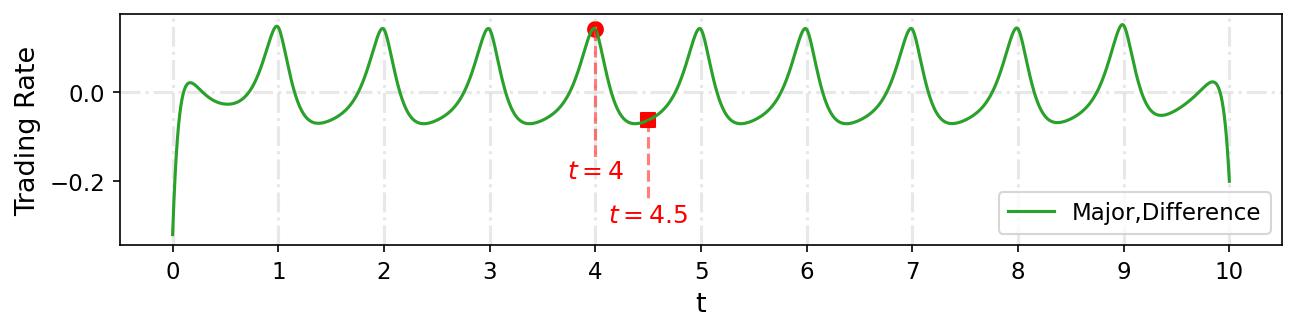}
\label{fig:trading_rate_diff_twap}
}
\\
\subfloat[Periodic components of trading rates (with interactions).]{
\includegraphics[width=.8\textwidth]{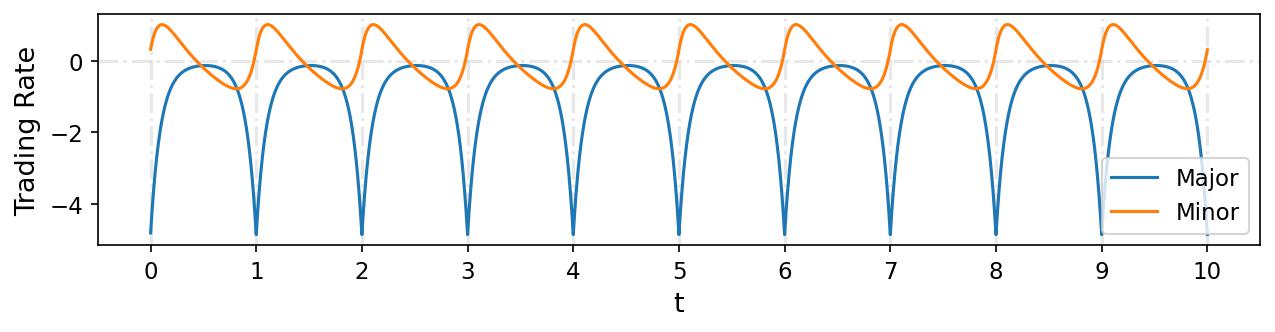}
\label{fig:trading_rate_period_twap}
}
\\
\subfloat[Trend components of trading rates (with interactions).]{
\includegraphics[width=.8\textwidth]{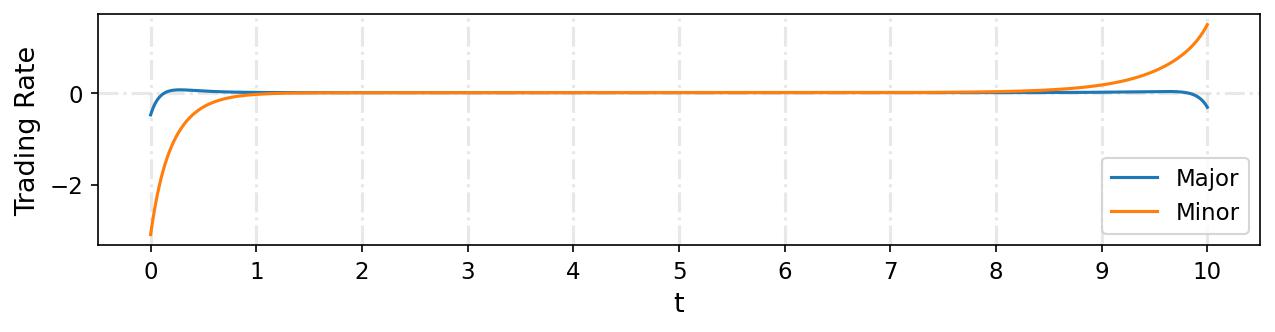}
\label{fig:trading_rate_trend_twap}
}
\caption{The Nash equilibrium strategies with the TWAP targeting strategy \eqref{eq:targeting_trading_rate_twap}, compared to optimal strategies without interactions in \Cref{sec:no_game_case}.
}
\label{fig:major_trader_twap}
\end{figure}

First, the minor traders show front-running behaviors in both the periodic and trend components in equilibrium. As shown in \Cref{fig:trading_rate_period_twap}, the periodic component of the representative minor trader's trading rate reaches its peak before that of the major trader. Moreover, \Cref{fig:trading_rate_trend_twap} shows that both parties accelerate the selling speed at the beginning of the trading interval, which is similar to the patterns observed under the cosine targeting strategy in \Cref{fig:trading_rate_trend_cos}.

In response to strategic actions by minor traders, the major trader reduces the strength of the periodicity in her strategy, as shown in Figures \ref{fig:trading_rate_twap}--\ref{fig:trading_rate_diff_twap}. In particular, \Cref{fig:trading_rate_diff_twap} shows that, compared to the case without interactions, the major trader sells slower at the edges of each period (for example, at $t=4$) and faster in the middle of each period (for example, at $t=4.5$). This can be interpreted as the major trader's response to {\it reduce her predictable periodic trading patterns} to protect herself from the minor traders.

\paragraph{Cost of Major and Minor Traders.} 
Similar to the results in \Cref{sec:numerical_cos}, both parties reduce their costs in the Major-Minor MFG under the TWAP targeting strategy, compared to the case without interactions. As shown in \Cref{tab:cost_functional_twap}, the major trader experiences a reduction in the expected profit due to strategic interactions with the minor traders, who, in contrast, gain from these interactions. However, after factoring in the changes in the cost of the targeting strategy, the overall costs of both parties are reduced in the Major-Minor MFG, compared to the case without interactions.

\begin{table}[!ht]
\centering
\caption{Comparison of costs for the major trader \eqref{eq:cost_functional_major} and the representative minor trader \eqref{eq:cost_functional_minor} in the Nash equilibrium under the TWAP targeting strategy \eqref{eq:targeting_trading_rate_twap} versus the case without interactions. 
}
\begin{threeparttable}
\begin{tabularx}{0.85\textwidth}{X|YY}
    \toprule
          & Nash Equilibrium & No Interactions \\
    \midrule
    \multicolumn{1}{c|}{\textbf{Major Trader's Cost}$^1$} & \textbf{0.0477} & \textbf{0.0500} \\
    $\bullet$ Expected Profit of $Q^{\major}$ & -0.0504 & -0.0250 \\
    $\bullet$ Expected Profit of $R$ & -0.0290 & 0.0000 \\
    $\bullet$ Risk Penalty & 0.0264 & 0.0250 \\
   \midrule
    \multicolumn{1}{c|}{\textbf{Minor Trader's Cost}$^2$} & \textbf{-0.0266} & \textbf{0.0000} \\
    $\bullet$ Expected Profit of $Q^{\minor}$ & 0.0491 & 0.0000 \\
    $\bullet$ Risk Penalty & 0.0225 & 0.0000 \\
    \bottomrule
\end{tabularx}%
\begin{tablenotes}
\item[1] {\footnotesize Major Trader's Cost $= -$ Expected Profit of $Q^{\major}$ $+$ Expected Profit of $R$ $+$ Risk Penalty.}
\item[2] {\footnotesize Minor Trader's Cost $= -$ Expected Profit of $Q^{\minor}$ $+$ Risk Penalty.}
\end{tablenotes}
\end{threeparttable}
\label{tab:cost_functional_twap}%
\end{table}%

\paragraph{Marketwide Impact.}
Finally, the Major-Minor MFG increases the total trading volume at the beginning of the trading interval, and reduces the price in the market, which is consistent with the findings in \Cref{sec:numerical_cos}. In particular, \Cref{fig:total_trading_rate_twap} shows that the interactions between both parties increase market-wide trading activities at the start of trading. Moreover, the highest trading rate is achieved earlier in each period under the Nash equilibrium than in the scenario without interactions. 
\Cref{fig:fundamental_price_twap} indicates that the equilibrium price remains consistently lower than the price without interactions, which negatively affects the liquidation profit of the major trader.

\begin{figure}[!ht]
\centering
\subfloat[Aggregate trading rates.]{
\includegraphics[width=.8\textwidth]{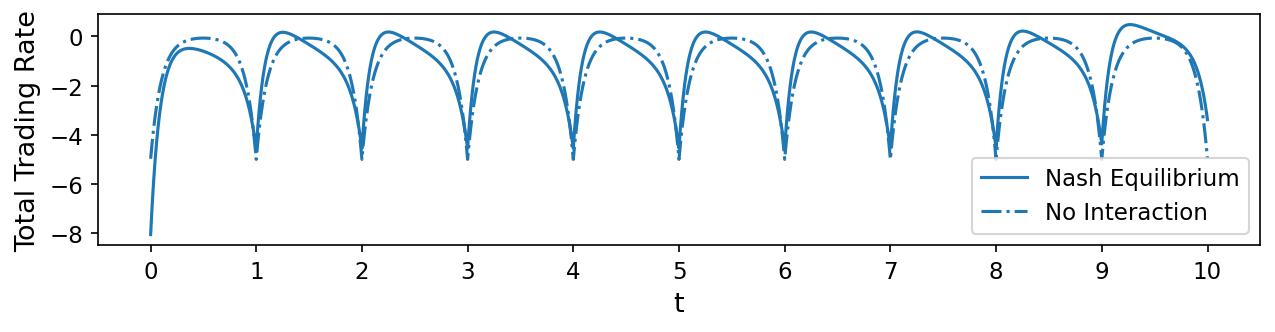}
\label{fig:total_trading_rate_twap}
}
\\
\subfloat[Price.]{
\includegraphics[width=.8\textwidth]{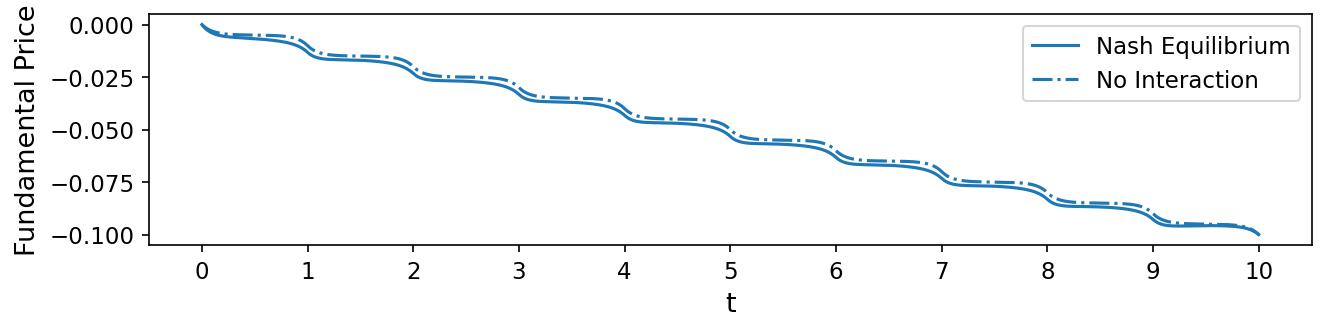}
\label{fig:fundamental_price_twap}
}
\caption{The aggregate trading rates and the price under the Nash equilibrium with the TWAP targeting strategy \eqref{eq:targeting_trading_rate_twap}. 
}%
\label{fig:total_rate_price_twap}%
\end{figure}

Furthermore, \Cref{tab:amplitude_twap} calculates the amplitudes of the periodic components in both the aggregate trading rate and the price. The interactions among the major and minor traders reduce the periodicity in both quantities.

\begin{table}[!ht]
\centering
\caption{Amplitudes of the periodic components in the aggregate trading rate and the price, with the TWAP targeting strategy \eqref{eq:targeting_trading_rate_twap}.}
\label{tab:amplitude_twap}%
\begin{tabular}{c|cc}
\toprule
      & Nash Equilibrium & No Interactions \\
\midrule
Aggregate Trading Rate & 2.364551 & 2.466589 \\
Price & 0.230056 & 0.239414 \\
\bottomrule
\end{tabular}%
\end{table}%

\subsubsection{Case III: VWAP Strategy}
\label{sec:numerical_vwap}
Finally, we consider the following VWAP targeting strategy with a periodic component whose trading rates are described by:
\begin{equation}
v_t = \begin{cases}
- \frac{15}{370} (t-7)^2 - 0.5 + 0.75 \cos(4\pi t), & t\in[0,3),\\     
- \frac{15}{370} (t-7)^2 - 0.5 + 0.5 \cos(2\pi t), & t\in[3,10].\\ 
\end{cases}
\label{eq:targeting_trading_rate_vwap}
\end{equation}
\Cref{fig:targeting_strategy_vwap} visualizes the trading rates of the VWAP strategy \eqref{eq:targeting_trading_rate_vwap}. The trend of the VWAP strategy forms a U-shaped curve, achieving its peak at $t=7$. Moreover, the trading rate exhibits a periodicity with period $1/2$ for $t\in[0,3)$ and a periodicity with period $1$ for $t\in[3,10]$. 

It has been documented that the intraday trading volume, beyond the well-known U-shaped pattern \citep{admati1988theory,jain1988dependence}, contains periodic patterns in frequencies of hours, minutes, and even seconds \citep{heston2010intraday,hasbrouck2013low,wu2022spectral}.
In particular, the VWAP targeting strategy \eqref{eq:targeting_trading_rate_vwap} mimics the main characteristics of periodic intraday trading volumes in the stock market recently documented in \cite{wu2022spectral}.

\begin{figure}[!ht]
\centering
\includegraphics[width=.8\textwidth]{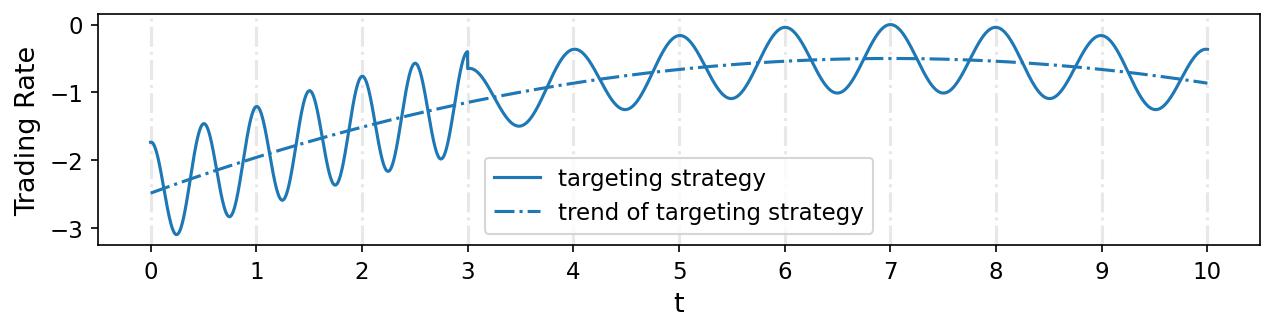}
\caption{The trading rates of the VWAP strategy \eqref{eq:targeting_trading_rate_vwap}.}
\label{fig:targeting_strategy_vwap}
\end{figure}

\paragraph{Strategy of Major and Minor Traders.}
Following the numerical analysis in Sections~\ref{sec:numerical_cos}--\ref{sec:numerical_twap}, \Cref{fig:major_trader_vwap} shows the strategies of major and minor traders with and without interactions under the VWAP targeting strategy \eqref{eq:targeting_trading_rate_vwap}. Although the VWAP strategy \eqref{eq:targeting_trading_rate_vwap} does not follow the exact form of periodic strategies in \Cref{defn:periodic_strategy}, the behaviors of both the major and the representative minor trader are still similar to those in Sections~\ref{sec:numerical_cos}--\ref{sec:numerical_twap}.

\begin{figure}[!ht]
\centering
\subfloat[Trading rates.]{
\includegraphics[width=.8\textwidth]{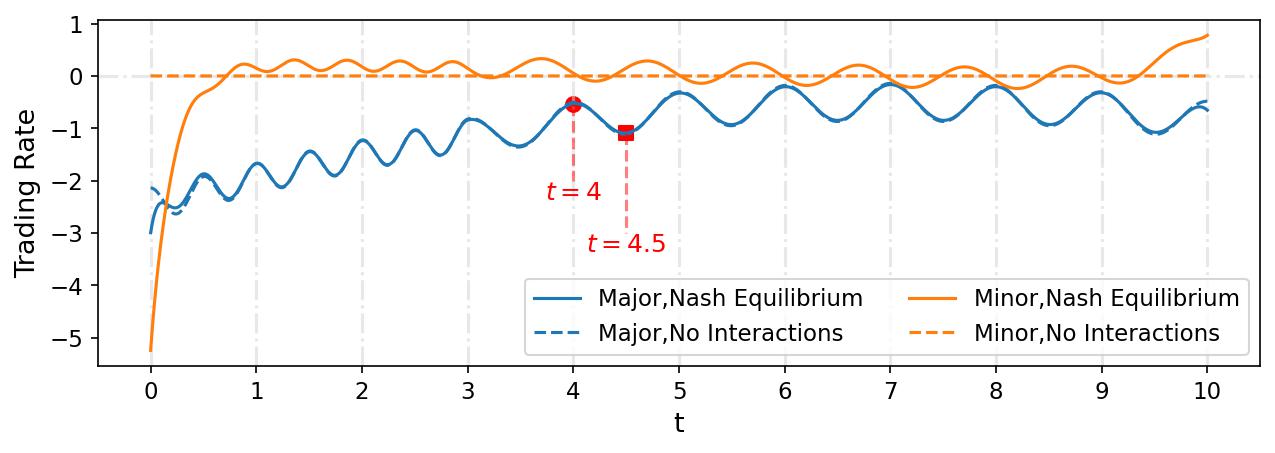}
\label{fig:trading_rate_vwap}
}
\\
\subfloat[Difference between the trading rates with and without interactions.]{
\includegraphics[width=.8\textwidth]{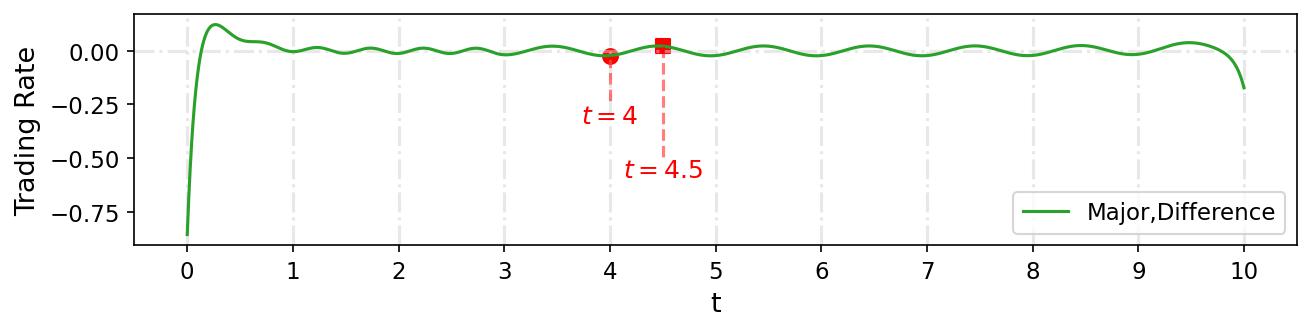}
\label{fig:trading_rate_diff_vwap}
}
\caption{The Nash equilibrium strategies under the VWAP targeting strategy \eqref{eq:targeting_trading_rate_vwap}, compared to the optimal strategies without interactions in \Cref{sec:no_game_case}.
}
\label{fig:major_trader_vwap}
\end{figure}

First, the minor traders show front-running behavior in the beginning and mirrors the periodic behavior of the major trader.
\Cref{fig:trading_rate_vwap} shows that, without interactions, the major trader adopts a strategy with a U-shaped trend and a varying periodic component in $[0,3]$ and $[3,10]$, which is similar to the VWAP targeting strategy. In this case, the representative minor trader does not trade at all. When interactions are taken into account in the equilibrium, the representative minor trader accelerates his selling rate in the beginning, which is a front-running behavior, and trades periodically against the periodic behavior of the major trader. 

Moreover, the major trader again reduces the periodicity in her strategy in the equilibrium in response to the behavior of the minor traders. \Cref{fig:trading_rate_diff_vwap} shows that, compared to the scenario without interactions, the major trader tends to sell faster at the start and the end of each period (e.g., at $t=4$) and slower in the middle of each period (e.g., at $t=4.5$). These findings extend the result of Proposition~\ref{prop:periodic_trading_rate} to the case with the VWAP targeting strategy here.

\paragraph{Cost of Major and Minor Traders.}
Similar to the results in Sections \ref{sec:numerical_cos}--\ref{sec:numerical_twap}, both parties reduce their costs under the VWAP targeting strategy, as shown in \Cref{tab:cost_functional_twap}.

\begin{table}[!ht]
\centering
\caption{Comparison of costs for the major trader \eqref{eq:cost_functional_major} and the representative minor trader \eqref{eq:cost_functional_minor} in the Nash equilibrium under the VWAP targeting strategy \eqref{eq:targeting_trading_rate_vwap} versus the case without interactions. 
}
\begin{threeparttable}
\begin{tabularx}{0.85\textwidth}{X|YY}
\toprule
      & Nash Equilibrium & No Interactions \\
\midrule
\multicolumn{1}{c|}{\textbf{Major Trader's Cost}$^1$} & \textbf{0.0138} & \textbf{0.0140} \\
$\bullet$ Expected Profit of $Q^{\major}$ & -0.0412 & -0.0137 \\
$\bullet$ Expected Profit of $R$ & -0.0279 & 0.0000 \\
$\bullet$ Risk Penalty & 0.0005 & 0.0004 \\
\midrule
\multicolumn{1}{c|}{\textbf{Minor Trader's Cost}$^2$} & \textbf{-0.0276} & \textbf{0.0000} \\
$\bullet$ Expected Profit of $Q^{\minor}$ & 0.0521 & 0.0000 \\
$\bullet$ Risk Penalty & 0.0246 & 0.0000 \\
\bottomrule
\end{tabularx}%
\begin{tablenotes}
\item[1] {\footnotesize Major Trader's Cost $= -$ Expected Profit of $Q^{\major}$ $+$ Expected Profit of $R$ $+$ Risk Penalty.}
\item[2] {\footnotesize Minor Trader's Cost $= -$ Expected Profit of $Q^{\minor}$ $+$ Risk Penalty.}
\end{tablenotes}
\end{threeparttable}
\label{tab:cost_functional_vwap}%
\end{table}%

\paragraph{Marketwide Impact.}
\Cref{fig:total_rate_price_vwap} shows the marketwide impact on the aggregate trading rate and the price. In particular, \Cref{fig:total_trading_rate_vwap} shows that the interactions among the the major and minor traders accelerate the aggregate trading speed in the market at the start of the day. \Cref{fig:fundamental_price_vwap} shows that the price in the Nash equilibrium is always lower than in the case without interactions, which adversely affects the liquidation profit of the major trader.

\begin{figure}[!ht]
\centering
\subfloat[Aggregate trading rates.]{
\includegraphics[width=.8\textwidth]{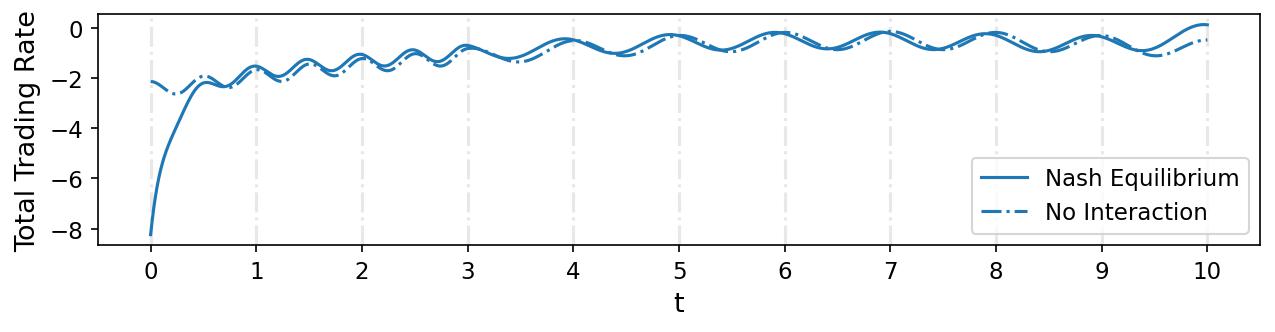}
\label{fig:total_trading_rate_vwap}
}
\\
\subfloat[Price.]{
\includegraphics[width=.8\textwidth]{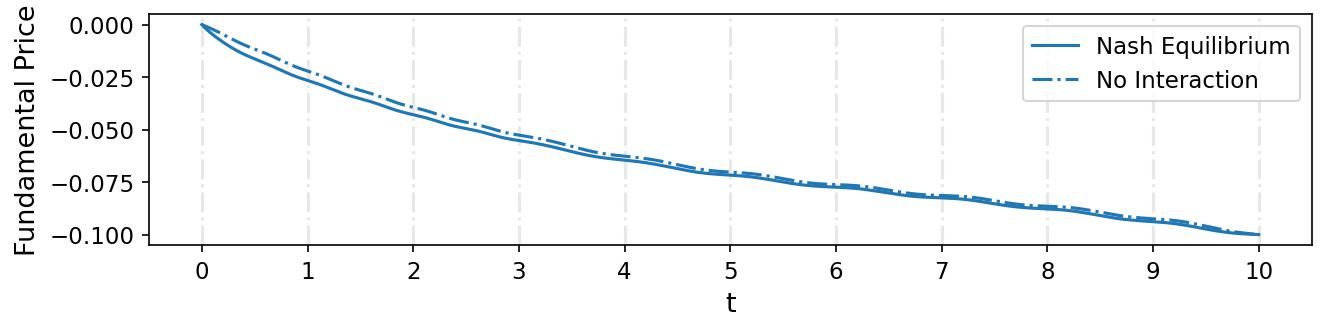}
\label{fig:fundamental_price_vwap}
}
\caption{The aggregate trading rates and the price under the Nash equilibrium with the VWAP targeting strategy \eqref{eq:targeting_trading_rate_vwap}. 
}%
\label{fig:total_rate_price_vwap}%
\end{figure}

Finally, we show that, under the VWAP targeting strategy, the interactions among the major and minor traders reduce the strength of periodicity in both the aggregate trading rate and the price. 
In this case, we cannot extract the periodic components in the aggregate trading rate and the price using Theorem~\ref{thm:major_trader_decomposition}. This is because the VWAP strategy \eqref{eq:targeting_trading_rate_vwap} does not follow the exact form in \Cref{defn:periodic_strategy}. Hence, we numerically estimate these periodic components using the spectral model in \cite{wu2022spectral} and measure the periodicity strength through the amplitudes of the periodic components.
Taking the price as an example, we assume that it has the following spectral representation:
\begin{equation}
S_t = m_t + \sum_{k=1}^{n} A_k \cos\left(\frac{2\pi k}{T} t\right) + \epsilon_t, \quad t\in[0,T],
\label{eq:spectral_model}
\end{equation}
where $m_t$ is the trend term, $A_k \cos\left(\frac{2\pi k}{T} t\right)$ is the periodic component with period $T/k$ for $k=1,2,\dots,n$, and $\epsilon_t$ is the noise term. We adopt the algorithm in \citet[Section 4.2]{wu2022spectral} to estimate the amplitude of the periodic components $\{|A_k|\}_{k=1}^{n}$ with $n = T/(2h) = 5000$.

\Cref{fig:total_rate_price_vwap_amplitude} shows the estimates of $|A_k|$ for $k=1,2,\dots,50$ in the spectral models for both the aggregate trading rates and the price. 
{The strongest periodic components occur at periods $1$ and $1/2$ (that is, $k=10$ and $k=20$). In the Nash equilibrium, the amplitudes of these components are smaller than those observed without interactions, for both the aggregate trading rate and the price.} 
These results suggest that the interactions between major and minor traders reduce the strength of periodicity in the aggregate trading rate and the price, which are consistent with our findings in Sections \ref{sec:numerical_cos}--\ref{sec:numerical_twap}.

\begin{figure}[!ht]
\centering
\subfloat[Aggregate trading rates.]{
\includegraphics[width=.47\textwidth]{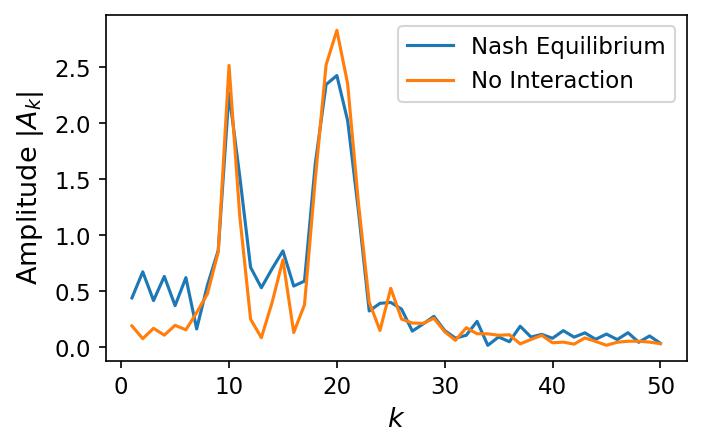}
\label{fig:total_trading_rate_vwap_amplitude}
}
\subfloat[Price.]{
\includegraphics[width=.47\textwidth]{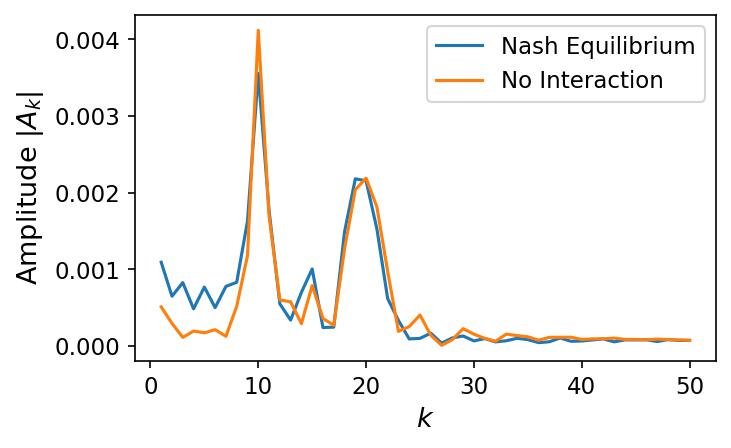}
\label{fig:fundamental_price_vwap_amplitude}
}
\caption{The frequency analysis of aggregate trading rates and the price under the Nash equilibrium under the VWAP targeting strategy \eqref{eq:targeting_trading_rate_vwap}. 
}%
\label{fig:total_rate_price_vwap_amplitude}%
\end{figure}

\subsubsection{Summary}
We summarize the key observations from our numerical experiments. 
From the investors' perspective, in equilibrium, the major trader follows the targeting strategy closely, while minor traders mimic the periodic patterns of the major trader in order to profit from it. Minor traders front-run the major trader in both the periodic and trend components of her strategy. In reaction to minor traders' behavior, the major trader reduces the strength of her periodic activities. These strategic interactions decrease (increase) the major (representative minor) trader's expected profit, which is consistent with the implications from \possessivecite{brunnermeier_predatory_2005} model of predator trading. However, the aggregate costs of both parties are reduced thanks to the decreased expected profit of the targeting strategy in equilibrium.

From a market perspective, the Major-Minor MFG increases the aggregate trading volume, accelerates the aggregate trading speed, and lowers the price in the market. Furthermore, the interactions between both parties decrease the strength of periodicity in both the aggregate trading rate and the price. 

In general, the strategic interactions among the major and minor traders not only reduce individual costs but also enhance market stability by mitigating periodic fluctuations in both the trading rates and the price. 

\section{Proofs}
\subsection{Proof of Proposition~\texorpdfstring{\ref{prop:terminal_cash_detail}}{...}}
\label{sec:proof_terminal_cash_detail}
The terminal cash of the representative minor trader can is derived from \eqref{eq:terminal_cash_minor}:
\begin{equation*}
\begin{aligned}
X^{\minor}_{T} &= - \int_0^T \bar{S}^{\minor}_{t} v^{\minor}_{t} \dif t = - \int_0^T \left(S_t + a v^{\minor}_{t}\right) v^{\minor}_{t} \dif t\\
&= - \int_0^T S_t \dif Q^{\major}_{t} - a \int_0^T \left(v^{\minor}_{t}\right)^2 \dif t = \int_0^T Q^{i}_{t} \dif F_t - a \int_0^T \left(v^{\minor}_{t}\right)^2 \dif t \\
&= \int_0^T Q^{\minor}_{t} \left(\lambda_0 v^{\major}_{t}+\lambda \bar{v}^{\minor}_t\right) \dif t  - a \int_0^T \left(v^{\minor}_{t}\right)^2 \dif t + \sigma \int_0^T Q^{\minor}_{t}  \dif W^{0}_{t}.
\end{aligned}
\end{equation*}
Similarly, the terminal cash of the major trader follows from \eqref{eq:terminal_cash_major}:
\begin{equation*}
\begin{aligned}
X^{\major}_{T} &= \int_0^T Q^{\major}_{t} \left(\lambda_0 v^{\major}_{t} + \lambda \bar{v}^{\minor}_t\right) \dif t  - a_0 \int_0^T \left(v^{\major}_{t}\right)^2 \dif t + \sigma \int_0^T Q^{\major}_{t}  \dif W^{0}_{t} \\
&=  - \frac{\lambda_0}{2}\left(q^{\major}_0\right)^{2} + \lambda \int_0^T Q^{\major}_{t}  \bar{v}^{\minor}_t \dif t  - a_0 \int_0^T \left(v^{\major}_{t}\right)^2 \dif t + \sigma \int_0^T Q^{\major}_{t}  \dif W^{0}_{t}.
\end{aligned}
\end{equation*}

\subsection{Proof of Theorem~\texorpdfstring{\ref{thm:nash_fbsde_unique}}{1}}
\label{proof:nash_fbsde_unique}
We adopt the fixed-point argument inspired by \cite{carmona_probabilistic_2018} and \cite{fu_mean_2021} to establish Theorem~\ref{thm:nash_fbsde_unique}. 

To decouple the FBSDE \eqref{eq:fbsde_nash} and handle the subtle terminal condition issue (as explained below \eqref{eq:fbsde_minor_nash_origin}), we consider the following deterministic functions on $[0,T)$:
\begin{equation}
A^{\major}_t = \frac{2\sqrt{\phi_0 a_0}}{\tanh(\sqrt{\phi_0/a_0}(T-t))} ,\quad 
A^{\minor}_t = \frac{2\sqrt{\phi a}}{\tanh(\sqrt{\phi/a}(T-t))},
\label{eq:linear_ansatz_A}
\end{equation}
which satisfy, respectively,
\begin{equation*}
\begin{aligned}
- \dif A^{\major}_t &=  \left[ - \frac{1}{2 a_0} (A^{\major}_t)^{2} + 2\phi_0 \right] \dif t ,\quad \lim_{t\nearrow T} A^{\major}_t = +\infty,\\
- \dif A^{\minor}_t &= \left[ - \frac{1}{2 a} (A^{\minor}_t)^{2} + 2\phi \right] \dif t ,\quad \lim_{t\nearrow T} A^{\minor}_t = +\infty.
\end{aligned}
\end{equation*}
With $A^{\major}$ and $A^{\minor}$ in \eqref{eq:linear_ansatz_A}, one can show that the MV-FBSDE \eqref{eq:fbsde_nash} has the same solutions as the following system of equations:
\begin{equation}
\left\{
\begin{aligned}
\dif Q^{\major}_{t} & = \frac{P^{\major}_{t}}{2 a_{0}} \dif t, \\
\dif Q^{\minor}_{t} & = \frac{P^{\minor}_{t}}{2a} \dif t, \\
- \dif P^{\major}_{t} & = \left(\frac{\lambda}{2a}\E\left[P^{\minor}_{t}\middle|\F^{\major}_t \right] - 2\phi_0 Q^{\major}_{t} + 2 \phi_0 R_t\right) \dif t - Z^{\major}_t \dif W^0_t, \\
- \dif P^{\minor}_{t} & = \left(\frac{\lambda_0}{2 a_{0}} P^{\major}_{t} + \frac{\lambda}{2a} \E\left[P^{\minor}_{t}\middle|\F^{\major}_t \right] - 2\phi Q^{\minor}_{t}\right) \dif t - Z^{\minor}_t \dif \tilde{W}_t, \\
- \dif B^{\major}_t &= \left(\frac{\lambda}{2a} \E\left[P^{\minor}_{t}\middle|\F^{\major}_t \right] - \frac{1}{2 a_0} A^{\major}_t B^{\major}_t + 2\phi_0  R_t\right) \dif t - Z^{\major}_t \dif W^0_t,\\
- \dif B^{\minor}_t &= \left( \frac{\lambda_0}{2 a_{0}}P^{\major}_{t} +  \frac{\lambda}{2a} \E\left[P^{\minor}_{t}\middle|\F^{\major}_t \right] - \frac{1}{2 a} A^{\minor}_t B^{\minor}_t \right) \dif t - Z^{\minor}_t \dif \tilde{W}_t,\\
P^{\major}_{t} &= - A^{\major}_t Q^{\major}_{t} + B^{\major}_t, \quad P^{\minor}_{t} = - A^{\minor}_t Q^{\minor}_{t} + B^{\minor}_t,\\
Q^{\major}_{0} &= q^{\major}_{0}, \, Q^{\major}_{T} = 0,\, Q^{\minor}_{0} = \mathcal{Q}^{\minor}_{0}, \, Q^{\minor}_{T} = 0.
\end{aligned}\right.
\label{eq:fbsde_nash_expand}
\end{equation}
Hence, it suffices to prove the existence and uniqueness of the solution to  the FBSDE \eqref{eq:fbsde_nash_expand}. To show this, we prove the following general result.
\begin{thm}
\label{thm:fbsde_unique_continuation}
Given $g^{\major}_t,g^{\minor}_t \in L_{\F^{\major}}^2\left([0, T] \times \Omega ; \R\right)$, for any constant $\mathfrak{p}\in[0,1]$, the FBSDE
\begin{equation}
\left\{
\begin{aligned}
\dif Q^{\major}_{t} & = \frac{P^{\major}_{t}}{2 a_{0}} \dif t, \\
\dif Q^{\minor}_{t} & = \frac{P^{\minor}_{t}}{2a} \dif t, \\
- \dif P^{\major}_{t} & = \left(\mathfrak{p} \frac{\lambda}{2a}\E\left[P^{\minor}_{t}\middle|\F^{\major}_t \right] - 2\phi_0 Q^{\major}_{t} + g^{\major}_t\right) \dif t - Z^{\major}_t \dif W^0_t, \\
- \dif P^{\minor}_{t} & = \left(\mathfrak{p} \frac{\lambda_0}{2 a_{0}} P^{\major}_{t} + \mathfrak{p} \frac{\lambda}{2a} \E\left[P^{\minor}_{t}\middle|\F^{\major}_t \right] - 2\phi Q^{\minor}_{t} + g^{\minor}_t \right) \dif t - Z^{\minor}_t \dif \tilde{W}_t, \\
- \dif B^{\major}_t &= \left(\frac{\lambda}{2a} \E\left[P^{\minor}_{t}\middle|\F^{\major}_t \right] - \frac{1}{2 a_0} A^{\major}_t B^{\major}_t + g^{\major}_t \right) \dif t - Z^{\major}_t \dif W^0_t,\\
- \dif B^{\minor}_t &= \left( \frac{\lambda_0}{2 a_{0}}P^{\major}_{t} +  \frac{\lambda}{2a} \E\left[P^{\minor}_{t}\middle|\F^{\major}_t \right] - \frac{1}{2 a} A^{\minor}_t B^{\minor}_t + g^{\minor}_t \right) \dif t - Z^{\minor}_t \dif \tilde{W}_t,\\
P^{\major}_{t} &= - A^{\major}_t Q^{\major}_{t} + B^{\major}_t, \quad P^{\minor}_{t} = - A^{\minor}_t Q^{\minor}_{t} + B^{\minor}_t,\\
Q^{\major}_{0} &= q^{\major}_{0}, \, Q^{\major}_{T} = 0,\, Q^{\minor}_{0} = \mathcal{Q}^{\minor}_{0}, \, Q^{\minor}_{T} = 0
\end{aligned}\right.
\label{eqn:fbsde_continuation}
\end{equation}
has a unique solution $(Q^{\major},Q^{\minor},P^{\major},P^{\minor},B^{\major},B^{\minor},Z^{\major},Z^{\minor})$ in the space of
\begin{equation}
\begin{aligned}
\mathcal{S} &:= \mathcal{H}^{1}_{\F^{\major}} \times \mathcal{H}^{1}_{\F^{\minor}} \times D_{\F^{\major}}^2([0, T] \times \Omega; \R) \times D_{\F^{\minor}}^2([0, T] \times \Omega; \R) \times \mathcal{H}^{\gamma}_{\F^{\major}} \times \mathcal{H}^{\gamma}_{\F^{\minor}} \\
&\qquad\qquad \times L_{\F^{\major}}^2\left([0, T-] \times \Omega ; \R\right) \times L_{\F^{\minor}}^2\left([0, T-] \times \Omega ; \R^m\right),    
\end{aligned} 
\end{equation}
where $\gamma\in(0,1/2)$ is a constant.    
\end{thm}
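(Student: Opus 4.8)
The plan is to prove Theorem~\ref{thm:fbsde_unique_continuation} by the method of continuation in the parameter $\mathfrak{p}$; Theorem~\ref{thm:nash_fbsde_unique} then follows immediately by taking $\mathfrak{p}=1$, $g^{\major}_t=2\phi_0 R_t$ and $g^{\minor}_t\equiv 0$. Let $\mathcal{J}\subseteq[0,1]$ be the set of parameters $\mathfrak{p}$ for which \eqref{eqn:fbsde_continuation} has a unique solution in $\mathcal{S}$ for \emph{every} admissible datum $(q^{\major}_0,\mathcal{Q}^{\minor}_0,g^{\major},g^{\minor})$ allowed by Assumption~\ref{assumption:fbsde}. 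The argument has three ingredients: (i) $0\in\mathcal{J}$; (ii) a Lipschitz a priori estimate for the $\mathfrak{p}$--system whose constant does not depend on $\mathfrak{p}\in[0,1]$; (iii) a perturbation step producing a $\delta_0>0$, determined only by the model coefficients, such that $\mathfrak{p}_0\in\mathcal{J}$ implies $([\mathfrak{p}_0,\mathfrak{p}_0+\delta_0]\cap[0,1])\subseteq\mathcal{J}$. Chaining (iii) finitely many times from (i) yields $\mathcal{J}=[0,1]$, and in particular $1\in\mathcal{J}$.

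For (i): when $\mathfrak{p}=0$ the blocks $(Q^{\major},P^{\major},B^{\major})$ and $(Q^{\minor},P^{\minor},B^{\minor})$ decouple, and each reduces to a scalar linear FBSDE carrying the singular constraint $Q_T=0$. The ansatz $P^{\major}_t=-A^{\major}_tQ^{\major}_t+B^{\major}_t$ with $A^{\major}$ from \eqref{eq:linear_ansatz_A} removes the singularity and turns the $B^{\major}$--equation into a linear BSDE whose driver is Lipschitz on $[0,\tau]$ for every $\tau<T$ with blow-up of order $1/(T-t)$; existence, uniqueness and $\mathcal{H}^{\gamma}$--regularity of $(B^{\major},Z^{\major})$, hence of $(Q^{\major},P^{\major})$, follow from the now-standard theory of linear BSDEs with singular terminal value, as used in \cite{fu_mean_2021}. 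The same reasoning handles the minor block, and then the $B^{\major}$-- and $B^{\minor}$--equations, whose $P^{\minor}$--dependent sources are already known, are plain linear BSDEs. Hence $0\in\mathcal{J}$.

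For (ii) and (iii): fix $\mathfrak{p}_0\in\mathcal{J}$ and $\delta\in(0,1-\mathfrak{p}_0]$; freezing an element of $\mathcal{S}$ and transferring the extra $\delta$--fraction of the $\mathfrak{p}$--scaled couplings into the source terms $g^{\major},g^{\minor}$ defines a self-map of $\mathcal{S}$ whose fixed points solve the $(\mathfrak{p}_0+\delta)$--system, with Lipschitz constant $\delta$ times the a priori constant of the $\mathfrak{p}_0$--system. The a priori estimate I would establish is: two solutions of the $\mathfrak{p}$--system with the same $\mathfrak{p}$, the same initial data, and sources differing by $(\Delta g^{\major},\Delta g^{\minor})$ satisfy $\norm{\Delta Q^{\major}}^2+\norm{\Delta Q^{\minor}}^2+\norm{\Delta P^{\major}}^2+\norm{\Delta P^{\minor}}^2\le C\big(\norm{\Delta g^{\major}}^2+\norm{\Delta g^{\minor}}^2\big)$ with $C$ independent of $\mathfrak{p}$. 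One obtains it by applying It\^o's formula to $\Delta Q^{\major}_t\,\Delta P^{\major}_t$ and to $\Delta Q^{\minor}_t\,\Delta P^{\minor}_t$ on $[0,\tau]$ and letting $\tau\nearrow T$; since $\Delta Q^{\major}_0=\Delta Q^{\major}_T=\Delta Q^{\minor}_0=\Delta Q^{\minor}_T=0$ the boundary contributions vanish and one is left with
\begin{equation*}
\E\!\int_0^T\!\tfrac{(\Delta P^{\major}_t)^2}{2a_0}\dif t+2\phi_0\,\E\!\int_0^T\!(\Delta Q^{\major}_t)^2\dif t=\E\!\int_0^T\!\Delta Q^{\major}_t\Big(\mathfrak{p}\tfrac{\lambda}{2a}\E[\Delta P^{\minor}_t\,|\,\F^{\major}_t]+\Delta g^{\major}_t\Big)\dif t,
\end{equation*}
together with the analogous identity for the minor block, which also contains the cross term $\mathfrak{p}\tfrac{\lambda_0}{2a_0}\Delta P^{\major}_t$. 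Bounding the cross terms by Young's inequality with the weights $\theta_1,\theta_2,\theta_3$ (and Jensen's inequality for $\E[\,\cdot\,|\,\F^{\major}_t]$), the inequalities \eqref{eq:assumption_coefficients} of Assumption~\ref{assumption:fbsde}-(iii) are exactly what keeps the coefficients of $\E\!\int(\Delta P^{\major})^2$, $\E\!\int(\Delta P^{\minor})^2$, $\E\!\int(\Delta Q^{\major})^2$, $\E\!\int(\Delta Q^{\minor})^2$ strictly positive after absorbing the cross terms; because $\mathfrak{p}\le 1$, replacing $\mathfrak{p}$ by $1$ in the upper bounds makes $C$ uniform. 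The bounds for $(\Delta B^{\major},\Delta B^{\minor})$ in $\mathcal{H}^{\gamma}$ and for $(\Delta Z^{\major},\Delta Z^{\minor})$ then follow from $B=P+AQ$, the explicit near-$T$ rate of $A^{\major},A^{\minor}$, and classical BSDE a priori bounds; taking $\Delta g\equiv 0$ yields uniqueness as a byproduct.

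I expect the main obstacle to be this uniform a priori estimate. Closing the energy inequality hinges delicately on the weak-interaction conditions \eqref{eq:assumption_coefficients}, and the liquidation constraint forces the whole computation to be carried out in the weighted $\mathcal{H}^{\gamma}$ scale rather than on $[0,T]$ directly --- in particular one must control $A^{\major}_tQ^{\major}_t$ and $A^{\minor}_tQ^{\minor}_t$, and products such as $A^{\major}_\tau(\Delta Q^{\major}_\tau)^2$, up to the singular time $T$ --- all while keeping track of the two filtrations $\F^{\major}\subset\F^{\minor}$ that enter through the conditional expectation $\E[\,\cdot\,|\,\F^{\major}_t]$ in the mean-field coupling.
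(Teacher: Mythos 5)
Your proposal is correct and follows essentially the same route as the paper: a continuation argument in $\mathfrak{p}$ with the decoupled $\mathfrak{p}=0$ system solved explicitly via the ansatz $P=-AQ+B$ (the paper's Lemma~\ref{lemma:fbsde_solution_p=0}), a uniform-in-$\mathfrak{p}$ energy estimate obtained by integration by parts on $\Delta Q\,\Delta P$ with Young weights $\theta_1,\theta_2,\theta_3$ and Assumption~\ref{assumption:fbsde}-(iii), and a perturbation/contraction step with step size independent of $\mathfrak{p}$ (the paper's Lemma~\ref{lemma:fbsde_continuation_add_o}). The only cosmetic difference is that you phrase the key bound as source-to-solution Lipschitz stability while the paper writes it directly as a contraction estimate for the frozen map; the computation is identical.
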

\noindent
Given the condition $\int_0^T (R_t)^2 dt < +\infty$ in Assumption~\ref{assumption:fbsde}, we can obtain the existence and uniqueness of the FBSDE system \eqref{eq:fbsde_nash_expand} when taking $\mathfrak{p}=1$, $g^{\major}_t = 2\phi R_t $, and $g^{\minor}_t = 0$ in Theorem~\ref{thm:fbsde_unique_continuation}, which leads to the result in Theorem~\ref{thm:nash_fbsde_unique}.

We first provide some inequalities on $A^0_t$ and $A_t$ to control the norms of the solution.
\begin{lemma}
For $A^{\major}_t$ and $A^{\minor}_t$ defined in \eqref{eq:linear_ansatz_A}, we have the following estimates
\begin{equation}
0<\frac{2 a_0}{T-t} \leq A^{\major}_t \leq \frac{2a_0}{T-t} + \frac{2}{3} \phi_0 (T-t),\quad 0< \frac{2a}{T-t} \leq A^{\minor}_t \leq \frac{2a}{T-t} + \frac{2}{3} \phi (T-t). 
\label{eq:linear_ansatz_A_bound}
\end{equation}
Moreover, for $t\in[0,T]$, define that
\begin{equation}
\begin{aligned}
\Phi^{\major}(t,s) &:= e^{- \frac{1}{2a_0}\int_t^s A^0_r dr} = \frac{\sinh(\sqrt{\phi_0/a_0}(T-s))}{\sinh(\sqrt{\phi_0/a_0}(T-t))},\\ \Phi^{\minor}(t,s) &:= e^{- \frac{1}{2a}\int_t^s A_r dr} = \frac{\sinh(\sqrt{\phi/a}(T-s))}{\sinh(\sqrt{\phi/a}(T-t))}.    
\end{aligned}
\label{eq:linear_ansatz_A_factor}
\end{equation}
Then we have
\begin{equation}
\Phi^{\major}(t,s) \leq \frac{T-s}{T-t} \leq 1, \quad \Phi^{\minor}(t,s) \leq \frac{T-s}{T-t} \leq 1.
\label{eq:linear_ansatz_A_factor_bound}
\end{equation}
\begin{proof}
\eqref{eq:linear_ansatz_A_bound} holds due to the fact that $\tanh(x)\leq x\leq (1+x^2/3)\tanh(x)$ for all $x>0$, and \eqref{eq:linear_ansatz_A_factor_bound} results from the fact that $\sinh(x)/x$ is monotonically increasing in $x>0$.
\end{proof}
\end{lemma}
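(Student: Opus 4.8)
The plan is to reduce both assertions of the Lemma to two elementary scalar inequalities together with the monotonicity of $x\mapsto\sinh(x)/x$, after an appropriate rescaling. I carry out the argument for the ``major'' quantities; the ``minor'' case is word-for-word identical with $(\phi_0,a_0)$ replaced by $(\phi,a)$.

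First I would introduce $x:=\sqrt{\phi_0/a_0}\,(T-t)$, which is strictly positive for $t<T$, and rewrite
\begin{equation*}
A^{\major}_t=\frac{2\sqrt{\phi_0 a_0}}{\tanh x}=\frac{2a_0}{T-t}\cdot\frac{x}{\tanh x},
\end{equation*}
using the identity $2\sqrt{\phi_0 a_0}/x=2a_0/(T-t)$. The claimed two-sided bound on $A^{\major}_t$ then follows immediately from the elementary inequalities $\tanh x\le x\le(1+x^2/3)\tanh x$ valid for all $x>0$: the left-hand inequality gives $x/\tanh x\ge1$, hence $A^{\major}_t\ge 2a_0/(T-t)>0$; the right-hand inequality gives $x/\tanh x\le 1+x^2/3$, hence
\begin{equation*}
A^{\major}_t\le\frac{2a_0}{T-t}\Bigl(1+\frac{x^2}{3}\Bigr)=\frac{2a_0}{T-t}+\frac{2a_0}{3(T-t)}\cdot\frac{\phi_0}{a_0}(T-t)^2=\frac{2a_0}{T-t}+\frac{2}{3}\phi_0(T-t).
\end{equation*}

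For the second assertion I would first confirm the closed form of $\Phi^{\major}$: differentiating $r\mapsto\log\sinh\bigl(\sqrt{\phi_0/a_0}(T-r)\bigr)$ yields $-\sqrt{\phi_0/a_0}\coth\bigl(\sqrt{\phi_0/a_0}(T-r)\bigr)=-\tfrac{1}{2a_0}A^{\major}_r$, so integrating over $[t,s]$ and exponentiating gives $\Phi^{\major}(t,s)=\sinh\bigl(\sqrt{\phi_0/a_0}(T-s)\bigr)/\sinh\bigl(\sqrt{\phi_0/a_0}(T-t)\bigr)$. Writing $c=\sqrt{\phi_0/a_0}$ and taking $t\le s$ (so that $0<T-s\le T-t$), the bound $\Phi^{\major}(t,s)\le(T-s)/(T-t)$ is, after cross-multiplying by $c$, equivalent to $\sinh(c(T-s))/(c(T-s))\le\sinh(c(T-t))/(c(T-t))$, which holds because $x\mapsto\sinh(x)/x$ is increasing on $(0,\infty)$; indeed its derivative equals $(x\cosh x-\sinh x)/x^2$, and $x\cosh x-\sinh x$ vanishes at $0$ with derivative $x\sinh x>0$. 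The trailing inequality $(T-s)/(T-t)\le1$ is immediate from $s\ge t$.

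The only input that is not completely routine is the scalar inequality $x\le(1+x^2/3)\tanh x$, equivalently $x\coth x\le1+x^2/3$; I would derive it from the partial-fraction expansion $\coth x=\tfrac1x+\sum_{k\ge1}\tfrac{2x}{x^2+k^2\pi^2}$, which gives $x\coth x=1+\sum_{k\ge1}\tfrac{2x^2}{x^2+k^2\pi^2}\le1+\sum_{k\ge1}\tfrac{2x^2}{k^2\pi^2}=1+\tfrac{x^2}{3}$ by $\sum_{k\ge1}k^{-2}=\pi^2/6$ (a direct sign analysis of the Taylor coefficients of $x\coth x-1-x^2/3$ would work equally well). The companion inequality $\tanh x\le x$ is standard, since $x-\tanh x$ vanishes at the origin and has non-negative derivative $\tanh^2 x$. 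Thus no genuine obstacle arises: once these two scalar facts and the monotonicity of $\sinh(x)/x$ are in hand, the Lemma is pure bookkeeping.
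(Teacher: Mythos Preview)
Your proof is correct and follows essentially the same approach as the paper: the paper's own proof simply cites the two scalar inequalities $\tanh x\le x\le(1+x^2/3)\tanh x$ for the first bound and the monotonicity of $\sinh(x)/x$ for the second, without further elaboration. Your write-up fills in the rescaling, the verification of the closed form of $\Phi^{\major}$, and a clean derivation of $x\coth x\le 1+x^2/3$ via the Mittag--Leffler expansion, all of which the paper leaves implicit.
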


We now proceed to prove Theorem~\ref{thm:fbsde_unique_continuation}. We first show that Theorem~\ref{thm:fbsde_unique_continuation} holds when $\mathfrak{p} = 0$.
\begin{lemma}
\label{lemma:fbsde_solution_p=0}
When $\mathfrak{p} = 0$, Theorem~\ref{thm:fbsde_unique_continuation} holds and the solution of \eqref{eqn:fbsde_continuation} can be expressed  as
\begin{equation}
\left\{
\begin{aligned}
B^{\major}_t &= \E\left[\int_t^T \Phi^{\major}(t,s) g^{\major}_s \dif s\middle|\F^{\major}_t\right],\\ 
B^{\minor}_t &= \E\left[\int_t^T \Phi^{\minor}(t,s) g^{\minor}_s \dif s\middle|\F^{\minor}_t\right],\\
Q^{\major}_t &=  q^{\major}_0 \Phi^{\major}(0,t) + \frac{1}{2a_0} \int_0^t \Phi^{\major}(s,t) B^{\major}_s \dif s,\\
Q^{\minor}_t &= \mathcal{Q}^{\minor}_0 \Phi^{\minor}(0,t) + \frac{1}{2a} \int_0^t \Phi^{\minor}(s,t) B^{\minor}_s \dif s,\\
P^{\major}_{t} &= - A^{\major}_t Q^{\major}_{t} + B^{\major}_t, \quad P^{\minor}_{t} = - A^{\minor}_t Q^{\minor}_{t} + B^{\minor}_t,
\end{aligned}\right.
\label{eq:fbsde_solution_p=0}
\end{equation}
and $Z^0_t$ and $Z_t$ are given by the martingale representation theorem. Moreover,
\begin{equation}
\begin{aligned}
\|B^{\major}\|_{\mathcal{H}^\gamma}^2 \leq C \E\left[\int_0^T\left|g^{\major}_s\right|^2 \dif s\right],\quad 
\|B^{\minor}\|_{\mathcal{H}^\gamma}^2 \leq C \E\left[\int_0^T \left|g^{\minor}_s\right|^2 \dif s\right],
\end{aligned}
\label{eq:fbsde_solution_B_norm_bound_p=0}
\end{equation}
and
\begin{equation}
\begin{aligned}
\|Q^{\major}\|_{\mathcal{H}^{1}}^2 \leq  C q^{\major}_0 + C \|B^{\major}\|_{\mathcal{H}^\gamma}^2, \quad \|Q^{\minor}\|_{\mathcal{H}^{1}}^2 \leq  C\E\left[\left|\mathcal{Q}^{\minor}_0\right|^2\right] + C \|B^{\minor}\|_{\mathcal{H}^\gamma}^2.
\end{aligned}
\label{eq:fbsde_solution_Q_norm_bound_p=0}
\end{equation}
\begin{proof}
When $\mathfrak{p} = 0$, \eqref{eqn:fbsde_continuation} becomes a standard FBSDE, and we can easily get the explicit form \eqref{eq:fbsde_solution_p=0} of the solution by, for example, Proposition 4.1.2 in \cite{zhang_backward_2017}. It remains to establish the integration properties of the solution, and we only need to consider $(Q^{\minor}_t,P^{\minor}_t,B^{\minor}_t)$ without loss of generality. First, for $\gamma\in(0,1/2)$, we can obtain by H\"{o}lder's inequality that
\begin{equation*}
    \frac{\left|B^{\minor}_t\right|}{(T-t)^\gamma} \leq \frac{1}{(T-t)^\gamma} \E\left[\int_t^T\left|g^{\minor}_s\right| \dif s \middle| \F^{\minor}_t \right] \leq\left(\E\left[\int_t^T \left|g^{\minor}_s\right|^{\frac{1}{1-\gamma}} \dif s \middle| \F^{\minor}_t \right]\right)^{1-\gamma}<\infty,
\end{equation*}
and the Doob's maximal inequality and Jensen's inequality imply that
\begin{equation}
\begin{aligned}
\E\left[\sup _{0 \leq t \leq T}\left|\frac{B^{\minor}_t}{(T-t)^\gamma}\right|^2\right] 
&\leq \E\left[\sup _{0 \leq t \leq T}\left(\E\left[\int_0^T\left|g^{\minor}_s\right|^{\frac{1}{1-\gamma}} \dif s \middle| \F^{\minor}_t\right]\right)^{2(1-\gamma)}\right] \\
& \leq \left(\frac{2(1-\gamma)}{1-2\gamma}\right)^{2(1-\gamma)} \E\left[\left(\int_0^T \left|g^{\minor}_s\right|^{\frac{1}{1-\gamma}} \dif s \right)^{2(1-\gamma)}\right]\\
&\leq \left(\frac{2(1-\gamma)}{1-2\gamma}\right)^{2(1-\gamma)} T^{1-2\gamma} \E\left[\int_0^T\left|g^{\minor}_s\right|^2 \dif s\right]\\
&= C \E\left[\int_0^T\left|g^{\minor}_s\right|^2 \dif s\right] < +\infty.
\end{aligned}
\end{equation}
Hence, $B^{\minor}\in\mathcal{H}^{\gamma}_{\F^{\minor}}$ and \eqref{eq:fbsde_solution_B_norm_bound_p=0} holds.

Furthermore, by the inequality \eqref{eq:linear_ansatz_A_factor_bound}, we have
\begin{equation*}
\begin{aligned}
\left|Q^{\minor}_t\right| &\leq \left|\mathcal{Q}^{\minor}_0\right| \frac{T-t}{T}  + \frac{1}{2a} \int_0^t \frac{T-t}{T-s} \left|B^{\minor}_s\right| \dif s \\
& \leq (T-t) \left[\left|\mathcal{Q}^{\minor}_0\right| + \frac{1}{2a} \left(\sup_{0\leq t\leq T} \frac{\left|B^{\minor}_t\right|}{(T-t)^{\gamma}}\right) \int_0^t (T-s)^{\gamma-1} \dif s \right] \\
& \leq (T-t) \left[\left|\mathcal{Q}^{\minor}_0\right| + \frac{1}{2a\gamma}T^{\gamma} \left(\sup_{0\leq t\leq T} \frac{\left|B^{\minor}_t\right|}{(T-t)^{\gamma}}\right) \right],
\end{aligned}
\end{equation*}
and then 
\begin{equation}
\begin{aligned}
\E\left[\sup _{0 \leq t \leq T}\left|\frac{Q^{\minor}_t}{T-t}\right|^2\right]
&\leq \E\left[\left(|\mathcal{Q}^{\minor}_0| + \frac{1}{2a\gamma}T^{\gamma} \left(\sup_{0\leq t\leq T} \frac{\left|B^{\minor}_t\right|}{(T-t)^{\gamma}}\right)\right)^2\right]  \\
&\leq 2 \E\left[\left|\mathcal{Q}^{\minor}_0\right|^2\right] + \frac{1}{2a^2\gamma^2}T^{2\gamma}\E\left[ \sup_{0\leq t\leq T} \left|\frac{\left|B^{\minor}_t\right|}{(T-t)^{\gamma}}\right|^2\right] \\
& = C\E\left[\left|\mathcal{Q}^{\minor}_0\right|^2\right] + C \|B^{\minor}\|_{\mathcal{H}^\gamma}^2 < +\infty,
\end{aligned}
\label{eq:fbsde_Q_norm_bound}
\end{equation}
So $Q^{\minor} \in \mathcal{H}^{1}_{\F^{\minor}}$ and \eqref{eq:fbsde_solution_Q_norm_bound_p=0} holds.

Finally, we establish the integratability property of $P^{\minor}_t$. First, by Cauchy-Schwarz inequality and the upper bound of $A_t$ in \eqref{eq:linear_ansatz_A_bound}, we have
\begin{equation*}
\begin{aligned}
\left|P^{\minor}_t\right|^2 &= (-A^{\minor}_t Q^{\minor}_t + B^{\minor}_t)^2 \leq 2 \left(A^{\minor}_t\right)^2 \left(Q^{\minor}_t\right)^2 + 2 \left(B^{\minor}_t\right)^2 \\
&= 2 [A^{\minor}_t (T-t)]^2 \left|\frac{Q^{\minor}_t}{T-t}\right|^2 + 2 (T-t)^{2\gamma} \left|\frac{B^{\minor}_t}{(T-t)^{\gamma}}\right|^2 \\
& \leq  2 \left[2a + \frac{2}{3} \phi (T-t)^2\right]^2 \left|\frac{Q^{\minor}_t}{T-t}\right|^2 + 2 (T-t)^{2\gamma} \left|\frac{B^{\minor}_t}{(T-t)^{\gamma}}\right|^2.
\end{aligned}
\end{equation*}
So, for any $\tau\in[0,T)$, we have
\begin{equation*}
\begin{aligned}
\E\left[\sup_{0\leq t\leq \tau} \left|P^{\minor}_t\right|^2\right]
&\leq  2 \left[2a + \frac{2}{3} \phi T^2\right]^2 \E\left[\sup_{0\leq t\leq T} \left|\frac{Q^{\minor}_t}{T-t}\right|^2\right] + 2 T^{2\gamma} \E\left[\sup_{0\leq t\leq T} \left|\frac{B^{\minor}_t}{(T-t)^{\gamma}}\right|^2\right]\\
&= C\|Q^{\minor}\|_{\mathcal{H}^1}^2 +C\|B^{\minor}\|_{\mathcal{H}^\gamma}^2 < +\infty.
\end{aligned}
\end{equation*}
In addition, we consider the integration by part of $Q_tP_t$ and obtain that
\begin{equation}
\begin{aligned}
Q^{\minor}_{T-\epsilon} P^{\minor}_{T-\epsilon} - Q^{\minor}_{0} P^{\minor}_{0} &= 2\phi \int_{0}^{T-\epsilon} \left(Q^{\minor}_t\right)^2 \dif t  - \int_{0}^{T-\epsilon} Q^{\minor}_t g^{\minor}_t \dif t \\
&\qquad \qquad + \frac{1}{2a}  \int_{0}^{T-\epsilon} \left(P^{\minor}_t\right)^2 \dif t + \int_{0}^{T-\epsilon} Q^{\minor}_t Z^{\minor}_t \dif \tilde{W}_t.
\end{aligned}
\label{eq:QP_T-eps}
\end{equation}
Because $A^{\minor}_t>0$, $B^{\minor} \in \mathcal{H}^{\gamma}_{\F^{\minor}}$ and $Q^{\minor} \in \mathcal{H}^{1}_{\F^{\minor}}$, we can show that, when $\epsilon \to 0$,
\begin{equation*}
Q^{\minor}_{T-\epsilon}P^{\minor}_{T-\epsilon} = Q^{\minor}_{T-\epsilon}(-A^{\minor}_{T-\epsilon} Q^{\minor}_{T-\epsilon} + B^{\minor}_{T-\epsilon})\leq Q^{\minor}_{T-\epsilon} B^{\minor}_{T-\epsilon} \to 0.    
\end{equation*}
Thus, taking expectations on both sides of \eqref{eq:QP_T-eps} and letting $\epsilon \to 0$ yields that
\begin{equation*}
\begin{aligned}
\frac{1}{2a}  \E \left[\int_{0}^{T} \left|P^{\minor}_t\right|^2 \dif t \right] &\leq - \E \left[ Q^{\minor}_{0} P^{\minor}_{0}\right] - 2\phi \E \left[\int_{0}^{T} \left(Q^{\minor}_t\right)^2 \dif t\right] + \E\left[ \int_{0}^{T} Q^{\minor}_t g^{\minor}_t \dif t\right]\\
&\leq \frac{1}{2}\E \left[ \left(Q^{\minor}_{0}\right)^2\right] + C\|Q^{\minor}\|_{\mathcal{H}_1}^2 +C\|B^{\minor}\|_{\mathcal{H}_\gamma}^2 \\
&\qquad + \frac{1}{2}\E\left[ \int_{0}^{T} \left(Q^{\minor}_t\right)^2 \dif t\right] + \frac{1}{2}\E\left[ \int_{0}^{T} \left(g^{\minor}_t\right)^2 \dif t\right] < +\infty.
\end{aligned}
\end{equation*}
Hence, $P^{\minor} \in D_{\F^{\minor}}^2([0, T] \times \Omega; \R)$.    
\end{proof}
\end{lemma}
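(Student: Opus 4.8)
The plan is to exploit that setting $\mathfrak{p}=0$ in \eqref{eqn:fbsde_continuation} removes every cross-player coupling term, so the system splits into a ``major block'' $(Q^{\major},P^{\major},B^{\major},Z^{\major})$ driven by $W^0$ alone and a ``minor block'' $(Q^{\minor},P^{\minor},B^{\minor},Z^{\minor})$ driven by $\tilde W=(W^0,W)$, each a \emph{standard} (not McKean--Vlasov) affine FBSDE. The two blocks are structurally identical under $a\mapsto a_0$, $\phi\mapsto\phi_0$, $\mathcal{Q}^{\minor}_0\mapsto q^{\major}_0$, $g^{\minor}\mapsto g^{\major}$, $\mathbb{F}^{\minor}\mapsto\mathbb{F}^{\major}$, so it suffices to treat the minor block, which I would solve bottom-up. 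The $B^{\minor}$-equation is the linear BSDE $-\dif B^{\minor}_t=(g^{\minor}_t-\tfrac1{2a}A^{\minor}_tB^{\minor}_t)\,\dif t-Z^{\minor}_t\,\dif\tilde W_t$ with time-dependent (and at $T$ singular) discount rate $A^{\minor}_t/(2a)$; its unique solution is the claimed $B^{\minor}_t=\E[\int_t^T\Phi^{\minor}(t,s)g^{\minor}_s\,\dif s\mid\F^{\minor}_t]$, which I would either quote from \citet[Proposition 4.1.2]{zhang_backward_2017} or verify directly by checking, via $\dif\Phi^{\minor}(0,t)=-\tfrac1{2a}A^{\minor}_t\Phi^{\minor}(0,t)\,\dif t$, that $\Phi^{\minor}(0,t)B^{\minor}_t+\int_0^t\Phi^{\minor}(0,s)g^{\minor}_s\,\dif s$ is a martingale; $Z^{\minor}$ is then extracted from its martingale representation, lying in $L_{\F^{\minor}}^2([0,T-]\times\Omega;\R^m)$.

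Given $B^{\minor}$, the state solves the \emph{pathwise} linear ODE $\dot Q^{\minor}_t=(-A^{\minor}_tQ^{\minor}_t+B^{\minor}_t)/(2a)$ with $Q^{\minor}_0=\mathcal{Q}^{\minor}_0$, and variation of constants delivers exactly $Q^{\minor}_t=\mathcal{Q}^{\minor}_0\Phi^{\minor}(0,t)+\tfrac1{2a}\int_0^t\Phi^{\minor}(s,t)B^{\minor}_s\,\dif s$. I would then check that the singular terminal constraint $Q^{\minor}_T=0$ is automatic: $\Phi^{\minor}(0,T)=0$, and by \eqref{eq:linear_ansatz_A_factor_bound}, $\bigl|\int_0^t\Phi^{\minor}(s,t)B^{\minor}_s\,\dif s\bigr|\leq(T-t)\,\tfrac{T^{\gamma}}{\gamma}\sup_s|B^{\minor}_s|(T-s)^{-\gamma}\to0$ as $t\nearrow T$. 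Setting $P^{\minor}_t:=-A^{\minor}_tQ^{\minor}_t+B^{\minor}_t$ and applying It\^o, together with the Riccati identity $-\dif A^{\minor}_t=(-\tfrac1{2a}(A^{\minor}_t)^2+2\phi)\,\dif t$, recovers the $P^{\minor}$-BSDE with driver $-2\phi Q^{\minor}_t+g^{\minor}_t$ and the relation $P^{\minor}=-A^{\minor}Q^{\minor}+B^{\minor}$; since $P^{\minor}-B^{\minor}=-A^{\minor}Q^{\minor}$ has finite variation, the martingale terms of the $P^{\minor}$- and $B^{\minor}$-equations coincide, so the common $Z^{\minor}$ is consistent. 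Uniqueness in $\mathcal{S}$ is then immediate by reading this chain backwards: any solution forces $B^{\minor}$ to solve the linear BSDE above (unique), hence $Q^{\minor}$ a linear ODE (unique), hence $P^{\minor}$ and $Z^{\minor}$.

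It remains to prove the three integrability estimates. For $B^{\minor}\in\mathcal{H}^{\gamma}_{\F^{\minor}}$: since $0\leq\Phi^{\minor}\leq1$ by \eqref{eq:linear_ansatz_A_factor_bound}, conditional H\"older gives $|B^{\minor}_t|(T-t)^{-\gamma}\leq\bigl(\E[\int_0^T|g^{\minor}_s|^{1/(1-\gamma)}\,\dif s\mid\F^{\minor}_t]\bigr)^{1-\gamma}$, the right side being a nonnegative martingale to the power $1-\gamma$; Doob's $L^{2(1-\gamma)}$-maximal inequality — legitimate precisely because $\gamma<1/2$ — followed by Jensen/H\"older in time yields $\|B^{\minor}\|_{\mathcal{H}^{\gamma}}^2\leq C\,\E[\int_0^T|g^{\minor}_s|^2\,\dif s]$. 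For $Q^{\minor}\in\mathcal{H}^1_{\F^{\minor}}$: the bounds $\Phi^{\minor}(0,t)\leq(T-t)/T$ and $\Phi^{\minor}(s,t)\leq(T-t)/(T-s)$ give $|Q^{\minor}_t|\leq(T-t)\bigl(|\mathcal{Q}^{\minor}_0|+\tfrac{T^{\gamma}}{2a\gamma}\sup_s|B^{\minor}_s|(T-s)^{-\gamma}\bigr)$, and squaring then taking expectations yields $\|Q^{\minor}\|_{\mathcal{H}^1}^2\leq C\,\E[|\mathcal{Q}^{\minor}_0|^2]+C\|B^{\minor}\|_{\mathcal{H}^{\gamma}}^2$. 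For $P^{\minor}\in D^2_{\mathbb{F}^{\minor}}$: from $|P^{\minor}_t|^2\leq2(A^{\minor}_t)^2(Q^{\minor}_t)^2+2(B^{\minor}_t)^2$ and the upper bound $A^{\minor}_t(T-t)\leq2a+\tfrac23\phi T^2$ in \eqref{eq:linear_ansatz_A_bound} one gets $\E[\sup_{t\leq\tau}|P^{\minor}_t|^2]\leq C(\|Q^{\minor}\|_{\mathcal{H}^1}^2+\|B^{\minor}\|_{\mathcal{H}^{\gamma}}^2)$ for every $\tau<T$; to obtain the missing $L^2$-in-time control up to $T$, I would integrate by parts $\dif(Q^{\minor}_tP^{\minor}_t)$ on $[0,T-\e]$, use $A^{\minor}>0$ to bound $Q^{\minor}_{T-\e}P^{\minor}_{T-\e}\leq Q^{\minor}_{T-\e}B^{\minor}_{T-\e}\to0$, discard the nonnegative term $2\phi\int_0^{T-\e}(Q^{\minor}_t)^2\,\dif t$, take expectations and let $\e\downarrow0$ to get $\tfrac1{2a}\E[\int_0^T|P^{\minor}_t|^2\,\dif t]\leq\tfrac12\E[(\mathcal{Q}^{\minor}_0)^2]+C\|Q^{\minor}\|_{\mathcal{H}^1}^2+C\|B^{\minor}\|_{\mathcal{H}^{\gamma}}^2+\tfrac12\E[\int_0^T(g^{\minor}_t)^2\,\dif t]<\infty$. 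The major block is handled verbatim under the substitutions above.

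The routine part is the variation-of-constants bookkeeping and the FBSDE/linear-BSDE translation; the genuine obstacle is securing the correct behaviour near the singular terminal time, i.e. upgrading ``$B^{\minor}\in L^2$'' to the sharp weighted estimate $B^{\minor}\in\mathcal{H}^{\gamma}$, since it is exactly the decay $|B^{\minor}_t|=O((T-t)^{\gamma})$ that forces $Q^{\minor}_T=0$, keeps $A^{\minor}_tQ^{\minor}_t$ square-integrable on all of $[0,T]$, and allows the integration-by-parts argument to close for $P^{\minor}$. Choosing $\gamma\in(0,1/2)$ so that it is simultaneously compatible with Doob's maximal inequality and with the integrability of $(T-s)^{\gamma-1}$ is the point that needs care.
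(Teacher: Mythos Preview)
Your proposal is correct and follows essentially the same approach as the paper: both decouple at $\mathfrak{p}=0$, cite \citet[Proposition 4.1.2]{zhang_backward_2017} (or verify directly) for the explicit formulas, and then establish the three integrability estimates via the identical chain of H\"older--Doob for $B^{\minor}\in\mathcal{H}^{\gamma}$, the $\Phi^{\minor}$-bounds for $Q^{\minor}\in\mathcal{H}^{1}$, and the pointwise estimate plus integration-by-parts of $Q^{\minor}_tP^{\minor}_t$ for $P^{\minor}\in D^2$. Your write-up is in fact somewhat more explicit than the paper's (checking $Q^{\minor}_T=0$, spelling out uniqueness, and articulating why $\gamma\in(0,1/2)$ is the right window), but the underlying argument is the same.
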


Next, we prove the following results by the fixed-point argument. 
\begin{lemma}\label{lemma:fbsde_continuation_add_o}
If Theorem~\ref{thm:fbsde_unique_continuation} holds for some $\mathfrak{p}\in [0,1)$, then there exists a sufficiently small $\bar{\mathfrak{o}}>0$ such that Theorem~\ref{thm:fbsde_unique_continuation} holds for $\mathfrak{p} + \mathfrak{o}$ for any $\mathfrak{o}\in[0,\bar{\mathfrak{o}}]$, and the selection of $\bar{\mathfrak{o}}$ is independent of $\mathfrak{p}$, $g^0_t$ and $g_t$.
\end{lemma}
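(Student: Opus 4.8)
The plan is a continuation (Banach fixed-point) argument. Suppose \Cref{thm:fbsde_unique_continuation} holds at the parameter $\mathfrak{p}$. The system \eqref{eqn:fbsde_continuation} at parameter $\mathfrak{p}+\mathfrak{o}$ is the one at $\mathfrak{p}$ with $\mathfrak{o}$ times the coupling functionals $\frac{\lambda}{2a}\E[P^{\minor}_t\mid\F^{\major}_t]$ and $\frac{\lambda_0}{2a_0}P^{\major}_t$ (which already occur in its drivers) added on. Accordingly, consider the affine map $\Psi$ that sends a candidate pair $\bar{\mathbf P}=(\bar P^{\major},\bar P^{\minor})\in D^2_{\F^{\major}}\times D^2_{\F^{\minor}}$ to the $(P^{\major},P^{\minor})$-component of the unique solution of the $\mathfrak{p}$-system in which the extra drivers obtained by freezing the $\mathfrak{o}$-increment of the coupling at $\bar{\mathbf P}$ are added. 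This map is well defined: the frozen drivers are $\F^{\major}$-progressive and square integrable (conditional expectation is an $L^2$-contraction, and $\bar P^{\major}\in D^2_{\F^{\major}}$, $\bar P^{\minor}\in D^2_{\F^{\minor}}$), so \Cref{thm:fbsde_unique_continuation} at $\mathfrak{p}$ applies and returns an element of $\mathcal{S}$. A fixed point of $\Psi$ is then the $P$-part of a genuine solution of the $(\mathfrak{p}+\mathfrak{o})$-system, its remaining components $(Q,B,Z)$ coming from the same $\mathfrak{p}$-solve and lying in $\mathcal{S}$; uniqueness at $\mathfrak{p}+\mathfrak{o}$ follows since two solutions would yield two fixed points of the same strict contraction.

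To turn $\Psi$ into a strict contraction with a radius $\bar{\mathfrak{o}}$ independent of $\mathfrak{p}$ and of the data, note that by linearity of \eqref{eqn:fbsde_continuation} the difference $\Psi(\bar{\mathbf P})-\Psi(\bar{\mathbf P}')$ is the $P$-part of the solution of the $\mathfrak{p}$-system with zero initial inventories, zero $g^{\major},g^{\minor}$, and drivers equal to $\mathfrak{o}$ times the difference of the frozen couplings. Everything therefore hinges on an \emph{a priori stability estimate} for the $\mathfrak{p}$-system: any solution obeys
\[
\big\| (Q^{\major},Q^{\minor},P^{\major},P^{\minor},B^{\major},B^{\minor}) \big\|_{\mathcal{S}}\ \le\ C\Big( |q^{\major}_0| + \big(\E[|\mathcal{Q}^{\minor}_0|^2]\big)^{1/2} + \|g^{\major}\|_{L^2} + \|g^{\minor}\|_{L^2} \Big),
\]
with a constant $C$ independent of $\mathfrak{p}\in[0,1]$ and of the data. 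Granting this, $\|\Psi(\bar{\mathbf P})-\Psi(\bar{\mathbf P}')\|\le C\sqrt{T}\,\mathfrak{o}\,\big(\tfrac{\lambda}{2a}+\tfrac{\lambda_0}{2a_0}\big)\,\|\bar{\mathbf P}-\bar{\mathbf P}'\|$ in the norm of $D^2_{\F^{\major}}\times D^2_{\F^{\minor}}$, so any $\bar{\mathfrak{o}}$ with $C\sqrt{T}\,\bar{\mathfrak{o}}\,(\tfrac{\lambda}{2a}+\tfrac{\lambda_0}{2a_0})\le\tfrac12$ works; since $C,T,\lambda,\lambda_0,a,a_0$ are fixed, this $\bar{\mathfrak{o}}$ depends on none of $\mathfrak{p}$, $g^{\major}$, $g^{\minor}$.

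The crux — and the step I expect to be the main obstacle — is the uniform a priori estimate. I would derive it by an energy argument: apply It\^o's formula to the cross products $Q^{\major}_t P^{\major}_t$ and $Q^{\minor}_t P^{\minor}_t$ (the mean-field term being handled through $Q^{\minor}_t\,\E[P^{\minor}_t\mid\F^{\major}_t]$), integrate on $[0,T-\e]$, take expectations, and let $\e\downarrow 0$; the singular boundary contributions $A^{\major}_t (Q^{\major}_t)^2$ and $A^{\minor}_t (Q^{\minor}_t)^2$ are dispatched exactly as in the proof of \Cref{lemma:fbsde_solution_p=0}, using $P^{\cdot}=-A^{\cdot}Q^{\cdot}+B^{\cdot}$ and the bounds \eqref{eq:linear_ansatz_A_bound}--\eqref{eq:linear_ansatz_A_factor_bound}. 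The permanent-impact cross terms carrying $\lambda$ and $\lambda_0$ are split by Young's inequality with the auxiliary constants $\theta_1,\theta_2,\theta_3$, and the four inequalities in \eqref{eq:assumption_coefficients} are precisely what make the resulting quadratic form in $\big(\E\!\int_0^T (P^{\major}_t)^2\dif t,\ \E\!\int_0^T (P^{\minor}_t)^2\dif t,\ \E\!\int_0^T (Q^{\major}_t)^2\dif t,\ \E\!\int_0^T (Q^{\minor}_t)^2\dif t\big)$ coercive, so the estimate closes; the weighted $\mathcal{H}^1$- and $\mathcal{H}^{\gamma}$-bounds on $(Q,B)$ are then recovered as in \Cref{lemma:fbsde_solution_p=0}. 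Because every coupling coefficient enters multiplied by $\mathfrak{p}\le 1$, bounding $\mathfrak{p}$ by $1$ throughout the computation yields a $C$ that is automatically uniform over $\mathfrak{p}\in[0,1]$. Combined with the base case $\mathfrak{p}=0$ (\Cref{lemma:fbsde_solution_p=0}), iterating this lemma in steps of $\bar{\mathfrak{o}}$ reaches $\mathfrak{p}=1$, which yields \Cref{thm:nash_fbsde_unique}.
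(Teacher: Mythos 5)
Your proposal is correct and follows essentially the same route as the paper: freeze the $\mathfrak{o}$-increment of the coupling at a candidate pair $(P^{\major},P^{\minor})$, solve the resulting $\mathfrak{p}$-system (well-posed by hypothesis), and show the induced map is a strict contraction via an energy estimate obtained by integration by parts on $Q^{\major}P^{\major}$ and $Q^{\minor}P^{\minor}$, Young's inequality with $\theta_1,\theta_2,\theta_3$, and the coercivity supplied by Assumption~\ref{assumption:fbsde}(iii), with the singular terminal terms handled through $P=-AQ+B$ and the bounds of \Cref{lemma:fbsde_solution_p=0}. The only cosmetic difference is that the paper runs the energy estimate directly on the difference system (which has zero data and $\mathfrak{o}$-scaled drivers) and measures the contraction in the $L^2$ norm of $(P^{\major},P^{\minor})$ rather than stating a standalone a priori bound in the $\mathcal{S}$-norm, but the computation and the source of the uniform-in-$\mathfrak{p}$ constant are identical.
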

\begin{proof}
For given $P^{\major}_t \in L_{\F^{\major}}^2\left([0, T] \times \Omega ; \R\right)$ and $P^{\minor}_t \in L_{\F^{\minor}}^2\left([0, T] \times \Omega ; \R\right)$, we consider the following system of equations:
\begin{equation}
\left\{
\begin{aligned}
\dif \tilde{Q}^{\major}_{t} & = \frac{1}{2 a_{0}} \tilde{P}^{\major}_{t} \dif t, \\
d \tilde{Q}^{\minor}_{t} & = \frac{1}{2a} \tilde{P}^{\minor}_{t} \dif t, \\
- \dif \tilde{P}^{\major}_{t} & = \left[\mathfrak{p} \frac{\lambda}{2a}\E\left[\tilde{P}^{\minor}_{t} \middle| \F^{\major}_t \right] - 2\phi_0 \tilde{Q}^{\major}_{t} + \mathfrak{o}\frac{\lambda}{2a}\E\left[P_{t}^{\minor}\middle|\F^{\major}_t \right] + g^{\major}_t \right] \dif t - \tilde{Z}^{\major}_t \dif W^0_t, \\
- \dif \tilde{P}^{\minor}_{t} & = \left[\mathfrak{p} \frac{\lambda_0}{2 a_{0}}\tilde{P}^{\major}_{t} +  \mathfrak{p} \frac{\lambda}{2a}\E\left[\tilde{P}^{\minor}_{t}\middle|\F^{\major}_t \right] - 2\phi \tilde{Q}^{\minor}_{t} \right.\\
& \qquad \qquad \left. + \mathfrak{o} \frac{\lambda_0}{2 a_{0}} P^{\major}_{t} +  \mathfrak{o} \frac{\lambda}{2a}\E\left[P^{\minor}_{t} \middle| \F^{\major}_t \right] + g^{\minor}_t \right] \dif t - \tilde{Z}^{\minor}_t \dif \tilde{W}_t, \\
- \dif \tilde{B}^{\major}_t &= \left[\mathfrak{p}\frac{\lambda}{2a}\E\left[\tilde{P}^{\minor}_{t}\middle|\F^{\major}_t \right] - \frac{1}{2 a_0} A^{\major}_t \tilde{B}^{\major}_t + \mathfrak{o}\frac{\lambda}{2a}\E\left[P^{\minor}_{t} \middle| \F^{\major}_t \right] + g^{\major}_t \right] \dif t - \tilde{Z}^{\major}_t \dif W^0_t,\\
- \dif \tilde{B}^{\minor}_t &= \left[\mathfrak{p} \frac{\lambda_0}{2 a_{0}}\tilde{P}^{\major}_{t} +  \mathfrak{p} \frac{\lambda}{2a}\E\left[\tilde{P}^{\minor}_{t}\middle|\F^{\major}_t \right] - \frac{1}{2 a} A^{\minor}_t \tilde{B}^{\minor}_t \right.\\
&\qquad\qquad \left. + \mathfrak{o} \frac{\lambda_0}{2 a_{0}}P^{\major}_{t} +  \mathfrak{o} \frac{\lambda}{2a}\E\left[P^{\minor}_{t} \middle| \F^{\major}_t \right] + g^{\minor}_t\right] \dif t - \tilde{Z}^{\minor}_t d \tilde{W}_t,\\
\tilde{P}^{\major}_{t} &= - A^{\major}_t \tilde{Q}^{\major}_{t} + \tilde{B}^{\major}_t, \quad \tilde{P}^{\minor}_{t} = - A^{\minor}_t \tilde{Q}^{\minor}_{t} + \tilde{B}^{\minor}_{t},\\
\tilde{Q}^{\major}_{0} &= q^{\major}_{0}, \, \tilde{Q}^{\major}_{T} = 0,\, \tilde{Q}_{\minor} = \mathcal{Q}^{\minor}_{0}, \, \tilde{Q}^{\minor}_{T} = 0.
\end{aligned}\right.
\label{eq:fbsde_continuation_add_o}
\end{equation}
Note that 
\begin{equation*}
g^{\major}_t(P^{\major}_t,P^{\minor}_t) := \mathfrak{o} \frac{\lambda}{2a} \E\left[P^{\minor}_{t} \middle| \F^{\major}_t \right] + g^{\major}_t, \quad 
g^{\minor}_{t}(P^{\major}_t,P^{\minor}_t) = \mathfrak{o} \frac{\lambda_0}{2 a_{0}}P^{\major}_{t} +  \mathfrak{o} \frac{\lambda}{2a}\E\left[P^{\minor}_{t}\middle|\F^{\major}_t \right] + g^{\minor}_t
\end{equation*}
are both in $L_{\F^{\major}}^2\left([0, T] \times \Omega ; \R\right)$. Using the fact that the result of Theorem~\ref{thm:fbsde_unique_continuation} holds for some $\mathfrak{p}\in[0,1)$, as assumed in the statement of Lemma~\ref{lemma:fbsde_continuation_add_o},  we can show that the FBSDE \eqref{eq:fbsde_continuation_add_o} has a unique solution 
\begin{equation*}
(\tilde{Q}^{\major}_t,\tilde{Q}^{\minor}_t,\tilde{P}^{\major}_t,\tilde{P}^{\minor}_t,\tilde{B}^{\major}_t,\tilde{B}^{\minor}_t,\tilde{Z}^{\major}_t,\tilde{Z}^{\minor}_t)\in \mathcal{S}.    
\end{equation*} 
This defines a map $\varphi: (P^{\major}_t,P^{\minor}_t) \mapsto (\tilde{Q}^{\major}_t,\tilde{Q}^{\minor}_t,\tilde{P}^{\major}_t,\tilde{P}^{\minor}_t,\tilde{B}^{\major}_t,\tilde{B}^{\minor}_t,\tilde{Z}^{\major}_t,\tilde{Z}^{\minor}_t)$ and further induces a map $\Phi:(Q^{\major}_t,Q^{\minor}_t,P^{\major}_t,P^{\minor}_t,B^{\major}_t,B^{\minor}_t) \mapsto (\tilde{Q}^{\major}_t,\tilde{Q}^{\minor}_t,\tilde{P}^{\major}_t,\tilde{P}^{\minor}_t,\tilde{B}^{\major}_t,\tilde{B}^{\minor}_t)$. In the following, we prove that $\Phi$ is a contraction for some sufficiently small $\mathfrak{o}>0$ and then has a unique fixed point. For the fixed point, FBSDE \eqref{eq:fbsde_continuation_add_o} reduces to FBSDE \eqref{eqn:fbsde_continuation} with $\mathfrak{p}$ replaced by $\mathfrak{p}+\mathfrak{o}$, which yields the result of \Cref{lemma:fbsde_continuation_add_o}.

To establish the contraction property of $\Phi$, we consider two pairs of processes
\begin{equation*}
(P^{\major}_t,P^{\minor}_t),(P'^{\major}_t,P'^{\minor}_t)\in L_{\F^{\major}}^2\left([0, T] \times \Omega ; \R\right) \times L_{\F^{\major}}^2\left([0, T] \times \Omega ; \R\right),    
\end{equation*}
and their corresponding images of $\varphi$:
\begin{equation*}
\begin{aligned}
(\tilde{Q}^{\major}_t,\tilde{Q}^{\minor}_t,\tilde{P}^{\major}_t,\tilde{P}^{\minor}_t,\tilde{B}^{\major}_t,\tilde{B}^{\minor}_t,\tilde{Z}^{\major}_t,\tilde{Z}^{\minor}_t) &:= \varphi(P^{\major}_t,P^{\minor}_t), \\
(\tilde{Q}'^{\major}_t,\tilde{Q}'^{\minor}_t,\tilde{P}'^{\major}_t,\tilde{P}'^{\minor}_t,\tilde{B}'^{\major}_t,\tilde{B}'^{\minor}_t,\tilde{Z}'^{\major}_t,\tilde{Z}'^{\minor}_t) &:= \varphi(P'^{\major}_t,P'^{\minor}_t).
\end{aligned}
\end{equation*}
Then we have
\begin{equation}
(\Delta P^{\major}_t,\Delta P^{\minor}_t) := (P^{\major}_t,P^{\minor}_t) - (P'^{\major}_t,P'^{\minor}_t) \in L_{\F^{\major}}^2\left([0, T] \times \Omega ; \R\right) \times L_{\F^{\major}}^2\left([0, T] \times \Omega ; \R\right),
\label{eq:delta_P}
\end{equation}
and
\begin{equation}
\begin{aligned}
& (\Delta \tilde{Q}^{\major}_t, \Delta \tilde{Q}^{\minor}_t, \Delta \tilde{P}^{\major}_t, \Delta \tilde{P}^{\minor}_t, \Delta \tilde{B}^{\major}_t, \Delta \tilde{B}^{\minor}_t, \Delta \tilde{Z}^{\major}_t, \Delta \tilde{Z}^{\minor}_t)\\
&\qquad := (\tilde{Q}^{\major}_t,\tilde{Q}^{\minor}_t,\tilde{P}^{\major}_t,\tilde{P}^{\minor}_t,\tilde{B}^{\major}_t,\tilde{B}^{\minor}_t,\tilde{Z}^{\major}_t,\tilde{Z}^{\minor}_t) \\
&\qquad\qquad - (\tilde{Q}'^{\major}_t,\tilde{Q}'^{\minor}_t,\tilde{P}'^{\major}_t,\tilde{P}'^{\minor}_t,\tilde{B}'^{\major}_t,\tilde{B}'^{\minor}_t,\tilde{Z}'^{\major}_t,\tilde{Z}'^{\minor}_t) \in \mathcal{S}.
\end{aligned}
\label{eq:delta_solution}
\end{equation}
Moreover, $ (\Delta \tilde{Q}^{\major}_t, \Delta \tilde{Q}^{\minor}_t, \Delta \tilde{P}^{\major}_t, \Delta \tilde{P}^{\minor}_t, \Delta \tilde{B}^{\major}_t, \Delta \tilde{B}^{\minor}_t, \Delta \tilde{Z}^{\major}_t, \Delta \tilde{Z}^{\minor}_t) $ satisfies the following FBSDE:
\begin{equation}
\left\{
\begin{aligned}
\dif (\Delta\tilde{Q}^{\major}_{t}) & = \frac{1}{2 a_{0}} \Delta\tilde{P}^{\major}_{t} \dif t, \\
\dif (\Delta\tilde{Q}^{\minor}_{t}) & = \frac{1}{2a} \Delta\tilde{P}^{\minor}_{t} \dif t, \\
- \dif (\Delta\tilde{P}^{\major}_{t}) & = \left(\mathfrak{p}\frac{\lambda}{2a}\E\left[\Delta\tilde{P}^{\minor}_{t} \middle| \F^{\major}_t \right] - 2\phi_0 \Delta\tilde{Q}^{\major}_{t} + \mathfrak{o}\frac{\lambda}{2a}\E\left[\Delta P^{\minor}_{t} \middle| \F^{\major}_t \right] \right) \dif t - \Delta\tilde{Z}^{\major}_t \dif W^0_t, \\
- \dif (\Delta\tilde{P}^{\minor}_{t}) & = \left(\mathfrak{p} \frac{\lambda_0}{2 a_{0}}\Delta\tilde{P}^{\major}_{t} +  \mathfrak{p} \frac{\lambda}{2a}\E\left[\Delta\tilde{P}^{\minor}_{t}\middle|\F^{\major}_t \right] - 2\phi \Delta\tilde{Q}^{\minor}_{t} + \right.\\
& \qquad \qquad \qquad \left. \mathfrak{o} \frac{\lambda_0}{2 a_{0}}\Delta P^{\major}_{t} +  \mathfrak{o} \frac{\lambda}{2a}\E\left[\Delta  P^{\minor}_{t}\middle|\F^{\major}_t \right]\right) \dif t - \Delta\tilde{Z}^{\minor}_t \dif \tilde{W}_t, \\
- \dif (\Delta\tilde{B}^{\major}_t) &= \left(\mathfrak{p}\frac{\lambda}{2a} \E\left[\Delta\tilde{P}^{\minor}_{t}\middle|\F^{\major}_t \right] - \frac{1}{2 a_0} A^{\major}_t \Delta\tilde{B}^{\major}_t \right.\\
&\qquad\qquad\qquad \left. + \mathfrak{o}\frac{\lambda}{2a}\E\left[\Delta P^{\minor}_{t}\middle|\F^{\major}_t \right] \right) \dif t - \tilde{Z}^{\major}_t \dif W^0_t,\\
- \dif (\Delta\tilde{B}^{\minor}_t) &= \left(\mathfrak{p} \frac{\lambda_0}{2 a_{0}} \Delta\tilde{P}^{\major}_{t} +  \mathfrak{p} \frac{\lambda}{2a} \E\left[\Delta\tilde{P}^{\minor}_{t} \middle| \F^{\major}_t \right] - \frac{1}{2 a} A^{\minor}_t \Delta\tilde{B}^{\minor}_t   \right.\\
& \qquad\qquad\qquad \left.+ \mathfrak{o} \frac{\lambda_0}{2 a_{0}}\Delta P^{\major}_{t} + \mathfrak{o} \frac{\lambda}{2a}\E\left[\Delta P^{\minor}_{t} \middle| \F^{\major}_t \right] \right) \dif t - \Delta\tilde{Z}^{\minor}_t \dif \tilde{W}_t,\\
\Delta\tilde{P}^{\major}_{t} &= - A^{\major}_t \Delta\tilde{Q}^{\major}_{t} + \Delta\tilde{B}^{\major}_t, \quad \Delta\tilde{P}^{\minor}_{t} = - A^{\minor}_t \Delta\tilde{Q}^{\minor}_{t} + \Delta\tilde{B}^{\minor}_t,\\
\Delta\tilde{Q}^{\major}_{0} &= 0, \, \Delta\tilde{Q}^{\major}_{T} = 0,\, \Delta\tilde{Q}^{\minor}_{0} = 0, \, \Delta\tilde{Q}^{\minor}_{T} = 0.
\end{aligned}\right.
\end{equation}
For any $\epsilon>0$, applying integration by parts on $\Delta\tilde{Q}^{\major}_{t}\Delta\tilde{P}^{\major}_{t}$ and $\Delta\tilde{Q}^{\minor}_{t}\Delta\tilde{P}^{\minor}_{t}$ yield
\begin{equation}
\begin{aligned}
\Delta\tilde{Q}^{\major}_{T-\epsilon}\Delta\tilde{P}^{\major}_{T-\epsilon} &= - \mathfrak{p}\frac{\lambda}{2a} \int_{0}^{T-\epsilon}\Delta\tilde{Q}^{\major}_{t}\E\left[\Delta\tilde{P}^{\minor}_{t} \middle| \F^{\major}_t \right] \dif t + 2\phi_0 \int_{0}^{T-\epsilon}(\Delta\tilde{Q}^{\major}_{t})^2 \dif t \\
& \qquad - \mathfrak{o}\frac{\lambda}{2a} \int_{0}^{T-\epsilon} \Delta\tilde{Q}^{\major}_{t} \E\left[\Delta P^{\minor}_{t} \middle| \F^{\major}_t \right] \dif t + \frac{1}{2a_0} \int_{0}^{T-\epsilon}(\Delta\tilde{P}^{\major}_{t})^2 \dif t \\
& \qquad +  \int_{0}^{T-\epsilon} \Delta\tilde{Q}^{\major}_{t} \Delta\tilde{Z}^{\major}_{t} \dif W^{0}_t,
\end{aligned}
\label{eq:delta_QP_major}
\end{equation}
and
\begin{equation}
\begin{aligned}
\Delta\tilde{Q}^{\minor}_{T-\epsilon} \Delta\tilde{P}^{\minor}_{T-\epsilon} & = - \mathfrak{p} \frac{\lambda_0}{2 a_{0}} \int_{0}^{T-\epsilon} \Delta\tilde{Q}^{\minor}_{t} \Delta\tilde{P}^{\major}_{t} \dif t -  \mathfrak{p} \frac{\lambda}{2a} \int_{0}^{T-\epsilon} \Delta\tilde{Q}^{\minor}_{t} \E\left[\Delta\tilde{P}^{\minor}_{t} \middle| \F^{\major}_t \right] \dif t \\
& \qquad + 2\phi \int_{0}^{T-\epsilon} (\Delta\tilde{Q}^{\minor}_{t})^2 \dif t - \mathfrak{o} \frac{\lambda_0}{2 a_{0}}\int_{0}^{T-\epsilon} \Delta\tilde{Q}^{\minor}_{t} \Delta P^{\major}_{t} \dif t \\
&\qquad - \mathfrak{o} \int_{0}^{T-\epsilon} \Delta\tilde{Q}^{\minor}_{t} \frac{\lambda}{2a}\E\left[\Delta P^{\minor}_{t} \middle| \F^{\major}_t \right] \dif t\\
& \qquad + \frac{1}{2a}\int_{0}^{T-\epsilon}(\Delta\tilde{P}^{\minor}_{t})^2 \dif t + \int_{0}^{T-\epsilon} \Delta\tilde{Q}^{\minor}_{t} \Delta\tilde{Z}^{\minor}_t \dif \tilde{W}_t.
\end{aligned}
\label{eq:delta_QP_minor}
\end{equation}
Given that $A^{\major}_t>0$, $\Delta\tilde{B}^{\major}\in\mathcal{H}^{\gamma}_{\F}$, and $\Delta\tilde{Q}^{\major}\in\mathcal{H}^{1}_{\F}$, we have $$\Delta\tilde{Q}^{\major}_{T-\epsilon}\Delta\tilde{P}^{\major}_{T-\epsilon} = \Delta\tilde{Q}^{\major}_{T-\epsilon}(-A^{\major}_{T-\epsilon} \Delta\tilde{Q}^{\major}_{T-\epsilon} + \Delta\tilde{B}^{\major}_{T-\epsilon})\leq \Delta\tilde{Q}^{\major}_{T-\epsilon} \Delta\tilde{B}^{\major}_{T-\epsilon} \to 0$$ as $\epsilon \to 0$. Similarly, we can obtain that $\Delta\tilde{Q}^{\minor}_{T-\epsilon}\Delta\tilde{P}^{\minor}_{T-\epsilon} \to 0$ as $\epsilon \to 0$. Thus, taking expectations on \eqref{eq:delta_QP_major} and \eqref{eq:delta_QP_minor}, and letting $\epsilon \to 0$ yield that
\begin{equation*}
\begin{aligned}
\frac{1}{2a_0} \E\left[\int_{0}^{T}\left(\Delta\tilde{P}^{\major}_{t}\right)^2\right] \dif t &\leq  \mathfrak{p}\frac{\lambda}{2a} \E\left[\int_{0}^{T}\Delta\tilde{Q}^{\major}_{t}\E\left[\Delta\tilde{P}^{\minor}_{t}\middle|\F^{\major}_t \right] \dif t \right] - 2\phi_0 \E\left[\int_{0}^{T}\left(\Delta\tilde{Q}^{\major}_{t}\right)^2 \dif t\right] \\
& \qquad + \mathfrak{o}\frac{\lambda}{2a} \E\left[\int_{0}^{T} \Delta\tilde{Q}^{\major}_{t} \E\left[\Delta P^{\minor}_{t}\middle|\F^{\major}_t \right] \dif t\right],
\end{aligned}
\end{equation*}
and
\begin{equation*}
\begin{aligned}
 \frac{1}{2a} \E\left[\int_{0}^{T}(\Delta\tilde{P}^{\minor}_{t})^2 \dif t \right]  &\leq \mathfrak{p} \frac{\lambda_0}{2 a_{0}} \E\left[\int_{0}^{T} \Delta\tilde{Q}^{\minor}_{t}\Delta\tilde{P}^{\major}_{t} \dif t\right] +  \mathfrak{p} \frac{\lambda}{2a} \E\left[\int_{0}^{T} \Delta\tilde{Q}^{\minor}_{t} \E\left[\Delta\tilde{P}^{\minor}_{t}\middle|\F^{\major}_t \right]\right] \dif t \\
& \qquad - 2\phi \E\left[\int_{0}^{T} \left(\Delta\tilde{Q}^{\minor}_{t}\right)^2 \dif t\right] + \mathfrak{o} \frac{\lambda_0}{2 a_{0}}\E\left[\int_{0}^{T} \Delta\tilde{Q}^{\minor}_{t} \Delta P^{\major}_{t} \dif t\right] \\
& \qquad + \mathfrak{o} \frac{\lambda}{2a} \left[\int_{0}^{T} \Delta\tilde{Q}^{\minor}_{t} \E\left[\Delta  P^{\minor}_{t}\middle|\F^{\major}_t \right] \dif t\right].
\end{aligned}
\end{equation*}
Given that $\mathfrak{p}\leq 1$ and $2xy \leq \frac{x^2}{\theta} + \theta y^2$ for any $\theta>0$, we obtain that, for any $\theta_1,\theta_2,\theta_3>0$,
\begin{equation*}
\begin{aligned}
 \frac{1}{2a_0} & \E\left[\int_{0}^{T}(\Delta\tilde{P}^{\major}_{t})^2\right] \dif t \leq  \frac{\lambda\theta_1}{4a} \E\left[\int_{0}^{T}(\Delta\tilde{Q}^{\major}_{t})^2 \dif t \right] + \frac{\lambda}{4a\theta_1} \E\left[\int_{0}^{T}(\Delta\tilde{P}^{\minor}_{t})^2 \dif t \right]  \\
& \qquad - 2\phi_0 \E\left[\int_{0}^{T}(\Delta\tilde{Q}^{\major}_{t})^2 \dif t\right]  + \mathfrak{o}\frac{\lambda}{4a} \E\left[\int_{0}^{T} 
(\Delta\tilde{Q}^{\major}_{t})^2 \dif t\right] + \mathfrak{o}\frac{\lambda}{4a} \E\left[\int_{0}^{T} (\Delta P^{\minor}_{t})^2 \dif t\right],
\end{aligned}
\end{equation*}
and 
\begin{equation*}
\begin{aligned}
 \frac{1}{2a} & \E\left[\int_{0}^{T}(\Delta\tilde{P}^{\minor}_{t})^2 \dif t \right]  \leq \frac{\lambda_0\theta_2}{4 a_{0}} \E\left[\int_{0}^{T} (\Delta\tilde{Q}^{\minor}_{t})^2 \dif t\right] + \frac{\lambda_0}{4 a_{0}\theta_2} \E\left[\int_{0}^{T} (\Delta\tilde{P}^{\major}_{t})^2 \dif t\right] \\
 &\qquad + \frac{\lambda\theta_3}{4a} \E\left[\int_{0}^{T} (\Delta\tilde{Q}^{\minor}_{t})^2 \dif t\right] +  \frac{\lambda}{4a\theta_3} \E\left[\int_{0}^{T} (\Delta\tilde{P}^{\minor}_{t})^2 \dif t\right] - 2\phi \E\left[\int_{0}^{T} (\Delta\tilde{Q}^{\minor}_{t})^2 \dif t\right]\\
&\qquad + \mathfrak{o}\frac{\lambda_0}{4 a_{0}} \E\left[\int_{0}^{T} (\Delta\tilde{Q}^{\minor}_{t})^2 \dif t\right] + \mathfrak{o}\frac{\lambda_0}{4 a_{0}} \E\left[\int_{0}^{T} (\Delta P^{\major}_{t})^2 \dif t\right] \\
 &\qquad + \mathfrak{o}\frac{\lambda}{4a} \E\left[\int_{0}^{T} (\Delta\tilde{Q}^{\minor}_{t})^2 \dif t\right] +  \mathfrak{o}\frac{\lambda}{4a} \E\left[\int_{0}^{T} (\Delta P^{\minor}_{t})^2 \dif t\right].
\end{aligned}
\end{equation*}
These two inequalities yield that
\begin{equation*}
\begin{aligned}
 & \left(\frac{1}{2a_0}-\frac{\lambda_0}{4 a_{0}\theta_2}\right)\E\left[\int_{0}^{T}(\Delta\tilde{P}^{\major}_{t})^2 \dif t\right]  + \left(\frac{1}{2a}-\frac{\lambda}{4a\theta_1}-\frac{\lambda}{4a\theta_3}\right) \E\left[\int_{0}^{T}(\Delta\tilde{P}^{\minor}_{t})^2 \dif t \right] \\
 & + \left(2\phi_0-\frac{\lambda\theta_1}{4a}\right) \E\left[\int_{0}^{T}(\Delta\tilde{Q}^{\major}_{t})^2 \dif t\right]  + \left(2\phi-\frac{\lambda_0\theta_2}{4 a_{0}}-\frac{\lambda\theta_3}{4a}\right) \E\left[\int_{0}^{T} (\Delta\tilde{Q}^{\minor}_{t})^2 \dif t\right]\\
 &\leq  \frac{\lambda}{4a} \mathfrak{o}\E\left[\int_{0}^{T}(\Delta\tilde{Q}^{\major}_{t})^2 \dif t\right]  + \left(\frac{\lambda_0}{4 a_{0}}+\frac{\lambda}{4a}\right)\mathfrak{o} \E\left[\int_{0}^{T} (\Delta\tilde{Q}^{\minor}_{t})^2 \dif t\right]\\
&\qquad  + \mathfrak{o}\frac{\lambda_0}{4 a_{0}} \E\left[\int_{0}^{T} (\Delta P^{\major}_{t})^2 \dif t\right]  +  \mathfrak{o}\frac{\lambda}{2a} \E\left[\int_{0}^{T} (\Delta P^{\minor}_{t})^2 \dif t\right].
\end{aligned}
\end{equation*}
Based on Assumption~\ref{assumption:fbsde}, we can choose $\theta_1,\theta_2,\theta_3>0$ such that
\begin{equation*}
\frac{1}{2a_0}-\frac{\lambda_0}{4 a_{0}\theta_2},\quad \frac{1}{2a}-\frac{\lambda}{4a\theta_1}-\frac{\lambda}{4a\theta_3},\quad 2\phi_0-\frac{\lambda\theta_1}{4a},\quad 2\phi-\frac{\lambda_0\theta_2}{4 a_{0}}-\frac{\lambda\theta_3}{4a}
\end{equation*}
are all positive. This implies that there exists a sufficiently small $\mathfrak{o}>0$ such that
\begin{equation}
\begin{aligned}
 \E\left[\int_{0}^{T}[(\Delta\tilde{P}^{\major}_{t})^2+(\Delta\tilde{P}^{\minor}_{t})^2] \dif t\right] \leq   \mathfrak{o} \, C \, \E\left[\int_{0}^{T} [(\Delta P^{\major}_{t})^2 + (\Delta P^{\minor}_{t})^2] \dif t\right].
\end{aligned}
\label{eq:delta_P_contraction}
\end{equation}

Now, using the integrable properties in \eqref{eq:delta_P} and \eqref{eq:delta_solution}, we can apply the result of Lemma~\ref{lemma:fbsde_solution_p=0} to the FBSDE in \eqref{eq:fbsde_continuation_add_o}. More specifically, \eqref{eq:fbsde_solution_B_norm_bound_p=0} implies that
\begin{equation}
\begin{aligned}
\|\Delta\tilde{B}^{\major}\|_{\mathcal{H}^\gamma}^2 
&\leq C \E\left[\int_0^T\left|\mathfrak{p}\frac{\lambda}{2a}\E\left[\Delta\tilde{P}^{\minor}_{t}\middle|\F^{\major}_{t} \right] +  \mathfrak{o}\frac{\lambda}{2a}\E\left[\Delta P^{\minor}_{t}\middle|\F^{\major}_{t} \right]\right|^2 \dif t\right] \\
& \leq C \E\left[\int_0^T (\Delta\tilde{P}^{\minor}_{t})^2 \dif t\right] + C \mathfrak{o}^2 \E\left[\int_0^T (\Delta\tilde{P}^{\minor}_{t})^2 \dif t\right]\\
\end{aligned}
\label{eq:delta_B_contraction_major}
\end{equation}
and
\begin{equation}
\begin{aligned}\|\Delta\tilde{B}^{\minor}\|_{\mathcal{H}^\gamma}^2
&\leq C \E\left[\int_0^T\left|\mathfrak{p} \frac{\lambda_0}{2 a_{0}}\Delta\tilde{P}^{\major}_{t} +  \mathfrak{p} \frac{\lambda}{2a}\E\left[\Delta\tilde{P}^{\minor}_{t}\middle|\F^{\major}_{t} \right] + \mathfrak{o} \frac{\lambda_0}{2 a_{0}}\Delta P^{\major}_{t} +  \mathfrak{o} \frac{\lambda}{2a}\E\left[\Delta P^{\minor}_{t}\middle|\F^{\major}_{t} \right] \right|^2 \dif t\right] \\
& \leq C \E\left[\int_0^T (\Delta\tilde{P}^{\major}_{t})^2 \dif t\right] + C \E\left[\int_0^T (\Delta\tilde{P}^{\minor}_{t})^2 \dif t\right] \\
& \qquad \qquad \qquad + C \mathfrak{o}^2 \E\left[\int_0^T (\Delta\tilde{P}^{\major}_{t})^2 \dif t\right] + C \mathfrak{o}^2 \E\left[\int_0^T (\Delta\tilde{P}^{\minor}_{t})^2 \dif t\right].
\end{aligned}
\label{eq:delta_B_contraction_minor}
\end{equation}
In addition, \eqref{eq:fbsde_Q_norm_bound} yields that
\begin{equation}
\begin{aligned}
\|\Delta\tilde{Q}^{\major}\|_{\mathcal{H}^1}^2
\leq C \|\Delta\tilde{B}^{\major}\|_{\mathcal{H}^\gamma}^2, \quad \|\Delta\tilde{Q}^{\minor}\|_{\mathcal{H}^1}^2
\leq C \|\Delta\tilde{B}^{\minor}\|_{\mathcal{H}^\gamma}^2.
\end{aligned}
\label{eq:delta_Q_contraction}
\end{equation}
We combine  \eqref{eq:delta_P_contraction}--\eqref{eq:delta_Q_contraction} to obtain that, when $\mathfrak{o}$ is sufficiently small,
\begin{equation*}
\begin{aligned}
&\|\Delta\tilde{Q}^{\major}\|_{\mathcal{H}^1}^2 + \|\Delta\tilde{Q}^{\minor}\|_{\mathcal{H}^1}^2 +  \E\left[\int_{0}^{T}[(\Delta\tilde{P}^{\major}_{t})^2+(\Delta\tilde{P}^{\minor}_{t})^2] \dif t\right] + \|\Delta\tilde{B}^{\major}\|_{\mathcal{H}^1}^2 + \|\Delta\tilde{B}^{\minor}\|_{\mathcal{H}^1}^2 \\
& \leq C \mathfrak{o}^2 \E\left[\int_{0}^{T} \left((\Delta P^{\major}_{t})^2 + (\Delta P^{\minor}_{t})^2\right) \dif t\right].
\end{aligned}
\end{equation*}
for some $\mathfrak{o}>0$. Hence, there exists a sufficiently small $\bar{\mathfrak{o}}>0$ such that $\Phi$ is a contraction for any $\mathfrak{o}\in[0,\bar{\mathfrak{o}}]$, and the choice of $\mathfrak{o}$ is independent of $\mathfrak{p}$, $(P^{\major}_t,P^{\minor}_t)$ and $(P'^{\major}_t,P'^{\minor}_t)$.
\end{proof}

To summarize, Lemma~\ref{lemma:fbsde_solution_p=0} proves that Theorem~\ref{thm:fbsde_unique_continuation} holds when $\mathfrak{p} = 0$. In addition, we emphasize that the choice of $\mathfrak{o}$ in Lemma~\ref{lemma:fbsde_continuation_add_o} is independent from the choices of $\mathfrak{p}$,  $(P^{\major}_t,P^{\minor}_t)$ and $(P'^{\major}_t,P'^{\minor}_t)$. Hence, we can prove the statement in Theorem~\ref{thm:fbsde_unique_continuation} by induction on $\mathfrak{p}$, and further prove Theorem~\ref{thm:nash_fbsde_unique}.

\subsection{Proof of Theorem~\texorpdfstring{\ref{thm:nash_fbsde_verification}}{2}} \label{proof:nash_fbsde_verification}

We recall that $\left(Q^{\major,*}, Q^{\minor,*} ,P^{\major,*}, P^{\minor,*}, Z^{\major,*}, Z^{\minor,*}\right)$ is the unique solution of the FBSDE \eqref{eq:fbsde_nash}. In particular, $\left(Q^{\major,*}, P^{\major,*}, Z^{\major,*}\right)$ is the solution of the FBSDE \eqref{eq:fbsde_major_nash} given that $v^{\minor}_t = v^{\minor,*}$, and $\left(Q^{\minor,*}, P^{\minor,*}, Z^{\minor,*}\right)$ is the solution of the MV-FBSDE \eqref{eq:fbsde_minor_nash} given that $v^{\major}_t = v^{\major,*}$.

First, we prove that $v^{\major,*}$ is the optimal strategy of Problem~\ref{Problem:major_trader_major} given the representative minor trader's strategy is $v^{\minor,*}$, where $v^{\major,*}$ and $v^{\minor,*}$ are defined in \eqref{eq:nash_equilibrium_strategies}. To see that, let $v^{\major}\in\mathcal{A}^{\major,\kappa}$  be a generic admissible strategy of the major trader with the corresponding inventory process
$Q^{\major}_t= q^{\major}_0 + \int_{0}^{t} v^{\major}_s \dif s$.  Then we have
\begin{equation}
\begin{aligned}
\left(Q^{\major}_{T-\epsilon} - Q^{\major,*}_{T-\epsilon}\right) P^{\major,*}_{T-\epsilon} &= \int_0^{T-\epsilon} \left(Q^{\major}_{t} - Q^{\major,*}_{t}\right) \left(-\lambda \E\left[v^{\minor,*}_{t} \middle| \F^{\major}_{t} \right] + 2\phi \left(Q^{\major,*}_{t} - R_{t}\right)\right) \dif t \\
&\quad + \int_0^{T-\epsilon} P^{\major,*}_t \left(v^{\major}_{t} - v^{\major,*}_{t} \right) \dif t + \int_0^{T-\epsilon}\left(Q^{\major}_{t} - Q^{\major,*}_{t} \right) Z_t^{\major,*} \dif W_t^0. 
\end{aligned}
\label{eq:major_integrate}
\end{equation}
Given the facts that
\begin{equation*}
\begin{aligned}
\left|\E\left[Q^{\major}_{s} P^{\major,*}_{s}\right]\right| &= \left|\E\left[Q^{\major}_{s} \left(- A^{\major}_s Q^{\major,*}_{s} + B^{\major,*}_{s}\right) \right] \right| \\
&\leq \frac{C}{T-s} \E\left[\left|Q^{\major}_{s}\right|^{2} + \left|Q^{\major,*}_{s}\right|^{2} \right] + \E\left[\left|Q^{\major}_{s} B^{\major,*}_{s} \right| \right]\\
& = \frac{C}{T-s} \E\left[ \left| \int_{s}^{T} v^{\major}_{u} \dif u \right|^{2} + \left| \int_{s}^{T} v^{\major}_{u} \dif u \right|^{2} \right] + \E\left[\left|Q^{\major}_{s} B^{\major,*}_{s} \right|\right] \\
& \leq C \E\left[ \int_{s}^{T} \left|v^{\major}_{u} \right|^{2} \dif u  + \int_{s}^{T} \left|v^{\major}_{u}\right|^{2} \dif u \right] + \E\left[\left|Q^{\major}_{s} B^{\major,*}_{s}\right|\right] 
 \stackrel{s\nearrow T}{\rightarrow} 0
\end{aligned}
\end{equation*}
and
\begin{equation*}
 -\E\left[Q^{\major,*}_{T-\epsilon} P^{\major,*}_{T-\epsilon}\right] =  \E\left[A_{T-\epsilon}^0(Q^{\major,*}_{T-\epsilon})^2\right]  -\E\left[Q^{\major,*}_{T-\epsilon} B^{\major,*}_{T-\epsilon}\right] \geq - \E\left[Q^{\major,*}_{T-\epsilon} B^{\major,*}_{T-\epsilon}\right] \stackrel{\epsilon \searrow 0}{\rightarrow} 0,
\end{equation*}
we take expectations on both sides of \eqref{eq:major_integrate} and let $\epsilon\searrow0$,
\begin{equation*}
\begin{aligned}
\E\left[\int_0^T \left(Q^{\major}_{t}-Q^{\major,*}_{t}\right)  \left(-\lambda \E\left[ v^{\minor,*}_{t} \middle| \F^{\major}_{t} \right] \right. \right.&\left.\left. + 2\phi\left(Q^{\major,*}_{t} - R_t \right)\right) \dif t \right] 
 \geq \E\left[-\int_0^T P_t^{\major,*} \left(v^{\major}_{t} - v^{\major,*}_{t} \right) \dif t\right].    
\end{aligned}
\end{equation*}
Hence, using the FBSDE \eqref{eq:fbsde_major_nash}, we have
\begin{equation*}
\begin{aligned}
& J^{\major}\left(v^{\major};v^{\minor,*},R,q^{\major}_0\right) - J^{\major}\left(v^{\major,*};v^{\minor,*},R,q^{\major}_0\right)  \\
&= \E\left[\int_0^T \left(Q^{\major}_{t} - Q^{\major,*}_{t} \right) \left(-\lambda \E\left[v^{\minor,*}_{t} \middle|\F^{\major}_{t} \right] + 2\phi_0 \left(Q^{\major,*}_{t} - R_t \right) \right) \dif t \right.\\
&\qquad\qquad\qquad \left. + a_0 \int_0^T \left[\left(v^{\major}_{t}\right)^{2} - \left(v^{\major,*}_{t}\right)^{2} \right] \dif t  + \phi_0 \int_{0}^{T} \left( Q^{\major}_{t} - Q^{\major,*}_{t} \right)^{2} \dif t\right]\\
&\geq \E\left[-\int_0^T P_t^{\major,*} \left(v^{\major}_{t} - v^{\major,*}_{t} \right) \dif t + a_0 \int_0^T \left[\left(v^{\major}_{t}\right)^2 - \left(v^{\major,*}_{t}\right)^2\right] \dif t + \phi_0 \int_0^T \left(Q^{\major}_{t} - Q^{\major,*}_t\right)^2 \dif t\right]\\
&= \E\left[-2a_0 \int_0^T v_t^{\major,*} \left(v^{\major}_{t} - v^{\major,*}_{t} \right) \dif t + a_0 \int_0^T \left[\left(v^{\major}_{t}\right)^2 - \left(v^{\major,*}_{t}\right)^2\right] \dif t  + \phi_0 \int_0^T \left(Q^{\major}_{t} - Q^{\major,*}_t\right)^2 \dif t\right] \\
&= \E\left[a_0 \int_0^T \left(v^{\major}_{t} - v^{\major,*}_{t}\right)^2 \dif t  + \phi_0 \int_0^T \left(Q^{\major}_{t} - Q^{\major,*}_t\right)^2 \dif t\right] \geq 0.
\end{aligned}
\end{equation*}
This implies that, given the strategy of the representative minor trader $v^{\minor,*}$, $v^{\major,*}$ minimizes $J^{\major}\left(v^{\major};v^{\minor,*},R,q^{\major}_0\right)$. Therefore, $v^{\major,*}$ is the optimal strategy of the major trader.

Next, we prove that $v^{\minor,*}$ is the optimal strategy of the representative minor trader in Problem~\ref{Problem:minor_trader} given the major trader's strategy $v^{\major,*}$ and $\mu_t = \mu^{*}_t:=\E\left[v^{\minor,*}_t \middle|\F^{\major}_t\right]$. Here $v^{\major,*}$ and $v^{\minor,*}$ are defined in \eqref{eq:nash_equilibrium_strategies}. To see this, let $(Q^{*},P^{*},v^{*})$ be the solution of the FBSDE \eqref{eq:fbsde_minor_nash}, and $v^{\minor}\in\mathcal{A}^{\minor}$ be any admissible strategy of the representative minor trader with the corresponding inventory process $Q^{\minor}_{t} = \mathcal{Q}^{\minor}_0 + \int_{0}^{t} v^{\minor}_s \dif s $. Similarly, we can prove that
\begin{equation*}
\E\left[\int_0^T \left(Q^{\minor}_{t} - Q^{\minor,*}_{t} \right) \left[-\left(\lambda_0 v^{\major}_{t} + \lambda \mu^{*}_t\right) + 2 Q^{\minor,*}_{t}\right] \dif t \middle| \mathcal{Q}^{\minor}_0\right]
\geq \E\left[\int_0^T P^{\minor}_t \left( v^{\minor}_{t} - v^{\minor,*}_{t} \right) \dif t  \middle| \mathcal{Q}^{\minor}_0 \right],    
\end{equation*}
and
\begin{equation*}
\begin{aligned}
& J^{\minor}\left(v^{\minor}; v^{\major,*}, \mu^{*}, \Q^{\minor}_{0}\right) - J^{\minor}\left(v^{\minor,*}; v^{\major,*}, \mu^{*}, \Q^{\minor}_{0}\right) \\
&= \E\left[\int_0^T \left(Q^{\minor}_{t} - Q^{\minor,*}_{t}\right) \left[- \left(\lambda_0 v^{\major,*}_{t} + \lambda \mu^{*}_t\right) + 2 Q^{\minor,*}_{t} \right] \dif t  \right.\\
&\qquad \left. + a \int_0^T \left[\left(v^{\minor}_{t}\right)^{2} - \left(v^{\minor,*}_{t} \right)^{2} \right] \dif t + \phi \int_0^T \left(Q^{\minor}_{t} - Q^{\minor,*}_{t} \right)^{2} \dif t \middle| \mathcal{Q}^{\minor}_0\right]\\
& \geq \E\left[ \int_0^T P^{\minor,*}_t \left( v^{\minor}_{t} - v^{\minor,*}_{t}\right) \dif t  + a \int_0^T \left[\left(v^{\minor}_{t}\right)^{2} - \left(v^{\minor,*}_{t} \right)^{2} \right] \dif t + \phi \int_0^T \left(Q^{\minor}_{t} - Q^{\minor,*}_{t} \right)^{2} \dif t \middle| \mathcal{Q}^{\minor}_0 \right]\\
&= \E\left[2a\int_0^T v^{\minor,*}_{t} \left( v^{\minor}_{t} - v^{\minor,*}_{t}\right) \dif t + a \int_0^T \left[\left(v^{\minor}_{t}\right)^{2} - \left(v^{\minor,*}_{t} \right)^{2} \right] \dif t + \phi \int_0^T \left(Q^{\minor}_{t} - Q^{\minor,*}_{t} \right)^{2} \dif t \middle| \mathcal{Q}^{\minor}_0 \right]\\
&= \E\left[a \int_0^T \left( v^{\minor}_{t} - v^{\minor,*}_{t}\right)^2 \dif t + \phi \int_0^T \left(Q^{\minor}_{t} - Q^{\minor,*}_{t} \right)^{2} \dif t \middle| \mathcal{Q}^{\minor}_0 \right]\geq 0,
\end{aligned}
\end{equation*}
which implies that $v^{\minor,*}$ minimizes $J^{\minor}\left(v^{\minor}; v^{\major,*}, \mu^{*}, \Q^{\minor}_{0}\right)$. Therefore, $v^{\minor,*}$ is the optimal strategy of the representative minor trader.

Under the Nash equilibrium, we can further derive the optimal cost functional of the major trader and the representative minor trader. For the major trader, 
\begin{equation*}
\begin{aligned}
&\qquad J^{\major}\left(v^{\major,*};v^{\minor,*},R,q^{\major}_0\right) \\
&= \E\left[-\lambda \int_0^T \left(Q^{\major,*}_{t} - R_t \right) \E\left[v^{\minor,*}_{t}\middle|\F^{\major}_{t}\right] \dif t + a_0 \int_0^T \left(v^{\major,*}_{t}\right)^2 \dif t  + \phi_0 \int_0^T \left(Q^{\major,*}_{t} - R_t\right)^2 \dif t\right] \\ 
&= \E\left[\frac{1}{2} \int_0^T \left(Q^{\major,*}_{t} - R_t\right) \left[- \lambda \E\left[v^{\minor,*}_{t}\middle|\F^{\major}_{t}\right] + 2 \phi_0 \left(Q^{\major,*}_{t} - R_t\right)\right] \dif t  \right.\\
&\qquad\qquad \left. - \frac{\lambda}{2}  \int_0^T \left(Q^{\major,*}_{t} - R_t\right) \E\left[v^{\minor,*}_{t}\middle|\F^{\major}_{t}\right] \dif t + a_0 \int_0^T \left(v^{\major,*}_{t}\right)^2 \dif t \right]\\
&= \E\left[\frac{1}{2}\int_0^T \left(Q^{\major,*}_{t} - R_t\right) \dif P^{\major,*}_{t} -\frac{\lambda}{2}  \int_0^T \left(Q^{\major,*}_{t} - R_t\right) \E\left[v^{\minor,*}_{t}\middle|\F^{\major}_{t}\right] \dif t + a_0 \int_0^T \left(v^{\major,*}_{t}\right)^2 \dif t \right]\\
&= \E\left[\frac{1}{2}\int_0^T \left(Q^{\major,*}_{t} - R_t\right) \dif P^{\major,*}_{t}  + \frac{1}{2} \int_0^T P^{\major,*}_{t} \dif Q^{\major,*}_t - \frac{\lambda}{2}  \int_0^T \left(Q^{\major,*}_{t} - R_t\right) \E\left[v^{\minor,*}_{t}\middle|\F^{\major}_{t}\right] \dif t \right]\\
&= \E\left[\frac{1}{2} \int_0^T P^{\major,*}_{t} \dif R_t - \frac{\lambda}{2}  \int_0^T \left(Q^{\major,*}_{t} - R_t \right) \E\left[v^{\minor,*}_{t}\middle|\F^{\major}_{t}\right] \dif t \right] \\
&= \E\left[a_0 \int_0^T v^{\major,*}_{t} \dif R_t - \frac{\lambda}{2}  \int_0^T \left(Q^{\major,*}_{t} - R_t \right) \E\left[v^{\minor,*}_{t}\middle|\F^{\major}_{t}\right] \dif t \right],
\end{aligned}
\end{equation*}
where the last two equality follows from
\begin{equation*}
\begin{aligned}
\int_0^T \left(Q^{\major,*}_{t} - R_t\right) \dif P^{0,*}_{t}  +  \int_0^T P^{\major,*}_{t} \dif \left(Q^{\major,*}_{t} - R_t\right) = (Q^{\major,*}_{T}-R_T) P^{\major,*}_{T} - (Q^{\major,*}_{0}-R_0) P^{0,*}_{0} = 0. 
\end{aligned}    
\end{equation*}
For the representative minor trader, we combine \eqref{eq:consistency_condition}, \eqref{eq:fbsde_minor_nash_origin} and \eqref{eq:condidate_control_minor} to obtain
\begin{equation*}
\begin{aligned}
& \qquad J^{\major}\left(v^{\major,*};v^{\minor,*},R,q^{\major}_0\right)\\
&= \E\left[- \int_0^T Q^{\minor,*}_{t} \left(\lambda_0 v^{\major,*}_{t} + \lambda \mu_t\right) \dif t + a \int_0^T \left(v^{\minor,*}_{t}\right)^2 \dif t + \phi \int_0^T  \left(Q^{\minor,*}_{t}\right)^2 \dif t\middle| \mathcal{Q}^{\minor}_0 \right]\\ 
&= \E\left[- \frac{1}{2} \int_0^T Q^{\minor,*}_{t} \left(\lambda_0 v^{\major,*}_{t} + \lambda \mu_t - 2\phi Q^{\minor,*}_{t} \right) \dif t \right.\\
&\qquad\qquad \left. - \frac{1}{2}\int_0^T Q^{\minor,*}_{t} \left(\lambda_0 v^{\major,*}_{t} + \lambda \mu_t\right) \dif t + a \int_0^T \left(v^{\minor,*}_{t}\right)^2 \dif t \middle| \mathcal{Q}^{\minor}_0 \right]\\ 
&= \E\left[\frac{1}{2}\int_0^T Q^{\minor,*}_{t}  \dif P^{\minor,*}_t + \frac{1}{2} \int_0^T P^{\minor,*}_t \dif Q^{\minor,*}_t - \frac{1}{2}\int_0^T Q^{\minor,*}_{t} \left(\lambda_0 v^{\major,*}_{t} + \lambda \mu_t\right) \dif t  \middle| \mathcal{Q}^{\minor}_0 \right]\\ 
&= \E\left[\frac{1}{2} \mathcal{Q}^{\minor}_0 P^{\minor,*}_0 - \frac{1}{2}\int_0^T Q^{\minor,*}_{t} \left(\lambda_0 v^{\major,*}_{t} + \lambda \mu_t\right) \dif t  \middle| \mathcal{Q}^{\minor}_0 \right]\\ 
&= \E\left[a \mathcal{Q}^{\minor}_0 v^{\minor,*}_0 - \frac{1}{2}\int_0^T Q^{\minor,*}_{t} \left(\lambda_0 v^{\major,*}_{t} + \lambda \E\left[v^{\minor,*}_{t}\middle|\F^{\major}_{t}\right] \right) \dif t  \middle| \mathcal{Q}^{\minor}_0 \right].
\end{aligned}
\end{equation*}

\subsection{Proof of Throrem~\texorpdfstring{\ref{thm:approximate_Nash_equilibrium}}
{3}} \label{proof:approximate_Nash_equilibrium}
Without loss of generality, we focus on the major trader and the first minor trader $(i=1)$ in the proof. Recall that $\F^{\major}_t = \sigma\left(W^{0}_{s},0\leq s \leq t \right)$, $\F^{1}_t = \sigma\left(W^{0}_{s},W^{1}_{s},0\leq s \leq t \right)$, and $\mathcal{Q}^{1}_{0}$ is independent of $(W^{0},W^{1})$. 

As a preliminary step, we first state some distributional properties and optimality.  First, given that $v^{\minor,*} = \psi^{\minor}(\mathcal{Q}^{\minor}_0, W^{0}, W)$  and $v^{i,*} = \psi^{\minor}(\mathcal{Q}^{i}_0, W^{0}, W^i) $ (defined in \eqref{eq:approximate_Nash_trading_rate}) have the same distribution conditioned on $\F^{\major}_t$, we have 
\begin{equation}
  \E\left[v^{i,*}_t\middle| \F^{\major}_t\right] = \E\left[v^{\minor,*}_t\middle| \F^{\major}_t\right] = \mu^{*}_t  \text{ for any } i=1,2,\dots,n.   
\end{equation}
Moreover, comparing the definition of $v^{i,*}$ in \eqref{eq:approximate_Nash_trading_rate} and the expression of the Nash equilibrium of the Major-Minor MFG in \eqref{eq:MFG_Nash_Watanabe}, we observe that $v^{\major,*}$ minimizes $J^{\major}(v^{\major};v^{1,*},R,q^{\major}_0)$ and $v^{1,*}$ minimizes $J^{\minor}(v^{1}; v^{\major,*}, \mu^{*}, \Q^{1}_{0})$. More precisely, for any generic $v^{\major}\in\mathcal{A}^{\major,\kappa}$ (with $\mathcal{A}^{\major,\kappa}$ defined in \eqref{eq:admissible_set_major_constrained}), we have
\begin{equation}
J^{\major}(v^{\major,*};v^{1,*},R,q^{\major}_0) \leq J^{\major}(v^{\major};v^{1,*},R,q^{\major}_0),
\label{eq:optimality_major}
\end{equation}
with $J^{\major}$ defined in \eqref{eq:cost_functional_major}, 
and, for any generic $v^{1} \in \mathcal{A}^{1}$, we have
\begin{equation}
J^{\minor}(v^{1,*}; v^{\major,*}, \mu^{*}, \Q^{1}_{0}) \leq J^{\minor}(v^{1}; v^{\major,*}, \mu^{*}, \Q^{1}_{0}),   
\label{eq:optimality_minor}
\end{equation}
where $J^{\minor}$ is defined in \eqref{eq:cost_functional_minor}.

With the above properties of $v^{\major,*}$ and $v^{1,*},\dots, v^{n,*}$, we are now ready to prove \Cref{thm:approximate_Nash_equilibrium}, which contains three key steps:
\begin{itemize}[leftmargin=*]
\item \underline{Step 1:} We derive the convergence rate of the average trading rates of minor traders, $
\bar{\mu}_t^* = \frac{1}{N} \sum_{i=1}^{N} v_t^{i,*} $. More specifically, we provide upper bounds of order $\mathcal{O}(1/N)$ for both $\E\left[\int_0^T \left(\mu^{*}_t - \bar{\mu}_t^*\right)^2 d t \right]$ and $\E\left[\int_0^T \left(\mu^{*}_t - \bar{\mu}_t^* \right)^2 d t \middle| \Q^{1}_{0}=q^{1}_{0} \right]$.
\item \underline{Step 2:} We show the approximate Nash property for the major trader in \eqref{eq:approximate_Nash_major}. More specifically, {we show that $v^{\major,*}$ is $\epsilon$-optimal for the objective $J^{N,\major}\left(v^{\major};v^{1,*},\dots,v^{N,*}, R, q^{\major}_{0} \right)$ with $\epsilon = \mathcal{O}(1/\sqrt{N})$.}
\item \underline{Step 3:} We show the approximate Nash property for the representative minor trader in \eqref{eq:approximate_Nash_minor}. More specifically, {we show that $v^{1,*}$ is $\epsilon$-optimal for the  objective $J^{N,1}\left(v^{1}; v^{\major,*}, v^{2,*},\dots,v^{N,*}, q^{1}_0\right)$ with $\epsilon = \mathcal{O}(1/\sqrt{N})$.}
\end{itemize}
We now proceed to the formal proof.

\underline{Step 1.} We prove upper bounds on $\E\left[\int_0^T \left(\mu^{*}_t - \bar{\mu}_t^*\right)^2 d t \right]$ and $\E\left[\int_0^T \left(\mu^{*}_t - \bar{\mu}_t^* \right)^2 d t \middle| \Q^{1}_{0}=q^{1}_{0} \right]$ of order $\mathcal{O}(1/N)$.  

First, we consider the cross terms in $\E\left[\int_0^T \left(\mu^{*}_t - \bar{\mu}_t^*\right)^2 d t \right]$ and $\E\left[\int_0^T \left(\mu^{*}_t - \bar{\mu}_t^* \right)^2 d t \middle| \Q^{1}_{0}=q^{1}_{0} \right]$. For $i \neq j$ and $i,j \neq 1$, because $(\mathcal{Q}^{i}_0, W^{i})$ and $(\mathcal{Q}^{j}_0, W^{j})$ are independent, we have   $v^{i,*}_t = \psi^{\minor}(\mathcal{Q}^{i}_0, W^{0}, W^i)$ and $v^{j,*}_t = \psi^{\minor}(\mathcal{Q}^{j}_0, W^{0}, W^j)$ are independent conditional on $\mathcal{F}^{\major}_t$. Hence,
\begin{equation*}
\begin{aligned}
&\qquad\E\left[\int_0^T  \left(\mu^{*}_t - v^{i,*}_t\right)\left(\mu^{*}_t - v^{j,*}_t\right) \dif t \middle| \Q^{1}_{0}=q^{1}_{0}\right]\\
&= \int_0^T  \E\left[\left(\mu^{*}_t - v^{i,*}_t\right)\left(\mu^{*}_t - v^{j,*}_t\right) \right] \dif t \\ 
&= \int_0^T  \E\left[\E\left[\left(\mu^{*}_t - v^{i,*}_t\right)\left(\mu^{*}_t - v^{j,*}_t\right)\middle| \mathcal{F}^{\major}_t\right] \right] \dif t \\
&= \int_0^T  \E\left[\E\left[\left(\mu^{*}_t - v^{i,*}_t\right)\middle| \mathcal{F}^{\major}_t\right]\E\left[\left(\mu^{*}_t - v^{j,*}_t\right)\middle| \mathcal{F}^{\major}_t\right] \right] \dif t =0,
\end{aligned}
\end{equation*}
where the second  last equality holds since,  $v^{i,*}_t$ and $v^{j,*}_t$ are independent conditioned on $\mathcal{F}^{\major}_t$.
In addition, for $j\neq 1$ and for any generic strategy $v^{1}\in \mathcal{A}^{1}$ with $\mathcal{A}^{1}$ defined in \eqref{eq:admissible_set_i_constrained},
\begin{equation}
\begin{aligned}
&\E\left[\int_0^T  \left(\mu^{*}_t - v^{1}_t\right)\left(\mu^{*}_t - v^{j,*}_t\right) \dif t \middle| \Q^{1}_{0}=q^{1}_{0}\right] \\
&= \E\left[\int_0^T  \E\left[\left(\mu^{*}_t - v^{1}_t\right)\left(\mu^{*}_t - v^{j,*}_t\right)\middle| \tilde{\F}^{1}_t \right] \dif t \middle| \Q^{1}_{0}=q^{1}_{0}\right] \\ 
&= \E\left[\int_0^T  \left(\mu^{*}_t - v^{1}_t\right) \E\left[\left(\mu^{*}_t - v^{j,*}_t\right)\middle| \F^{\major}_t \right] \dif t \middle| \Q^{1}_{0}=q^{1}_{0}\right] = 0,
\end{aligned}
\label{eq:cross_term_conditional_gerenal_1}
\end{equation}
where the first equality holds by the tower property of the conditional expectation, and the second equality holds because $v^{1}$ is $\tilde{\F}^{1}_t$-measurable and $v^{j,*}$ is independent of $(\mathcal{Q}^{1}_{0},W^1)$. {In particular, substituting $v^{1}_{t} = v^{1,*}_{t}$ into \eqref{eq:cross_term_conditional_gerenal_1} yields}
\begin{equation*}
\E\left[\int_0^T  \left(\mu^{*}_t - v^{1,*}_t\right)\left(\mu^{*}_t - v^{j,*}_t\right) \dif t \middle| \Q^{1}_{0}=q^{1}_{0}\right] = 0.    
\end{equation*}
Therefore, for any $i \neq j$, we have
\begin{equation}
\E\left[\int_0^T  \left(\mu^{*}_t - v^{i,*}_t\right)\left(\mu^{*}_t - v^{j,*}_t\right) \dif t \middle| \Q^{1}_{0}=q^{1}_{0}\right] = 0,
\label{eq:cross_term_conditional}
\end{equation}
and then
\begin{equation}
\E\left[\int_0^T  \left(\mu^{*}_t - v^{i,*}_t\right)\left(\mu^{*}_t - v^{j,*}_t\right) \dif t \right] = 0.    
\label{eq:cross_term_unconditional}
\end{equation}

Next, we consider the quadratic terms in $\E\left[\int_0^T \left(\mu^{*}_t - \bar{\mu}_t^*\right)^2 d t \right]$ and $\E\left[\int_0^T \left(\mu^{*}_t - \bar{\mu}_t^* \right)^2 d t \middle| \Q^{1}_{0}=q^{1}_{0} \right]$.  For $i=1,2,\dots,N$, we have
\begin{equation}
\begin{aligned}
\E\left[\int_0^T \left(\mu^{*}_t - v^{i,*}_t\right)^2 \dif t \right] 
& = \E\left[ \int_0^T \E\left[\left(\mu^{*}_t\right)^2 + \left(v^{i,*}_t\right)^2 - 2 \mu^{*}_t v^{i,*}_t \middle| \F^{\major}_t \right] \dif t \right] \\
&= \E\left[ \int_0^T \left(\left(\mu^{*}_t\right)^2 + \E\left[ \left(v^{i,*}_t\right)^2 \middle| \F^{\major}_t \right] - 2 \mu^{*}_t \E\left[ v^{i,*}_t \middle| \F^{\major}_t \right] \right)\dif t \right]   \\
&= \E\left[ \int_0^T \left( \E\left[ \left(v^{i,*}_t\right)^2 \middle| \F^{\major}_t \right] - \left(\mu^{*}_t\right)^2 \right)\dif t \right] \\
& \leq \E\left[\int_0^T \left(v^{i,*}_t\right)^2 d t\right] \leq  \kappa, 
\end{aligned}
\label{eq:quadratic_term_unconditional}
\end{equation}
where the first equality is due to the tower property of conditional expectation, the second equality holds because $\mu^{*}_t$ is $\mathcal{F}^{\major}_t$-measurable, the third equality is due to the fact $\E\left[ v^{i,*}_t \middle| \F^{\major}_t \right] = \mu^{*}_t$, the first inequality holds because $\left(\mu^{*}_t\right)^2\geq 0$, and the last inequality holds by the definition of $\mathcal{A}^{\major,\kappa}$ in \eqref{eq:admissible_set_major_constrained}. Then, for any $i=2,3,\dots,N$, we further obtain that
\begin{equation}
\begin{aligned}
\E\left[\int_0^T \left(\mu^{*}_t - v^{i,*}_t\right)^2 \dif t \middle| \Q^{1}_{0}=q^{1}_{0} \right] = \E\left[\int_0^T \left(\mu^{*}_t - v^{i,*}_t\right)^2 \dif t \right] \leq  \kappa.    
\end{aligned}
\label{eq:quadratic_term_conditional_not_1}
\end{equation}
In addition, for any generic strategy $v^{1}\in \mathcal{A}^{1}$,
\begin{equation}
\begin{aligned}
&\E\left[\int_0^T \left(\mu^{*}_t - v^{1}_t\right)^2 \dif t \middle| \Q^{1}_{0}=q^{1}_{0} \right] \\
&\leq 2 \E\left[\int_0^T \left(\E\left[v^{1,*}_t\middle| \F^{\major}_t\right]\right)^2 \dif t \middle| \Q^{1}_{0}=q^{1}_{0} \right] + 2 \E\left[\int_0^T \left(v^{1}_t\right)^2 d t \middle| \Q^{1}_{0}=q^{1}_{0}\right]\\
&= 2 \E\left[\int_0^T \left(\E\left[v^{1,*}_t\middle| \F^{\major}_t\right]\right)^2 \dif t \right] + 2 \E\left[\int_0^T \left(v^{1}_t\right)^2 d t \middle| \Q^{1}_{0}=q^{1}_{0}\right]\\
&\leq 2 \E\left[\int_0^T \left| v^{1,*}_t\right|^{2} \dif t \right] + 2 \E\left[\int_0^T \left|v^{1}_t\right|^2 d t \middle| \Q^{1}_{0}=q^{1}_{0}\right] \leq 2 \kappa + 2 K\left(q^{1}_{0}\right),    
\end{aligned}
\label{eq:quadratic_term_conditional_gerenal_1}
\end{equation}
where the first inequality comes from the facts that $(x-y)^{2}\leq 2x^2+2y^2$ $\forall x,y\in\mathbb{R}$ and $\mu^{*}_t = \E\left[v^{1,*}_t\middle| \F^{\major}_t\right]$; the first equality holds by the independence of $\mathcal{F}^{\major}$ and $\mathcal{Q}^{1}_{0}$; the second inequality holds because $\E\left[\left(\E\left[v^{1,*}_t\middle| \F^{\major}_t\right]\right)^2 \right] \leq \E\left[\left| v^{1,*}_t\right|^{2}  \right] $, and the final inequality holds by \eqref{eq:condition_kappa} and the definition of  $\mathcal{A}^{1}$ in \eqref{eq:admissible_set_i_constrained}. In particular, substituting $v^{1}_{t} = v^{1,*}_{t}$ into \eqref{eq:quadratic_term_conditional_gerenal_1} yields
\begin{equation}
\E\left[\int_0^T \left(\mu^{*}_t - v^{1,*}_t\right)^2 \dif t \middle| \Q^{1}_{0}=q^{1}_{0} \right] \leq 2 \kappa + 2 K\left(q^{1}_{0}\right).
\label{eq:quadratic_term_conditional_1}
\end{equation}

Finally, combining \eqref{eq:cross_term_unconditional} and \eqref{eq:quadratic_term_unconditional}, we obtain
\begin{equation}
\begin{aligned}
& \E\left[\int_0^T \left(\mu^{*}_t - \bar{\mu}_t^* \right)^2 d t \right] \\
&= \frac{1}{N^2}\sum_{i\neq j}\E\left[\int_0^T  \left(\mu^{*}_t - v^{i,*}_t\right)\left(\mu^{*}_t - v^{j,*}_t\right) d t \right] + \frac{1}{N^2}\sum_{i=1}^{N}\E\left[\int_0^T \left(\mu^{*}_t - v^{i,*}_t\right)^2 d t \right] \leq \frac{1}{N} \kappa.
\end{aligned}
\label{eq:convengence_rate_unconditional}
\end{equation}
In addition, combining \eqref{eq:cross_term_conditional}, \eqref{eq:quadratic_term_conditional_1} and \eqref{eq:quadratic_term_conditional_not_1}, we  obtain
\begin{equation}
\begin{aligned}
\E\left[\int_0^T \left(\mu^{*}_t - \bar{\mu}_t^* \right)^2 d t \middle| \Q^{1}_{0}=q^{1}_{0} \right]
&= \frac{1}{N^2}\sum_{i\neq j}\E\left[\int_0^T  \left(\mu^{*}_t - v^{i,*}_t\right)\left(\mu^{*}_t - v^{j,*}_t\right) d t \middle| \Q^{1}_{0}=q^{1}_{0} \right] \\
&\qquad\qquad + \frac{1}{N^2}\sum_{i=1}^{N}\E\left[\int_0^T \left(\mu^{*}_t - v^{i,*}_t\right)^2 d t \middle| \Q^{1}_{0}=q^{1}_{0} \right] \\
&\leq \frac{1}{N^2} \left((N+1)\kappa +  2 K\left(q^{1}_{0}\right)  \right).
\end{aligned}
\label{eq:convengence_rate_conditional}
\end{equation}

\underline{Step 2.} Now, we are ready to prove the approximate Nash property for the major trader in \eqref{eq:approximate_Nash_major}.  For any generic admissible strategy $v^{\major}\in \mathcal{A}^{\major,\kappa}$ of the major trader, we denote $Q^{\major}_t = q^{\major}_{0} + \int_0^t v^{\major}_s \dif s$. Then we have
\begin{equation}
\begin{aligned}
&J^{N,\major}\left(v^{\major};v^{1,*},\dots,v^{N,*}, R, q^{\major}_{0} \right) - J^{N,\major}\left(v^{\major,*};v^{1,*},\dots,v^{N,*},R, q^{\major}_{0} \right)\\
& \geq J^{N,\major}\left(v^{\major};v^{1,*},\dots,v^{N,*}, R, q^{\major}_{0} \right) - J^{\major}(v^{\major};v^{1,*},R,q^{\major}_0) \\
& \qquad + J^{\major}(v^{\major,*};v^{1,*},R,q^{\major}_0) - J^{N,\major}\left(v^{\major,*};v^{1,*},\dots,v^{N,*},R, q^{\major}_{0} \right)\\
&= \left\{\E\left[-\lambda \int_0^T \left(Q^{\major}_{t} - R_t \right) \bar{\mu}^{*}_t d t + a_0 \int_0^T \left(v^{\major}_{t}\right)^2 \dif t + \phi_0 \int_0^T \left(Q^{\major}_{t} - R_t\right)^{2} d t \right] \right.\\
& \qquad \left. -  \E\left[-\lambda \int_0^T \left(Q^{\major}_{t} - R_t \right) \mu^{*}_t d t + a_0 \int_0^T \left(v^{\major}_{t}\right)^2 \dif t + \phi_0 \int_0^T \left(Q^{\major}_{t} - R_t \right)^2 d t \right] \right\}\\
& \qquad +  \left\{\E\left[-\lambda \int_0^T \left(Q^{\major,*}_{t} - R_t \right) \mu^{*}_t d t + a_0 \int_0^T \left(v^{\major,*}_{t} \right)^2 \dif t + \phi_0 \int_0^T \left(Q^{\major,*}_{t} - R_t\right)^{2} d t\right] \right.\\
& \qquad \left. - \E\left[-\lambda \int_0^T (Q^{\major,*}_{t}-R_t) \bar{\mu}^{*}_t d t + a_0 \int_0^T (v^{\major,*}_{t})^2 \dif t + \phi_0 \int_0^T (Q^{\major,*}_{t}-R_t)^2 d t\right] \right\} \\
&= \E\left[-\lambda \int_0^T \left(Q^{\major}_{t} - R_t \right) \left(\bar{\mu}^{*}_t - \mu^{*}_t\right) d t \right] + \E\left[-\lambda \int_0^T \left(Q^{\major,*}_{t} - R_t \right) \left(\mu^{*}_t - \bar{\mu}^{*}_t\right) d t \right] \\
&= \lambda\E\left[\int_0^T \left(Q^{\major}_{t} - Q^{\major,*}_{t}\right) \left(\mu^{*}_t - \bar{\mu}^{*}_t \right) d t \right],
\end{aligned}
\label{eq:approximate_Nash_major_bound_1}
\end{equation}
where the inequality holds by \eqref{eq:optimality_major} and the first equality holds by the definitions of $J^{N,\major}$ in \eqref{eq:cost_functional_major_N} and $J^{\major}$ in \eqref{eq:cost_functional_major}.

We note that for any time $t\in[0,T]$, $Q^{\major}_{t} = Q^{\major}_{T} - \int_{t}^{T}v^{\major}_{t} dt = - \int_{t}^{T}v^{\major}_{t} \dif t $. Hence, by Cauchy-Schwarz inequility,
\begin{equation}
\begin{aligned}
\E\left[\left(Q^{\major}_{t}\right)^2\right] &= \E\left[\left(-\int_{t}^{T}v^{\major}_{t} \dif t\right)^2\right]\leq (T-t)\E\left[\int_{t}^{T}\left(v^{\major}_{t}\right)^2 \dif t\right]\leq T \kappa.
\end{aligned}
\label{eq:bound_Q_2_major}
\end{equation}
Then
\begin{equation}
\begin{aligned}
&\left|\E\left[\int_0^T \left(Q^{\major}_{t} - Q^{\major,*}_{t}\right) \left(\mu^{*}_t - \bar{\mu}^{*}_t \right) d t \right]\right| \\
&\leq \left(\E\left[\int_0^T \left(Q^{\major}_{t} - Q^{\major,*}_{t}\right)^{2}  d t \right]\right)^{1/2} \left(\E\left[\int_0^T \left(\mu^{*}_t - \bar{\mu}^{*}_t \right)^2 d t \right]\right)^{1/2}\\
&\leq \left(2 \E\left[\int_0^T \left(Q^{\major}_{t}\right)^{2} d t \right] + 2 \E\left[\int_0^T \left(Q^{\major,*}_{t}\right)^{2}  d t \right] \right)^{1/2}\left(\E\left[\int_0^T \left(\mu^{*}_t - \bar{\mu}^{*}_t \right)^2 d t \right]\right)^{1/2}\\
&\leq \left(4 T^2 \kappa\right)^{1/2}\left(\frac{1}{N}\kappa\right)^{1/2} = \frac{2 T\kappa}{\sqrt{N}},
\end{aligned}
\label{eq:approximate_Nash_major_bound_2}
\end{equation}
where the first inequality holds by the Cauchy-Schwarz inequality, the second inequality is based on the fact that $(x-y)^{2}\leq 2x^2 + 2 y^2$ $\forall x,y\in \mathbb{R}$, and the last inequality holds by \eqref{eq:bound_Q_2_major} and \eqref{eq:convengence_rate_unconditional}.

Finally, we combine \eqref{eq:approximate_Nash_major_bound_1} and \eqref{eq:approximate_Nash_major_bound_2} to obtain 
\begin{equation*}
J^{N,\major}\left(v^{\major};v^{1,*},\dots,v^{N,*}, R, q^{\major}_{0} \right)   \geq J^{N,\major}\left(v^{\major,*};v^{1,*},\dots,v^{N,*}, R, q^{\major}_{0} \right) - \lambda \frac{2T\kappa}{\sqrt{N}},
\end{equation*}
which is the desired result in \eqref{eq:approximate_Nash_major}.

\underline{Step 3.} Next, we prove the approximate Nash property for the first minor trader in \eqref{eq:approximate_Nash_minor}. For a generic admissible strategy $v^{1} \in \mathcal{A}^{1}$ of the first minor trader, we denote $Q^{1}_t = \mathcal{Q}^{1}_{0} + \int_0^t v^{1}_{s} \dif s$.  Then, we have
\begin{equation}
\begin{aligned}
&J^{N,1}\left(v^{1}; v^{\major,*}, v^{2,*},\dots,v^{N,*}, q^{1}_0\right) - J^{N,1}\left(v^{1,*}; v^{\major,*}, v^{2,*},\dots,v^{N,*}, q^{1}_0\right) \\
& \geq J^{N,1}\left(v^{1}; v^{\major,*}, v^{2,*},\dots,v^{N,*}, q^{1}_0\right) - J^{\minor}(v^{1}; v^{\major,*}, \mu^{*}, \Q^{\minor}_{0})\\
&\qquad + J^{\minor}(v^{1,*}; v^{\major,*}, \mu^{*}, \Q^{\minor}_{0})  - J^{N,1}\left(v^{1,*}; v^{\major,*}, v^{2,*},\dots,v^{N,*}, q^{1}_0\right) \\
&= \left\{\E\left[- \int_0^T Q^{1}_{t} \left(\lambda_0 v^{\major}_{t} + \lambda \left(\frac{1}{N}v^{1}_t + \frac{1}{N}\sum_{i=2}^{N} v^{i,*}_t \right)\right) d t + a \int_0^T \left(v^{1}_{t}\right)^{2} \dif t + \phi \int_0^T\left(Q^{1}_{t}\right)^2 \dif t \middle| \Q^{1}_{0}=q^{1}_{0} \right]\right. \\
& \qquad \left. - \E\left[- \int_0^T Q^{1}_{t} \left(\lambda_0 v^{\major,*}_{t} + \lambda \mu^{*}_t\right) d t + a \int_0^T \left(v^{1}_{t}\right)^2 \dif t + \phi \int_0^T\left(Q^{1}_{t}\right)^2 \dif t \middle| \Q^{1}_{0}=q^{1}_{0} \right] \right\} \\
& \qquad + \left\{ \E\left[- \int_0^T Q^{1,*}_{t} \left(\lambda_0 v^{\major,*}_{t} + \lambda \mu^{*}_t \right) d t + a \int_0^T \left(v^{1,*}_{t}\right)^2 \dif t + \phi \int_0^T\left(Q^{1,*}_{t}\right)^2 \dif t \middle| \Q^{1}_{0}=q^{1}_{0} \right] \right. \\
&\qquad \left. - \E\left[- \int_0^T Q^{1,*}_{t} \left(\lambda_0 v^{\major,*}_{t} + \lambda \left(\bar{\mu}^{*}_t \right)\right) d t + a \int_0^T \left(v^{1,*}_{t}\right)^2 \dif t + \phi \int_0^T\left(Q^{1,*}_{t}\right)^2 \dif t \middle| \Q^{1}_{0}=q^{1}_{0} \right] \right\}\\
&= - \lambda \E\left[\int_0^T Q^{1}_{t}   \left(\frac{1}{N}v^{1}_t + \frac{1}{N}\sum_{i=2}^{N} v^{i,*}_t - \mu^{*}_t \right) d t \middle| \Q^{1}_{0}=q^{1}_{0} \right] + \lambda \E\left[ \int_0^T Q^{1,*}_{t} \left(\bar{\mu}^{*}_t - \mu^{*}_t\right) d t \middle| \Q^{1}_{0}=q^{1}_{0} \right].
\end{aligned}
\label{eq:approximate_Nash_minor_1}
\end{equation}
where the inequality holds by \eqref{eq:optimality_minor}, and the first equality holds by  the definitions of $J^{N,1}$ in \eqref{eq:cost_functional_minor_N} and $J^{\minor}$ in \eqref{eq:cost_functional_minor}.

Note that for any $t\in[0,T]$, we have $Q^{1}_{t} = Q^{1}_{T} - \int_{t}^{T}v^{1}_{t} \dif t = - \int_{t}^{T}v^{1}_{t} \dif t $. Hence, by Cauchy-Schwarz inequility and the definition of $\mathcal{A}^{1}$ in \eqref{eq:admissible_set_i_constrained}, it holds that
\begin{equation}
\begin{aligned}
\E\left[\left(Q^{1}_{t}\right)^2\middle| \Q^{1}_{0}=q^{1}_{0} \right] &= \E\left[\left(-\int_{t}^{T} v^{1}_{t} \dif t\right)^2 \middle| \Q^{1}_{0}=q^{1}_{0} \right] \leq (T-t)\E\left[\int_{t}^{T}\left(v^{1}_{t}\right)^2 \dif t \middle| \Q^{1}_{0}=q^{1}_{0} \right] \leq T K\left(q^{1}_{0}\right).
\end{aligned}    
\label{eq:bound_Q_2_minor}
\end{equation}
In addition, we combine \eqref{eq:cross_term_conditional_gerenal_1}, \eqref{eq:cross_term_conditional},   \eqref{eq:quadratic_term_conditional_not_1}, and \eqref{eq:quadratic_term_conditional_gerenal_1} to obtain 
\begin{equation}
\begin{aligned}
&\E\left[\int_0^T \left(\frac{1}{N}v^{1}_t + \frac{1}{N}\sum_{i=2}^{N} v^{i,*}_t - \mu^{*}_t \right)^2 d t \middle| \Q^{1}_{0}=q^{1}_{0} \right] \\
&= \frac{1}{N^2} \left[\E\left[\int_0^T \left(v^{1}_t - \mu^{*}_t \right)^2 d t \middle| \Q^{1}_{0}=q^{1}_{0} \right] + \sum_{i=2}^{N} \E\left[\int_0^T \left( v^{i,*}_t - \mu^{*}_t \right)^2 d t \middle| \Q^{1}_{0}=q^{1}_{0} \right] \right]\\
&\qquad + \frac{1}{N^2}\sum_{i\neq j, i,j\neq 1}\E\left[\int_0^T  \left(\mu^{*}_t - v^{i,*}_t\right)\left(\mu^{*}_t - v^{j,*}_t\right) d t \middle| \Q^{1}_{0}=q^{1}_{0} \right] \\
&\qquad + \frac{1}{N^2}\sum_{j=2}^{N}\E\left[\int_0^T  \left(\mu^{*}_t - v^{1}_t\right)\left(\mu^{*}_t - v^{j,*}_t\right) d t \middle| \Q^{1}_{0}=q^{1}_{0} \right]\\
&\leq \frac{1}{N^2} \left[ 2 \kappa + 2 K\left(q^{1}_{0}\right) + (N-1)\kappa \right] = \frac{1}{N^2} \left[ (N+1) \kappa + 2 K\left(q^{1}_{0}\right)\right].
\end{aligned}
\label{eq:convengence_rate_conditional_general_1}
\end{equation}
Hence, to bound the two terms in the last line of \eqref{eq:approximate_Nash_minor_1}, we have
\begin{equation}
\begin{aligned}
&\left|\E\left[\int_0^T Q^{1}_{t}   \left(\frac{1}{N}v^{1}_t + \frac{1}{N}\sum_{i=2}^{N} v^{i,*}_t - \mu^{*}_t \right) d t  \middle| \Q^{1}_{0}=q^{1}_{0} \right]\right|\\
&\leq \left(\E\left[\int_0^T (Q^{1}_{t})^2  d t \middle| \Q^{1}_{0}=q^{1}_{0} \right]\right)^{1/2}\left(\E\left[\int_0^T \left(\frac{1}{N}v^{1}_t + \frac{1}{N}\sum_{i=2}^{N} v^{i,*}_t - \mu^{*}_t \right)^2 d t \middle| \Q^{1}_{0}=q^{1}_{0} \right]\right)^{1/2} \\
&\leq \left[ T^{2} K\left(q^{1}_{0}\right) \right]^{1/2} \left[\frac{1}{N^2} \left[ (N+1) \kappa + 2 K\left(q^{1}_{0}\right)\right] \right]^{1/2} \\
&= \frac{T}{N} \sqrt{ K\left(q^{1}_{0}\right) \left[ (N+1) \kappa + 2 K\left(q^{1}_{0}\right)\right] },
\end{aligned}
\label{eq:approximate_Nash_minor_2}
\end{equation}
where the first inequality holds by the Cauchy-Schwarz inequality, and the second inequality holds by \eqref{eq:bound_Q_2_minor} and \eqref{eq:convengence_rate_conditional_general_1}.
In addition, we have
\begin{equation}
\begin{aligned}
&\left|\E\left[ \int_0^T Q^{1,*}_{t} \left( \frac{1}{N}\sum_{i=1}^{N} v^{i,*}_t - \mu^{*}_t\right) d t \middle| \Q^{1}_{0}=q^{1}_{0} \right] \right|\\
&\leq \left(\E\left[\int_0^T (Q^{1,*}_{t})^2  d t \middle| \Q^{1}_{0}=q^{1}_{0} \right]\right)^{1/2}\left(\E\left[\int_0^T \left( \frac{1}{N}\sum_{i=1}^{N} v^{i,*}_t - \mu^{*}_t\right)^2 d t \middle| \Q^{1}_{0}=q^{1}_{0} \right]\right)^{1/2} \\
&\leq \left([ T^{2} K\left(q^{1}_{0}\right) \right)^{1/2} \left[\frac{1}{N^2} \left[ (N+1) \kappa + 2 K\left(q^{1}_{0}\right)\right] \right]^{1/2} \\
&= \frac{T}{N} \sqrt{ K\left(q^{1}_{0}\right) \left[ (N+1) \kappa + 2 K\left(q^{1}_{0}\right)\right] },
\end{aligned}
\label{eq:approximate_Nash_minor_3}
\end{equation}
where the first inequality holds by the Cauchy-Schwarz inequality, and the second inequality holds by \eqref{eq:bound_Q_2_minor} and \eqref{eq:convengence_rate_conditional}.
Hence, we combine \eqref{eq:approximate_Nash_minor_1}, \eqref{eq:approximate_Nash_minor_2} and \eqref{eq:approximate_Nash_minor_3} to obtain
\begin{equation*}
\begin{aligned}
J^{N,1}\left(v^{1}; v^{\major,*}, v^{2,*},\dots,v^{N,*}, q^{1}_0\right) &\geq J^{N,1}\left(v^{1,*}; v^{\major,*}, v^{2,*},\dots,v^{N,*}, q^i_0\right) \\
&\quad - \frac{2T}{N} \sqrt{ K\left(q^{1}_{0}\right)  \left[ (N+1) \kappa + 2 K\left(q^{1}_{0}\right)\right] } ,   
\end{aligned}
\end{equation*}
which is the desired result in \eqref{eq:approximate_Nash_minor}.

\subsection{Proof of Theorem~\texorpdfstring{\ref{thm:major_trader_decomposition}}{5}}\label{proof:major_trader_decomposition}
First, based on \eqref{eq:periodic_decomposition_periodic} and \eqref{eq:periodic_decomposition_trend}, we have $\left(Q^{\major,*}_{t},Q^{\minor,*}_{t}\right)$ in 
\eqref{eq:periodic_decomposition} satisfies \eqref{eq:major_trader_ODE}. In addition, the ODE \eqref{eq:periodic_decomposition_trend} has a unique solution. 
Hence, it remains to show that the ODE~\eqref{eq:periodic_decomposition_periodic} has a unique periodic solution with period $T/n$. To see this, we rewrite the ODE~\eqref{eq:periodic_decomposition_periodic} as a first-order ODE in the matrix form:
\begin{equation}
\begin{aligned}
\frac{d \bm{X}_t}{d t} = \bm{A}  \bm{X}_t + \bm{B}_t,
\end{aligned}
\label{eq:major_trader_periodic_term_matrix}
\end{equation}
where
\begin{equation*}
\bm{X}_t = \begin{pmatrix} \tilde{Q}^{\major,\period}_{t} \\ Q^{\minor,\period}_{t} \\ \tilde{v}^{\major,\period}_{t} \\ v^{\minor,\period}_{t} \end{pmatrix},
\quad \bm{A} = \begin{pmatrix}
0 & 0 & 1 & 0\\
0 & 0 & 0 & 1\\
\frac{\phi_0}{a_0} & 0 & 0 & - \frac{\lambda}{2 a_0}\\
0 & \frac{\phi}{a} & - \frac{\lambda_0}{2 a} & - \frac{\lambda}{2 a}
\end{pmatrix}, 
\quad \bm{B}_t = \begin{pmatrix} 0 \\ 0 \\ - \frac{\phi_0}{a_0} \tilde{R}_t\\ 0 \end{pmatrix}.
\end{equation*}
The solution of the initial problem \eqref{eq:major_trader_periodic_term_matrix} can be expressed in a matrix form as
\begin{equation}
    \bm{X}_t = e^{\bm{A}t}\bm{X}_0 + \int_0^t e^{\bm{A}(t-s)}\bm{B}_s \dif s.
\end{equation}
So, $\bm{X}_t$ will be a periodic solution of \eqref{eq:major_trader_periodic_term_matrix} (with period $T/n$) as long as it holds that $\bm{X}_0 = \bm{X}_{T/n}$, i.e.,
\begin{equation}
\left(\bm{I}-e^{\bm{A}T/n}\right)\bm{X}^{(1)}_0 =  \int_0^{T/n} e^{\bm{A}(T/n-s)}\bm{B}_s^{(1)} \dif s.
\end{equation}
 Moreover,  $\bm{I}-e^{\bm{A}T/n}$ is invertible as $\bm{A}$ is invertible, which leads to the uniqueness of $\bm{X}_0$. Hence \eqref{eq:major_trader_periodic_term_matrix} has a unique periodic solution with period $T/n$, implying that the ODE~\eqref{eq:periodic_decomposition_periodic} has a unique periodic solution with period $T/n$.

\subsection{Proof of Proposition~\texorpdfstring{\ref{prop:major_periodicity_no_game}}{2}}\label{proof:major_periodicity_no_game}
By setting $\lambda=0$ in \eqref{eq:major_trader_ODE}, it follows that $Q^{\major}_{t}$ satisfies the following ODE:
\begin{equation*}
\left\{\begin{array}{l}
\dfrac{\dif^2 Q^{\major}_{t}}{\dif t^2} - \dfrac{\phi_0}{a_0} Q^{\major}_{t} = - \dfrac{\phi_0}{a_0} R_t, \\
\text { s.t. } Q^{\major}_{0}=q^{\major}_{0}, Q^{\major}_{T}=0.
\end{array}\right.
\end{equation*}
Denote $Q^{\major}_{t} = q^{\major}_{0} (1-t/T) + \tilde{Q}^{\major}_{t}$ and $R_{t} = q^{\major}_{0} (1-t/T) + \tilde{R}_{t}$, then $\tilde{R}_t$ is a periodic function with period $T/n$ and satisfies $R_{k\cdot T/n} = 0$ for $k=0,1,\ldots,n$ by \Cref{defn:periodic_strategy}. Moreover, $\tilde{Q}^{0}_{t}$ satisfies the following ODE on $[0,T]$:
\begin{equation}
\left\{\begin{array}{l}
\dfrac{\dif^2 \tilde{Q}^{\major}_{t}}{d t^2} - \dfrac{\phi_0}{a_0} \tilde{Q}^{\major}_{t} = - \dfrac{\phi_0}{a_0} \tilde{R}_t, \\
\text { s.t. } \tilde{Q}^{\major}_{0}=0, \tilde{Q}^{\major}_{T}=0.
\end{array}\right.
\label{eq:ODE_Q_tilde}
\end{equation}

Now, by the uniqueness and existence result of \eqref{eq:ODE_Q_tilde}, we can choose $Q^{\star}_t$ over the interval $[0,T/n]$ such that
\begin{equation*}
\left\{\begin{array}{l}
\dfrac{d^2 Q^{\star}_t}{d t^2} - \dfrac{\phi_0}{a_0} Q^{\star}_t = - \dfrac{\phi_0}{a_0} \tilde{R}_t, \\
\text { s.t. } Q^{\star}_{0}=0, Q^{\star}_{T/n}=0.
\end{array}\right.
\end{equation*}
Then, by the periodicity of $\tilde{R}_{t}$,  we have
\begin{equation*}
    \tilde{Q}^{\major}_{t} = Q^{\star}_{t-kT/n},\quad  \frac{k}{n}T \leq t < \frac{k+1}{n}T,\,k=0,1,\ldots,n-1
\end{equation*}
satisfy \eqref{eq:ODE_Q_tilde}. In addition, $\tilde{Q}^{\major}_{t}$ is a piecewise differentiable function with period $T/n$ satisfying
\[\tilde{Q}^{\major}_{k\cdot T/n } = 0, \quad k=0,1,\ldots,n.\]
Hence, it follows from Definition~\ref{defn:periodic_strategy} that the optimal strategy of the major trader, $Q^{\major}_t$, is also a periodic strategy with $n$ periods.

\subsection{Proof of Proposition~\ref{prop:periodic_trading_rate}}\label{proof:periodic_trading_rate}
To start, by straightforward calculations using \eqref{eq:periodic_component_cosine} and \eqref{eq:major_rate_lambda_0}, we have
\begin{align}
\tilde{v}^{\major,\period}_{t} &= \frac{b\phi_0\omega}{K}[\left(d_{0} d_{1}^{2} + d_{1}e_{0}e_{1} + d_{0} e_{1}^{2}\right) \cos\left(\omega t\right) + e_{0} e_{1}^{2} \sin\left(\omega t\right)], \label{eq:trading_rate_periodic_component_cosine_major}\\
v^{\minor,\period}_{t} &= \frac{b\phi_0\omega}{K}[- d_{0} e_{0} e_{1} \cos\left(\omega t\right) - e_{0}(d_{0} d_{1} + e_{0}e_{1}) \sin\left(\omega t\right)], \label{eq:trading_rate_periodic_component_cosine_minor}\\
v^{\major,*,\lambda=0}_t &= - \frac{q^{\major}_0}{T} + b \frac{\phi_0}{d_0}\omega \cos\left(\omega t\right). \label{eq:trading_rate_periodic_component_cosine_major_nogame}
\end{align}

First, comparing \eqref{eq:trading_rate_periodic_component_cosine_major} and \eqref{eq:trading_rate_periodic_component_cosine_minor} with \eqref{eq:trading_rate_amplitude}, we have 
\begin{equation*}
\begin{aligned}
&\cos(\varphi^{\major}) = \frac{b\phi_0\omega}{K\mathcal{A}\left[\tilde{v}^{\major,\period}\right]} \left(d_{0} d_{1}^{2} + d_{1}e_{0}e_{1} + d_{0} e_{1}^{2}\right)>0, \quad 
\sin(\varphi^{\major}) = \frac{b\phi_0\omega}{K\mathcal{A}\left[\tilde{v}^{\major,\period}\right]} e_{0} e_{1}^{2} > 0,\\
&\cos(\varphi^{\minor}) = 
- \frac{b\phi_0\omega}{K\mathcal{A}\left[v^{\minor,\period}\right]} d_{0} e_{0} e_{1} < 0,\quad 
\sin(\varphi^{\minor}) = - \frac{b\phi_0\omega}{K\mathcal{A}\left[v^{\minor,\period}\right]}e_{0}(d_{0} d_{1} + e_{0}e_{1}) < 0,
\end{aligned}
\end{equation*}
and
\begin{equation*}
\tan(\varphi^{\major}) = \frac{e_{0} e_{1}^{2}}{d_{0} d_{1}^{2} + d_{1}e_{0}e_{1}+d_{0} e_{1}^{2}},\quad \tan(\varphi^{\minor}) = \frac{d_{0} d_{1} + e_{0}e_{1}}{d_{0}e_{1}}.
\end{equation*}
Hence, 
\begin{equation*}
\varphi^{\major} = \arctan\left(\frac{e_{0} e_{1}^{2}}{d_{0} d_{1}^{2} + d_{1}e_{0}e_{1}+d_{0} e_{1}^{2}}\right) \in \left[0,\frac{\pi}{2}\right],\quad \varphi^{\minor} = \arctan\left(\frac{d_{0} d_{1} + e_{0}e_{1}}{d_{0}e_{1}}\right) - \pi \in \left[- \pi, - \frac{\pi}{2}\right].
\end{equation*}
Moreover, given that
\begin{equation*}
\frac{e_{0} e_{1}^{2}}{d_{0} d_{1}^{2} + d_{1}e_{0}e_{1}+d_{0} e_{1}^{2}} \leq \frac{d_{0} d_{1} e_{1} + e_{0} e_{1}^{2}}{d_{0} e_{1}^{2}} =  \frac{d_{0} d_{1} + e_{0} e_{1}}{d_{0} e_{1}},    
\end{equation*}
we have
\begin{equation*}
\varphi^{\major} - \varphi^{\minor} = \arctan\left(\frac{e_{0} e_{1}^{2}}{d_{0} d_{1}^{2} + d_{1}e_{0}e_{1}+d_{0} e_{1}^{2}}\right) - \arctan\left(\frac{d_{0} d_{1} + e_{0}e_{1}}{d_{0}e_{1}}\right) + \pi \leq \pi.
\end{equation*}
Next, it can be calculated from \eqref{eq:trading_rate_periodic_component_cosine_major} and \eqref{eq:trading_rate_periodic_component_cosine_major_nogame} that
\begin{equation*}
\begin{aligned}
\left(\mathcal{A}\left[\tilde{v}^{\major,\period}\right]\right)^2 = \left(\frac{b\phi_0\omega}{K}\right)^2[\left(d_{0} d_{1}^{2} + d_{1}e_{0}e_{1} + d_{0} e_{1}^{2}\right)^2 + e_{0}^{2} e_{1}^{4} ],\quad \left(\mathcal{A}\left[v^{\major,*,\lambda=0}\right]\right)^2 = \left(\frac{b\omega\phi_0}{d_0}\right)^2.
\end{aligned}
\end{equation*}
Therefore,
\begin{equation*}
\begin{aligned}
&\left(\frac{d_0 K}{b\omega\phi_0}\right)^2\left[ \left(\mathcal{A}\left[v^{\major,*,\lambda=0}\right]\right)^2 - \left(A^{\major}\right)^2 \right] = K^2 - d_0^2 [\left(d_{0} d_{1}^{2} + d_{1}e_{0}e_{1} + d_{0} e_{1}^{2}\right)^2 + e_{0}^{2} e_{1}^{4} ]\\
&= \left[ d_{0}^{2} d_{1}^{2} + 2 d_{0}d_{1}e_{0}e_{1} + d_{0}^{2} e_{1}^{2} + e_{0}^{2} e_{1}^{2}\right]^2 - \left(d_{0}^{2} d_{1}^{2} + d_{0} d_{1}e_{0}e_{1} + d_{0}^2 e_{1}^{2}\right)^2 - d_{0}^{2}e_{0}^{2} e_{1}^{4}\\
&= 2\left(d_{0}^{2} d_{1}^{2} + d_{0}d_{1}e_{0}e_{1}+ d_{0}^{2} e_{1}^{2}\right)\left(d_{0}d_{1}e_{0}e_{1} +  e_{0}^{2} e_{1}^{2}\right) + \left(d_{0}d_{1}e_{0}e_{1} +  e_{0}^{2} e_{1}^{2} \right)^2 - d_{0}^{2}e_{0}^{2} e_{1}^{4}\\
&\geq 2 d_{0}^{2} e_{1}^{2}\cdot e_{0}^{2} e_{1}^{2} + 0 - d_{0}^{2}e_{0}^{2} e_{1}^{4} = d_{0}^{2}e_{0}^{2} e_{1}^{4} \geq 0.
\end{aligned}
\end{equation*}
Hence, $\mathcal{A}\left[\tilde{v}^{\major,\period}\right] \leq \mathcal{A}\left[v^{\major,*,\lambda=0}\right] $.

\subsection{Proof of Proposition~\ref{prop:periodic_aggrate_price}} \label{proof:periodic_aggrate_price}
By straightforward calculations using \eqref{eq:periodic_component_cosine} and \eqref{eq:major_rate_lambda_0}, we have
\begin{equation}
\begin{aligned}
&S^{\period} := \lambda_0 \tilde{Q}^{\major,\period}_{t} + \lambda Q^{\minor,\period}_{t} \\
&=  \frac{b\phi_0}{K}[\left(\lambda_0 (d_{0} d_{1}^{2} + d_{1}e_{0}e_{1}+d_{0} e_{1}^{2})  - \lambda d_{0} e_{0} e_{1} \right) \sin\left(\omega t\right) + \left(- \lambda_0 e_{0} e_{1}^{2}  + \lambda e_{0}(d_{0} d_{1} + e_{0}e_{1})  \right) \cos\left(\omega t\right)] \\
&= \frac{2b\phi_0}{K\omega} [\left(e_0 (d_{0} d_{1}^{2} + d_{1}e_{0}e_{1}+d_{0} e_{1}^{2})  - e_1 d_{0} e_{0} e_{1} \right) \sin\left(\omega t\right) + \left(- e_0 e_{0} e_{1}^{2}  + e_1 e_{0}(d_{0} d_{1} + e_{0}e_{1})  \right) \cos\left(\omega t\right)]\\
&= \frac{2b\phi_0}{K\omega} d_{1}e_{0} [\left(d_{0} d_{1} + e_{0}e_{1}\right) \sin\left(\omega t\right) + d_{0} e_{1} \cos\left(\omega t\right)],
\end{aligned}    
\end{equation}
and
\begin{equation}
\begin{aligned}
S^{\period,\nogame} &:= \lambda_0 \tilde{Q}^{\major,\period,\lambda=0}_{t} = \lambda_0 b \frac{\phi_0}{d_0} \sin\left(\omega t\right) = \frac{2b\phi_0}{\omega} \frac{e_0}{\phi_0}.
\end{aligned}    
\end{equation}
Then
\begin{equation*}
\begin{aligned}
\left(\mathcal{A}\left[S^{\period}\right]\right)^2 
&= \left(\frac{2b\phi_0}{\omega}\right)^2 \frac{d_{1}^{2} e_{0}^{2} \left[\left(d_{0} d_{1} + e_{0}e_{1}\right)^2 + \left( d_{0} e_{1} \right)^2 \right]}{K^2},
\end{aligned}
\end{equation*}    
and
\begin{equation*}
\begin{aligned}
& \left(A\left[S^{\period,\nogame}\right]\right)^2 = \left(\frac{2b\phi_0}{\omega}\right)^2 \frac{d_{1}^{2} e_{0}^{2} \left(d_{0} d_{1} \right)^2}{\left(d_{0}^{2} d_{1}^{2}\right)^2} = 
\left(\frac{2b\phi_0}{\omega}\right)^2 \frac{ e_{0}^{2} }{d_{0}^{2} }.
\end{aligned}
\end{equation*} 
Hence,
\begin{equation*}
\begin{aligned}
& \left(A\left[S^{\period,\nogame}\right]\right)^2 - \left(A\left[S^{\period}\right]\right)^2   \\
&= \left(\frac{2b\phi_0}{\omega d_0 K}\right)^2\left[ e_0^2\left(d_{0}^{2} d_{1}^{2} + 2 d_{0}d_{1}e_{0}e_{1} + d_{0}^{2} e_{1}^{2} + e_{0}^{2} e_{1}^{2}\right)^{2} - d_{0}^{2} d_{1}^{2} e_{0}^{2} \left[\left(d_{0} d_{1} + e_{0}e_{1}\right)^2 + \left( d_{0} e_{1} \right)^2 \right]\right]\\
&= \left(\frac{2b\phi_0}{\omega d_0 K}\right)^2  e_{0}^{2} \left[\left(d_{0} d_{1} + e_{0}e_{1}\right)^2 + \left( d_{0} e_{1} \right)^2 \right] \left[ 2 d_{0}d_{1}e_{0}e_{1} + d_{0}^{2} e_{1}^{2} + e_{0}^{2} e_{1}^{2}\right]\geq 0,
\end{aligned}
\end{equation*}
which shows that $A\left[S^{\period}\right] \leq A\left[S^{\period,\nogame}\right] $.

\section{Conclusion}

To explore the causes and consequences of periodic trading activities, this paper develops a novel mean-field liquidation game involving a major trader and a continuum of minor traders. The major trader trades against a periodic and deterministic targeting strategy, while the minor traders maximize profit subject to inventory risk.

Theoretically, our study significantly extends the existing results of MFGs to a novel setting with both major and minor players, interactions through actions, and terminal state constraints. We adopt a probabilistic approach to establish the existence and uniqueness of an open-loop Nash equilibrium, and demonstrate an $\mathcal{O}(1/\sqrt{N})$ approximation rate for the mean-field solution to a major-minor game with $N$ minor players, building on approaches from extended MFGs \citep{fu_mean-field_2020} and MFGs with a major player \citep{carmona_probabilistic_2018}.

Furthermore, we contribute to the understanding of driving factors behind periodic trading activities by modeling the interactions between market participants with different trading purposes. We show that minor traders exhibit front-running behaviors in both the periodic and trend components of their strategies, reducing the major trader's profit and the intensity of her periodic trading behavior. These strategic interactions also reduce periodicity in both the aggregate trading volume and the asset price in the market, thereby serving as a stabilizing force.

\bibliography{ref}

\end{document}